\newcommand{\namedref}[2]{\hyperref[#2]{#1~\ref*{#2}}}
\newcommand{\sectionref}[1]{\namedref{Section}{#1}}
\newcommand{\figureref}[1]{\namedref{Figure}{#1}}
\newcommand{\equationref}[1]{\hyperref[#1]{Eq~(\ref*{#1})}}
\newcommand{\theoremref}[1]{\hyperref[#1]{Theorem~\ref*{#1}}}
\newcommand{\lemmaref}[1]{\hyperref[#1]{Lemma~\ref*{#1}}}
\newcommand{\remarkref}[1]{\hyperref[#1]{Remark~\ref*{#1}}}
\newcommand{\definitionref}[1]{\hyperref[#1]{Definition~\ref*{#1}}}
\newcommand{\appendixref}[1]{\hyperref[#1]{Appendix~\ref*{#1}}}
\newtheorem{theorem}{Theorem}
\newtheorem{lemma}{Lemma}
\newtheorem{corollary}{Corollary}
\newtheorem{definition}{Definition}
\theoremstyle{remark}
\newtheorem{remark}{Remark}
\renewcommand{\vec}[1]{\mathbf{#1}}
\DeclareMathOperator*{\ecc}{ecc}
\DeclareMathOperator*{\dist}{dist}
\DeclareMathOperator*{\cent}{center}
\DeclareMathOperator*{\ex}{ex}
\newcommand{\emptysym}{\bot}
\newenvironment{mycover}
               {\list{}{\listparindent 0pt
                        \itemindent    \listparindent
                        \leftmargin    0pt
                        \rightmargin   0pt
                        \parsep        0pt}%
                \raggedright
                \item\relax}
               {\endlist}
\begin{document}

\pagestyle{plain}

\begin{mycover}
    {\LARGE \textbf{Byzantine Approximate Agreement on Graphs}}

\bigskip
\bigskip

\medskip
\textbf{Thomas Nowak}\, $\cdot$\, \href{mailto:thomas.nowak@lri.fr}{\nolinkurl{thomas.nowak@lri.fr}} \\ Universit\'{e} Paris-Sud \& CNRS

\bigskip
\textbf{Joel Rybicki}\, $\cdot$\, \href{mailto:joel.rybicki@ist.ac.at}{\nolinkurl{joel.rybicki@ist.ac.at}} \\ Institute of Science and Technology Austria
\bigskip
\end{mycover}

\paragraph{Abstract.}
Consider a distributed system with $n$ processors out of which $f$ can be Byzantine faulty. In the approximate agreement task, each processor $i$ receives an input value $x_i$ and has to decide on an output value $y_i$ such that
    \begin{enumerate}[noitemsep]
        \item the output values are in the convex hull of the non-faulty processors' input values,
        \item the output values are within distance $d$ of each other. 
    \end{enumerate}
Classically, the values are assumed to be from an $m$-dimensional Euclidean space, where $m \ge 1$.

In this work, we study the task in a discrete setting, where input values with some structure expressible as a graph. Namely, the input values are vertices of a finite graph $G$ and the goal is to output vertices that are within distance $d$ of each other in $G$, but still remain in the graph-induced convex hull of the input values. For $d=0$, the task reduces to consensus and cannot be solved with a deterministic algorithm in an asynchronous system even with a single crash fault. For any $d \ge 1$, we show that the task is solvable in asynchronous systems when $G$ is chordal and $n > (\omega+1)f$, where $\omega$ is the clique number of~$G$. In addition, we give the first Byzantine-tolerant algorithm for a variant of lattice agreement.
For synchronous systems, we show tight resilience bounds for the exact variants of these and related tasks over a large class of combinatorial structures.


\newpage

\section{Introduction}

In a distributed system, processors often need to coordinate their actions by jointly making consistent decisions or collectively agreeing on some data.
While distributed systems can be resilient to failures, the extent to which they do so varies dramatically depending on the underlying communication and timing model, the fault model, and the level of coordination required by the task at hand. Exploring this interplay is at the core of distributed computing.

In this work, we investigate to which degree agreement can be reached in message-passing systems with Byzantine faults when (1) the set of input values has some discrete, combinatorial structure and (2) the set of output values must satisfy some structural closure property over the input values. We consider deterministic algorithms and assume a system with fully-connected point-to-point communication topology consisting of $n$ processors out of which $f$ may experience Byzantine failures, where the faulty processors may arbitrarily deviate from the protocol (e.g., crash, omit messages, or send malicious misinformation).
We consider both asynchronous and synchronous systems. In the former, the processors do not have access to a shared global clock and sent messages may take arbitrarily long (but finite) time to be delivered. In the synchronous case, computation and communication proceeds in a lock-step fashion over discrete rounds.

\subsection{Fault-tolerant distributed agreement tasks} 

Let $P$ denote the set of $n$ processors and $F \subseteq P$ some (unknown) set of faulty processors, where $|F| \le f$. Many distributed agreement problems take the following form: Each processor $i \in P$ receives some input value $x_i \in V$, where $V$ is the set of possible input values. The task is to have every non-faulty processor $i \in P \setminus F$ (irreversibly) decide on an output value $y_i \in V$ subject to some \emph{agreement} and \emph{validity} constraints. These constraints are commonly defined over the sets $X = \{ x_i : i \in P \setminus F\}$ of input  and $Y = \{ y_i : i \in P \setminus F\}$ of output values of \emph{non-faulty} processors. By choosing different constraints, one obtains different types of agreement problems. 

\subsubsection{Consensus and $k$-set agreement} Consensus is one of the most elementary problems in distributed computing~\cite{pease80reaching}: all non-faulty processors should output a single value (agreement) that was the input of some non-faulty processor (validity). A natural generalisation of consensus is the $k$-set agreement problem~\cite{chaudhuri1993more}, which is defined by the following constraints:
\begin{itemize}
    \item agreement: $|Y| \le k$ (all non-faulty processors decide on at most $k$ values),
    \item validity: $Y \subseteq X$ (each decided value was an input of some non-faulty processor).
\end{itemize}
The special case $k=1$ is the consensus problem and is known to be impossible to solve in an asynchronous setting even with $V = \{0,1\}$ under a single crash fault using deterministic algorithms~\cite{fischer85impossibility}. Analogously, $k$-set agreement cannot in general be solved in an asynchronous message-passing systems if there are $f \ge k$ crash faults~\cite{herlihy1998unifying,biely2011easy}. Note that for $k$-set agreement, it is natural to consider also other validity constraints~\cite{deprisco2001kset}.

\subsubsection{Approximate agreement} While consensus and $k$-set agreement cannot in general be solved in an asynchronous system, it is however possible to obtain \emph{approximate} agreement -- in the sense that output values are close to each other -- 
even in the presence of Byzantine faults. Formally, in the (multidimensional) approximate agreement problem, we are given $\varepsilon>0$ and the set $V = \mathbb{R}^m$ of values forms an $m$-dimensional Euclidean space for some $m \ge 1$. The task is to satisfy
\begin{itemize}
    \item agreement: $\dist(y,y') \le \varepsilon$ for any $y,y' \in Y$ (output values are within Euclidean distance~$\varepsilon$),
    \item validity: the set $Y$ is contained in the convex hull $\langle X \rangle$ of the set $X$ of nonfaulty input values.
\end{itemize}
For an arbitrary $m \ge 1$, Mendes et al.~\cite{MHVG15} showed that under Byzantine faults the problem is solvable in asynchronous systems if and only if $n > (m+2)f$ holds.

\subsubsection{Lattice agreement} Lattice agreement is another well-studied relaxation of consensus with applications in renaming problems and obtaining atomic snapshots~\cite{attiya1995atomic,attiya2001adaptive,faleiro2012generalized,zheng2018lattice}. In this problem, the set $V$ of values forms a \emph{semilattice } $\mathbb{L} = (V, \oplus)$, i.e., an idempotent commutative semigroup. The $\oplus$ operator defines a partial order $\le$ over $V$ defined as $u \le v \iff u \oplus v = v$. 
The task 
is to decide on values that lie on a non-trivial chain, i.e., values that are comparable under $\le$:
\begin{itemize}
    \item agreement: $y \le y'$ or $y' \le y$ for any $y,y' \in Y$, 
    \item validity: for any $y \in Y$ there exists some $x \in X$ such that $x \le y$ and $y \le \bigoplus X$.
\end{itemize}
Note that under crash faults the validity condition is usually given as $x_i \le y_i \le \bigoplus \{ x_j : j \in P \}$ for $i \in P \setminus F$, which is less suitable in the context of Byzantine faults since otherwise output values could exit the convex hull defined by the correct processes' input values.

\subsection{Structured agreement problems}

Unlike the $k$-set agreement problem, the approximate and lattice agreement problems impose additional structure on the set $V$ of values. In the former, the values form a (continuous) $m$-dimensional Euclidean space, whereas in the latter there is algebraic structure. Furthermore, the validity conditions require that the output respects some \emph{closure property} on the input values.
In approximate agreement, the closure is given by the convex hull operator in Euclidean spaces, whereas in lattice agreement, the output must reside in the minimal superset of $X$ closed under~$\oplus$. 

Such closure systems have been studied under the notion of \emph{abstract convexity spaces} and have a rich theory~\cite{kay1971axiomatic,Edelman1985,duchet1987convexity,Edelman1988dimension,van1993theory}. A convexity space on $V$ is a collection $\mathcal{C}$ of subsets of $V$ that satisfies
\begin{enumerate}
    \item $\emptyset, V \in \mathcal{C}$,
    \item $A,B \in \mathcal{C}$ implies $A \cap B \in \mathcal{C}$.
\end{enumerate}
As the name suggests, the sets in $\mathcal{C}$ are called \emph{convex} and every convexity space has the natural closure operator, which maps any set $A \in V$ to a minimal convex superset $A \subseteq \langle A \rangle \in \mathcal{C}$ called the convex hull of $A$. \emph{Convex geometries}~\cite{Edelman1985} are an important class of convexity spaces, which satisfy the Minkowski-Krein-Milman property: the closure $\langle A \rangle$ of any set $A \subseteq V$ is the closure of its \emph{extreme points}, where $a \in A$ is an extreme point of $A$ if $a \notin \langle A \setminus a \rangle$. Convex geometries have been studied extensively in a wide variety of \emph{combinatorial} structures, such as graphs and hypergraphs~\cite{Jamison1984Helly,Farber1986Convexity,Farber1987Local,Duchet1988Convex,Pelayo2013,nielsen2009steiner,dourado2013caratheodory}, and partially ordered sets~\cite{duchet1987convexity,Edelman1985,poncet2014semilattice}.

There has been extensive research on developing theory of convexity over \emph{combinatorial} structures, such as graphs and hypergraphs~\cite{Jamison1984Helly,Farber1986Convexity,Farber1987Local,Duchet1988Convex,Pelayo2013,nielsen2009steiner,dourado2013caratheodory}, partially ordered sets~\cite{duchet1987convexity,Edelman1985,poncet2014semilattice}, and so on. Much of the research has focused on identifying analogues to classical convexity invariants, such as Helly, Carath\'eodory, and Radon numbers, in various abstract convexity spaces~\cite{kay1971axiomatic,Jamison1984Helly,Duchet1988Convex,eckhoff1993helly,barbosa2012caratheodory,dourado2013caratheodory}. Convex geometries also have deep connections with matroid and antimatroid theory~\cite{dietrich1989matroids,korte2012greedoids}: convex geometries are duals of antimatroids, and a special class of greedoids, which provide a structural framework for characterising greedy algorithms~\cite{korte1984greedoids,korte2012greedoids}, are convex geometries~\cite{Edelman1985}. Lov\'asz and Saks~\cite{lovasz1993communication} used theory of convex geometries to analyze a broad class of two-party communication complexity problems.

\subsection{Approximate agreement on graphs}

As our main example of an agreement problem with discrete, combinatorial structure, we focus on a problem where the set $V$ of values has relational structure in the form of a connected graph $G=(V,E)$. In the \emph{monophonic approximate agreement problem on $G$} the task is to output a set of vertices that satisfy 
\begin{itemize}
    \item agreement: the set $Y$ of output has diameter at most $d$ for a given $d \ge 1$,
    \item validity: each value $y \in Y$ lies on a chordless\footnote{A path is {\em chordless\/} (also known as \emph{minimal}) if there are no edges between non-consecutive vertices.} path between some input vertices $x, x' \in X$.
\end{itemize}
The above problem is a natural generalisation of approximate agreement onto graphs. 
It is easy to see that the discrete version of one-dimensional approximate agreement is just approximate agreement on a path (\figureref{fig:examples}a). If $G$ is a tree or a block graph\footnote{A graph is a \emph{block graph} if every $2$-connected component is a clique.}, then the task is to output vertices that lie on the minimal vertex set connecting all input vertices (\figureref{fig:examples}b--c). 

In the parlance of abstract convexity theory~\cite{Jamison1984Helly,Farber1986Convexity,Farber1987Local,Duchet1988Convex}, the validity condition requires that the output lies in the \emph{monophonic}, or \emph{minimal path},  or \emph{chordless path} convex hull of the input vertices. Another reasonable validity constraint would be to require the output values to lie on the \emph{shortest} paths between input vertices, i.e., in the \emph{geodesic} convex hull. We consider both variants and refer to the latter version of the problem as \emph{geodesic} approximate agreement on $G$.

\begin{figure}
    \begin{center}
    \includegraphics[page=1,scale=0.85]{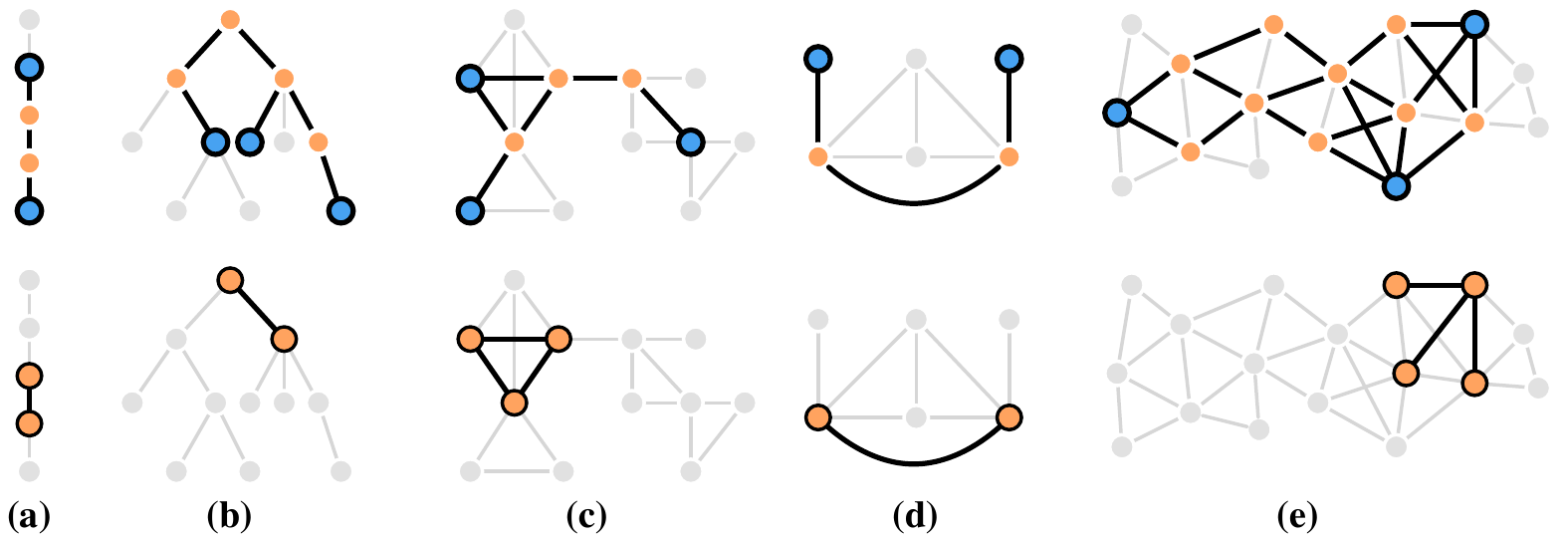}
    \end{center}
    \caption{Examples of geodesic and monophonic agreement on graphs. In the top row, the blue and orange vertices form a convex hull of the blue vertices for each graph under (a)--(d) geodesic and (e) monophonic convexities. The thick edges lie in the shortest (geodesic) or chordless (monophonic) paths between the blue vertices. The bottom row shows possible feasible outputs for the respective approximate agreement problems with $d=1$, i.e., the highlighted vertices form a clique (agreement) and are contained in the respective convex hull of the input values (validity). \label{fig:examples}}
\end{figure}

\subsection{Contributions}

In this work, we introduce the \emph{abstract approximate agreement problem on a convexity space~$\mathcal{C}$} satisfying:
\begin{itemize}
    \item agreement: $Y$ is a free set, that is, $\langle Y \rangle = \ex Y$, where $\ex Y$ is the extreme points of $Y$.
    \item validity: $Y \subseteq \langle X \rangle$.
\end{itemize}
While our primary focus lies in the graphical version of approximate agreement, we believe the abstract problem is also interesting in itself.
Indeed, it conveniently turns out that the problem coincides with various natural agreement problems: In graphs, the monophonic and geodesic approximate agreement on graphs problem given above boils down to solving approximate agreement on the chordless path or geodesic convexities of $G$. Moreover, lattice agreement on $\mathbb{L}$ is equivalent to solving approximate agreement on the \emph{algebraic convexity space} of the semilattice (sets closed under $\oplus$).  
Our key results can be summarised as follows:
\begin{enumerate}
    \item {\bf Byzantine approximate agreement on chordal graphs.} We give algorithms for approximate agreement on trees and chordal graphs. The algorithms tolerate $f < n/(\omega+1)$ Byzantine faults and terminate in $O(\log N)$ \emph{asynchronous} rounds, where $\omega$ is the clique number and $N$ is the number of vertices in the value graph $G$. In trees, we achieve optimal~resilience. 

    \item {\bf Byzantine lattice agreement on cycle-free semilattices.} As another example, we give an asynchronous lattice agreement algorithm on cycle-free lattices that tolerates up to $f < n/(\omega+1)$ Byzantine faults, where $\omega$ is the height of the semilattice. To our knowledge, this is the first algorithm that solves any variant of semilattice agreement under Byzantine~faults.
        
    \item {\bf General impossibility results for asynchronous systems.} We give impossibility results for approximate agreement on arbitrary convex geometries parameterised by two combinatorial convexity invariants: the Carath\'eodory number $c$ and the Helly number $\omega$. As corollaries, we obtain resilience lower bounds for approximate agreement problems in asynchronous systems.
        \item {\bf Optimal synchronous algorithms for convex consensus.} We consider the \emph{exact} variant of the abstract approximate agreement problem, where the agreement constraint is replaced by $|Y|=1$. While the problem cannot be solved in asynchronous systems, we show that it can be solved on any convex geometry $\mathcal{C}$ in $\Theta(f)$ synchronous rounds if and only if $n > \omega f$ holds, where $\omega$ is the Helly number of $\mathcal{C}$. Moreover, the upper bound holds for \emph{any} convexity~space.
\end{enumerate}
Our work can be seen as an extension of the Mendes--Herlihy approximate agreement and Vaidya--Garg multidimensional consensus frameworks~\cite{Mendes2013Multidimensional,vaidya13byzantinevector,MHVG15} onto general convexity spaces. However, while these operate in continuous $m$-dimensional Euclidean spaces, our analysis relies on combinatorial theory of abstract convexity, where the input and output values have discrete, combinatorial structure. In particular, the discrete nature of the convexity space poses new challenges, as unlike in the continuous setting, non-trivial convex sets do not necessarily contain non-extreme points to choose from to facilitate convergence. 

Multidimensional agreement problems in Euclidean spaces have applications ranging from, e.g., robot convergence tasks to distributed voting and convex optimisation~\cite{MHVG15}. Our work extends the scope of these techniques to discrete convexity spaces, which can be used to describe various natural combinatorial systems. 
Finally, unlike prior work, our algorithms do not assume that processors can perform computations or send messages involving arbitrary precision real values, as in the discrete case a single value can be encoded using $O(\log |V|)$ bits.

\subsection{Related work}

The seminal result of Fischer et al.~\cite{fischer85impossibility} showed that \emph{exact} consensus cannot be reached in asynchronous systems in the presence of crash faults. 
Dolev et al.~\cite{Dolev1986Reaching} showed that it is however possible to reach \emph{approximate agreement} in an asynchronous system even with arbitrary faulty behavior when the values reside on the continuous real line. 
Subsequently, the one-dimensional approximate agreement problem has been extensively studied~\cite{Dolev1986Reaching,fekete1990asymptotically,fekete1994asynchronous,Abraham2005optimal}. Fekete~\cite{fekete1994asynchronous} showed that any algorithm reducing the distance of values from $d$ to $\varepsilon$ requires $\Omega(\log (\varepsilon / d))$ asynchronous rounds when $f \in \Theta(n)$; in the discrete setting this yields the bound $\Omega(\log N)$ for paths of length $N$. Recently, Mendes et al.~\cite{MHVG15} introduced the natural generalisation of \emph{multidimensional} approximate agreement and  showed that the $m$-dimensional problem is solvable in an asynchronous system with Byzantine faults if and only if $n > (m+2)f$ holds for any given $m \ge 1$. 

The \emph{lattice agreement problem} was originally introduced in the context of wait-free algorithms in shared memory models~\cite{attiya1995atomic,attiya2001adaptive}. The problem has recently resurfaced in the context of asynchronous message-passing models with crash faults~\cite{faleiro2012generalized,zheng2018lattice}. These papers consider the problem when the validity condition is given as $x_i \le y_i \le \bigoplus \{ x_j : j \in P \}$, i.e., the output of a processor must satisfy $x_i \le y_i$ and the feasible area is determined also by the inputs of faulty processors. However, it is not difficult to see that under Byzantine faults, this validity condition is not reasonable, as the problem cannot be solved even with one faulty processor.

Another class of structured agreement problems in the wait-free asynchronous setting are \emph{loop agreement} tasks~\cite{herlihy2003classification}, which generalise $k$-set agreement and approximate agreement (e.g., $(3,2)$-set agreement and one-dimensional approximate agreement). In loop agreement, the set of inputs consists of three distinct vertices on a loop in a 2-dimensional simplicial complex and the outputs are vertices of the complex with certain constraints, whereas \emph{rendezvous tasks} are a generalisation of loop agreement to higher dimensions~\cite{liu2009rendezvous}. These tasks are part of large body of work exploring the deep connection of asynchronous computability and combinatorial topology, which has successfully been used to characterise the \emph{solvability} of various distributed tasks~\cite{herlihy2013book}.
Gafni and Kuznetsov's $P$-reconciliation task~\cite{gafni2007object} achieves geodesic approximate agreement on a graph of system configurations.

Finally, we note that distributed agreement tasks play a key role in many fault-tolerant clock synchronisation algorithms~\cite{welch1988new,lenzen19pulse,lenzen2018firing}. Byzantine-tolerant clock synchronisation can be solved using one-dimensional approximate agreement~\cite{welch1988new}, whereas in the \emph{self-stabilising} setting both exact digital clock synchronisation~\cite{lenzen2018firing} and pulse synchronisation tasks reduce to consensus~\cite{lenzen19pulse}. However, while the latter problem has been extensively studied~\cite{dolev04clock-synchronization,dolev14fatal,lenzen19pulse}, non-trivial lower bounds are still lacking~\cite{lenzen19pulse}. Given that clock synchronisation closely relates to agreement on cyclic structures, investigating agreement tasks on different structures may yield insight into the complexity of fault-tolerant (approximate) clock synchronisation. Indeed, we show that agreement on graphs \emph{without} long induced cycles is considerably easier than consensus. 

\section{Preliminaries}\label{sec:preliminaries}

We start with some basic preliminaries needed to describe the main ideas and results of the paper.

\subsection{Abstract convexity spaces}

Let $V$ be a finite set. The collection $\mathcal{C} \subseteq 2^V$ is a \emph{convexity space on $V$} if 
     (1) $\emptyset, V \in \mathcal{C}$ holds, and (2) 
    $A,B \in \mathcal{C}$ implies that $A \cap B \in \mathcal{C}$.
A set $K \in \mathcal{C}$ is said to be \emph{convex}. For any $A \subseteq V$, the \emph{convex hull} of $A$ is the minimal convex set $\langle A \rangle \in \mathcal{C}$ such that~$A \subseteq \langle A \rangle$. Thus, $\langle \cdot \rangle$ is a closure operator on $V$. 
For any $A \subseteq V$ and $a \in A$, $a$ is called an \emph{extreme point} of $A$ if $a \notin \langle A \setminus a \rangle$. For a convex set $K \in \mathcal{C}$, we use $\ex K$ to denote the extreme points of $K$. The convexity space $\mathcal{C}$ is a \emph{convex geometry} if every $K \in \mathcal{C}$ satisfies $K = \langle \ex K \rangle$.
A convex set $K$ is \emph{free} if $K = \ex{K}$. 
Finally, a nonempty set $A \subseteq V$ is \emph{irredundant} if
$
 \partial A \neq \emptyset$ where $\partial A = \langle A \rangle \setminus \bigcup_{a \in A} \langle A \setminus a \rangle$.
The following theorem characterises convex geometries:
\begin{theorem}[\cite{Edelman1985}]\label{thm:cg-equivalences}
    Let $\mathcal{C}$ be a convexity space on $V$. The following conditions are equivalent:
    \begin{enumerate}
    	\item $\mathcal{C}$ is a convex geometry.
        \item For every $K \in \mathcal{C}$, $K = \langle \ex K \rangle$ (Minkowski-Krein-Milman property).
        \item For every $K \in \mathcal{C} \setminus \{ V \}$, there exists an element $u \in V \setminus K $ such that $K \cup \{u \} \in \mathcal{C}$.
    \end{enumerate}
\end{theorem}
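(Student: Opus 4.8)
The plan is to first note that conditions~(1) and~(2) coincide by the very definition of a convex geometry adopted above, so the mathematical content of the theorem is the equivalence (2)$\iff$(3). Throughout I would use the elementary observation that for a convex set $K$ and $a \in K$, the point $a$ is extreme in $K$ if and only if $K \setminus \{a\}$ is convex: since $K$ is convex we have $K \setminus \{a\} \subseteq \langle K \setminus a\rangle \subseteq K$, so $\langle K \setminus a\rangle$ equals either $K \setminus \{a\}$ or $K$, the latter occurring precisely when $a \in \langle K \setminus a\rangle$. I would prove (2)$\Rightarrow$(3) directly, and obtain (3)$\Rightarrow$(2) by routing through the \emph{anti-exchange property}: for every convex $C$ and every pair of distinct $a, b \in V \setminus C$, if $b \in \langle C \cup \{a\}\rangle$ then $a \notin \langle C \cup \{b\}\rangle$.

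For (2)$\Rightarrow$(3), given a convex $K \neq V$ I would pick, among all convex sets strictly containing $K$, one of minimum cardinality, say $M$. By~(2) we have $M = \langle \ex M\rangle$; if all of $\ex M$ lay inside $K$ this would give $M = \langle \ex M\rangle \subseteq \langle K\rangle = K$, a contradiction, so some $u \in \ex M$ lies outside $K$. Then $M \setminus \{u\}$ is convex and satisfies $K \subseteq M \setminus \{u\} \subsetneq M$, so minimality of $M$ forces $K = M \setminus \{u\}$, i.e.\ $K \cup \{u\} \in \mathcal{C}$. For the converse direction I would first derive anti-exchange from~(3) by a maximality argument: if some triple $(C,a,b)$ violated anti-exchange, take one with $|C|$ as large as possible; since $a \notin C$ we have $C \neq V$, so~(3) provides $u \notin C$ with $C \cup \{u\}$ convex. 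If $u \in \{a,b\}$ the relevant hull condition collapses to the equality $\langle C \cup \{u\}\rangle = C \cup \{u\}$ and forces $a = b$; otherwise $(C \cup \{u\}, a, b)$ is a strictly larger violating triple by monotonicity of the hull operator, contradicting maximality.

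It then remains to deduce the Minkowski--Krein--Milman property~(2) from anti-exchange. Supposing $K$ is convex with $\langle \ex K\rangle \subsetneq K$, I would take a minimal $S \subseteq K \setminus \langle \ex K\rangle$ with $\langle \langle \ex K\rangle \cup S\rangle = K$, fix $a \in S$, and set $B = \langle \langle \ex K\rangle \cup (S \setminus \{a\})\rangle$; minimality of $S$ yields $a \notin B$ while $\langle B \cup \{a\}\rangle = K$, and since $a$ is not extreme in $K$ we get $\langle K \setminus \{a\}\rangle = K$. Writing $K \setminus \{a\} = B \cup \{b_1, \dots, b_r\}$ and forming the ascending chain $B_j = \langle B \cup \{b_1,\dots,b_j\}\rangle$ with $B_0 = B \not\ni a$ and $B_r = K \ni a$, I would locate the first index $j$ at which $a$ enters. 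For that index the triple $(B_{j-1}, a, b_j)$ satisfies the hypothesis of anti-exchange, namely $a \in \langle B_{j-1} \cup \{b_j\}\rangle$, while its conclusion fails because $\langle B_{j-1} \cup \{a\}\rangle \supseteq \langle B \cup \{a\}\rangle = K \ni b_j$ --- a contradiction, so $\langle \ex K\rangle = K$.

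The main obstacle is the implication (3)$\Rightarrow$(2). The tempting inductive strategy --- remove an extreme point of $K$, recurse on the smaller convex set $K \setminus \{a\}$, and reassemble --- breaks down because extreme points are not stable under deletion: discarding one extreme point can promote formerly interior points to extreme ones, so $\ex(K \setminus \{a\})$ bears no simple relation to $\ex K$. The anti-exchange detour circumvents this, but needs care, since a single application of anti-exchange only governs one-point extensions; the argument must propagate the property along the chain $B_0 \subseteq \dots \subseteq B_r$ and pinpoint exactly the step at which $a$ is absorbed into the hull.
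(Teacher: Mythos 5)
Your argument is correct, and it is worth noting up front that the paper itself offers no proof of this statement: it is quoted verbatim from Edelman--Jamison \cite{Edelman1985}, so there is nothing internal to compare against. What you have written is essentially the classical proof from that literature, routed through the anti-exchange property, and it holds together. The reduction of the theorem to (2)$\iff$(3) via the paper's definition of convex geometry is right; the minimum-cardinality argument for (2)$\Rightarrow$(3) is clean and uses finiteness of $V$ exactly where needed; the maximal-counterexample derivation of anti-exchange from (3) is valid (the case $u \in \{a,b\}$ does collapse to $a=b$, and monotonicity handles the other case); and the chain argument $B_0 \subseteq \dots \subseteq B_r$ correctly pins down the step where $a$ is absorbed. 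Two small points you leave implicit but which do follow immediately: at the critical index $j$ you need $b_j \notin B_{j-1}$ to invoke anti-exchange as you stated it, and this holds because otherwise $B_j = B_{j-1}$ would not newly contain $a$; and the existence of a minimal $S$ is witnessed by $S = K \setminus \langle \ex K\rangle$ together with finiteness. Your closing remark about why the naive peel-and-recurse induction fails (extreme points are not stable under deletion) is a genuine and well-placed caution --- it is precisely the trap the anti-exchange detour avoids.
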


\subsubsection{Carath\'eodory and Helly numbers} 
The \emph{Carath\'eodory number} of a convexity space $\mathcal{C}$ on $V$ is the smallest integer $c$ such that for any $U \subseteq V$ and any $u \in \langle U \rangle$, there is a set $S \subseteq U$ such that $|S| \le c$ and $u \in \langle S \rangle$. The Carath\'eodory number of a convexity space equals the maximum size of an irredundant set in $\mathcal{C}$. A collection $\mathcal{C}$ of sets is $k$-intersecting if every $\mathcal{B} \subseteq \mathcal{C}$ with $|\mathcal{B}| \le k$ has a nonempty intersection. The \emph{Helly number} of a convexity space $\mathcal{C}$ is the smallest integer~$\omega$ such that any finite $\omega$-intersecting $\mathcal{A} \subseteq \mathcal{C}$ has a nonempty intersection. If $\mathcal{C}$ is a convex geometry, the Helly number equals the maximum cardinality of a free set in $\mathcal{C}$~\cite{Edelman1985}.

\subsubsection{Examples of convex geometries} In this work, we focus on the following convexity spaces:
\begin{itemize}
    \item Let $G = (V,E)$ be a graph. A set $U \subseteq V$ is convex if all the vertices on all minimal, i.e., chordless, paths connecting any $u,v \in U$ are contained in $U$. The Helly number of this convexity space equals the size of the maximum clique in $G$~\cite{Jamison1984Helly,Duchet1988Convex} and the Carath\'eodory number is at most two~\cite{Duchet1988Convex}. Moreover, free sets coincide with cliques. The convexity space is a convex geometry iff $G$ is chordal~\cite{Farber1986Convexity}. Indeed, if $G$ is a cycle of length at least four, then it is easy to check that $V$ is convex, but has no extreme points. 

    \item Let $\mathbb{L} = (V, \oplus)$ be a semilattice and $\mathcal{C} \subseteq 2^V$ be a collection of subsets closed under $\oplus$. The collection $\mathcal{C}$ is a convex geometry, where every $K \in \mathcal{C}$ is a subsemilattice $(K, \oplus)$ of $\mathbb{L}$. A set $K$ is free if and only if it is a chain~\cite{poncet2014semilattice}. Thus, the Helly number of a semilattice equals its height. Moreover, the Carath\'eodory number equals the breadth of the semilattice.
\end{itemize}

\subsection{Asynchronous rounds}

When operating in the asynchronous model, we describe and analyse the algorithms in the \emph{asynchronous round} model. In this model, each processor has a local round counter and labels all of its messages with a round number. Each correct node initialises its round counter to 0 at the start of the execution and increases its local round counter from $t$ to $t+1$ only when it has received at least $n-f$ messages belonging to round $t$ (since up to $f$ faulty nodes may omit their messages). In each asynchronous round $t \ge 0$, a non-faulty processor $i \in P \setminus F$
\begin{enumerate}[noitemsep]
  \item sends a value to each processor $j \in P$,
  \item receives a value $M_{ij}(t)$ from each processor $j \in P$,
  \item updates local state and proceeds to round $t+1$.
\end{enumerate}
The received message $M_{ij}(t)$ may be empty, denoted by $\emptysym$, to indicate that no message arrived from processor $j$ (e.g., due to a crash or a delay). We use the set 
\[
 P_i(t) = \{ j \in P : M_{ij}(t) \neq \emptysym \}
\]
to denote the processors from which $i$ received a nonempty message on round $t$. 

Assuming $n>3f$ and with the help of reliable broadcast, the witness technique~\cite{Abraham2005optimal,MHVG15}, and attaching round numbers to all messages,
the Byzantine asynchronous round model can guarantee the following for each $i,j \in P \setminus F$:
\begin{enumerate}[noitemsep]
    \item $|P_i(t)| \ge n-f$,
    \item $|P_i(t) \cap P_j(t)| \ge n-f$,
    \item if $M_{ik}(t) = x \neq \emptysym$ for $k \in F$, then $M_{jk}(t) \in \{ x, \emptysym \}$.
\end{enumerate}
That is, (1) every correct processor receives at least $n-f$ nonempty values (out of which $f$ may be from faulty processors), (2) any two correct processors receive at least $n-f$ common values (possibly $f$ of which may be from faulty processors), and (3) if some correct processor receives a nonempty value $x$ from a faulty processor, then all other correct processors receive the same value or no value. 
The Byzantine asynchronous round model can be simulated in the asynchronous model so that a non-faulty processor broadcasts $O(n \log n)$ additional bits per round~\cite{Abraham2005optimal,MHVG15}.

\subsection{Graphs}
Let $G = (V,E)$ be a finite undirected connected graph, where $V=V(G)$ denotes the set of vertices and $E=E(G)$ the set of edges. We assume all graphs are simple (no parallel edges) and loopless (no self-loops). For any $U \subseteq V$, we use $G[U] = (U, F)$, where $F = \left\{ \left\{u,v \right \} \in E : u,v \in U \right \}$, to denote the subgraph of $G$ induced by the vertices in $U$. An $\ell$-length path $u \leadsto v$ from a vertex $u$ to vertex $v$ is a non-repeating sequence $(u=v_0, \ldots, v_\ell = v)$ of vertices such that $\{v_i,v_{i+1}\} \in E$. An $\ell$-cycle is an $(\ell-1)$-path from $u$ to $v$ with $\{u,v\} \in E$. A path $v_0, \ldots, v_\ell$ is \emph{chordless} (or minimal) if there does not exist any edge $\{ v_{i}, v_{j} \} \in E$ for $j > i+1$. 

For vertices $u,v \in V$, we denote the length of the shortest path between $u$ and $v$ as $d(u,v)$. The eccentricity of vertex $v$ is $\ecc(v) = \max \{ d(v,u) : u \in V(G) \}$. For a set $U \subseteq V$, we define its diameter $D(U) = \max \{ d(u,v) : u,v \in U \}$. The diameter of graph $G$ is denoted by $D(G) = D(V)$. The \emph{radius} of a graph $G$ is $R(G) = \min \{ \ecc_G(v) : v \in V(G) \}$ and the \emph{center} is $\cent(G) = \{ u \in V(G) : \ecc_G(u) = R(G) \} \subseteq V(G)$. For a connected set $U \subseteq V$, we use the short-hands $R(U) = R(G[U])$ and $\cent(U) = \cent(G[U])$.

A graph $G$ is a \emph{tree} if it contains no cycles and \emph{chordal} if contains no $\ell$-cycle with $\ell \ge 4$ as an induced subgraph. A graph is \emph{Ptolemaic} if it is chordal and distance-hereditary (any connected induced subgraph preserves distances). The clique number $\omega(G)$ is the size of the largest clique in $G$. 
A vertex is \emph{simplicial} in $U \subseteq V$ if its neighbourhood $\mathcal{N}(v) \cap U$ in $U$ is a clique. A perfect elimination ordering $\preceq$ of $G = (V,E)$ is a total order on $V$ such that any $u \in V$ is simplicial in $\{ v : u \preceq v \}$. A graph has a perfect elimination ordering iff it is chordal.

\subsubsection{Chordal graphs} Chordal graphs (also known as triangulated, rigid or decomposable graphs) is an important and well-studied class of graphs. From a structural point of view, they have many equivalent characterisations: they are graphs that have no induced cycles greater than three, graphs for which perfect elimination orderings exist, graphs in which every minimal vertex separator is a clique, and others~\cite{dirac1961rigid,rose1970triangulated,gavril1972algorithms,rose1976algorithmic,Farber1986Convexity}.

Due to their ubiquitous nature and convenient structural properties, the algorithmic aspects of chordal graphs have received much attention in the past decades. For example, chordal graphs have applications in a variety of contexts including combinatorial and semidefinite optimisation~\cite{vandenberghe2015chordal} and probabilistic graphical models~\cite{lauritzen1996graphical}. Indeed, many NP-hard problems, such as finding maximum cliques or optimal vertex colourings, often admit simple polynomial time solutions in chordal graphs~\cite{gavril1972algorithms}. In the distributed setting, it is possible to find good approximations to minimum vertex colourings and maximum independent sets in chordal graphs~\cite{Konrad2018Chordal}.

\subsection{Lattices}
A (join) semilattice is an idempotent semigroup $\mathbb{L} = (V, \oplus)$,
where $V$ is a finite set and $\oplus : V \times V \to V$ is called the
\emph{join} operator. A semilattice has a natural partial order defined as $u
\le v \iff u \oplus v = v$, where $u \oplus v$ is the least upper bound (i.e.,
a join) of $\{u,v\}$ in the partial order. We write $u < v$ if $u \le v$ and $u
\neq v$. For any set $U = \{u_0, \ldots, u_\ell \} \subseteq V$, the least
upper bound $\bigoplus U$ is known as the join of $U$. If $u_0 \le \ldots \le
u_\ell$ holds, then $U$ is said to be a chain from $u_0$ to $u_\ell$. 
If $\{u,v\}$ is a chain, then $u$ and $v$ are
said to be comparable. 
The height $\omega(\mathbb{L})$ of a semilattice is the maximum cardinality of any chain $U \subseteq V$.
The breadth of a semilattice is the smallest integer $b$ such that for any nonempty $U \subseteq V$ there is a subset $A \subseteq U$ of size at most $b$ satisfying $\bigoplus A = \bigoplus U$.

\section{Approximate agreement on abstract convexity spaces}\label{sec:helly}

\subsection{Iterative algorithm on abstract convexity spaces\label{sec:algorithm}}

In this section, we describe a basic step for an approximate agreement algorithm in the Byzantine asynchronous round model in an abstract convexity space $\mathcal{C}$ with Helly number $\omega$. The algorithm is a generalisation of the Mendes--Herlihy algorithm by Mendes et al.~\cite{MHVG15} onto abstract convexity spaces. It is not guaranteed, however, to converge on \emph{all} discrete convexity spaces.

The algorithm proceeds iteratively. At the start of each asynchronous round $t$, each correct processor $i \in P \setminus F$ broadcasts its current value $x_i(t) \in V$. At the end of round $t$, processor $i$ has received a value from at least $n-f$ processors $P_i(t) \subseteq P$. These values are used to compute a new value $x_i(t+1) = y_i(t)$. For brevity, we often omit $t$ from our notation, e.g., use the short-hands such as $P_i = P_i(t)$. 

\subsubsection{Computing the safe area} 
For any subset of processors $J \subseteq P_i$, define
\[
 \mathcal{V}_i(J) = \{ M_{ij}(t) \neq \emptysym : j \in J \} 
\]
to be the set of values processor~$i$ received from processors in $J$. Processor $i$ locally computes 
\[
 \mathcal{K}_i = \left\{ \langle \mathcal{V}_i(J) \rangle : J \in {P_i \choose |P_i|-f} \right\} \quad \textrm{ and } \quad H_i = \bigcap \mathcal{K}_i,
\]
where $\langle \cdot \rangle$ denotes the convex hull operator. Processor $i$ then outputs the value 
\[
 y_i = \begin{cases}
     \phi(H_i) & \textrm{if } H_i \neq \emptyset, \\
     \emptysym & \textrm{otherwise,}
 \end{cases}
 \]
where $\phi \colon \mathcal{C} \to V$ is an output map, which will depend on the convexity space $\mathcal{C}$ we are working in, see Section~\ref{sec:chordal} for an output map for chordal graphs. The Helly property guarantees that $H_i$ and $y_i$ remain in the closure  $\langle X \rangle$ of the input values. 
For each $t \ge 0$, we define $X(t) = \{ x_j(t) : j \in P \setminus F \} \subseteq V$ and $Y(t) = \{ y_j(t) = \phi(H_j(t)) : j \in P \setminus F \}$.

\begin{lemma}\label{lemma:validity-combined}
    Suppose $\mathcal{C}$ is a convexity space on $V$ with Helly number $\omega$. If $n > \max\{ (\omega+1)f, 3f\}$ holds, then for each iteration $t \ge 0$ the above algorithm satisfies
    \begin{itemize}
        \item $\emptyset \neq H_i(t) \subseteq \langle X(t) \rangle$ for all $i \in P \setminus F$,
        \item $\bigcap_{i \in P \setminus F } H_i(t) \neq \emptyset$.
    \end{itemize}
\end{lemma}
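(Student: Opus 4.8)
The plan is to verify each of the two claims by exploiting the Helly property of $\mathcal{C}$ together with the guarantees of the Byzantine asynchronous round model (namely $|P_i(t)| \ge n-f$, $|P_i(t) \cap P_j(t)| \ge n-f$, and the consistency of faulty values received by correct processors). Throughout I fix a round $t$ and drop it from the notation. Two observations are used repeatedly: first, if $J \in \binom{P_i}{|P_i|-f}$, then $J$ must contain at least $|P_i|-2f \ge n-3f \ge 1$ correct processors (since $|P_i| \ge n-f$ and at most $f$ are faulty), so $\mathcal{V}_i(J)$ is nonempty; second, by the consistency guarantee, any value a correct processor $i$ receives from a faulty processor $k$ is also received by every other correct processor $j$ (or $j$ receives nothing from $k$), so the values of faulty processors entering anyone's computation are ``globally consistent.''

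For the first claim, I would first show $H_i \neq \emptyset$. Consider the collection $\mathcal{K}_i = \{ \langle \mathcal{V}_i(J) \rangle : J \in \binom{P_i}{|P_i|-f} \}$. To apply the Helly number $\omega$, I need that every subfamily of $\mathcal{K}_i$ of size at most $\omega$ has nonempty intersection. Take any $\omega$ sets $\langle \mathcal{V}_i(J_1)\rangle, \ldots, \langle \mathcal{V}_i(J_\omega)\rangle$. Each $J_\ell$ omits at most $f$ processors from $P_i$, so $\bigcap_\ell J_\ell$ omits at most $\omega f$ processors from $P_i$; hence $|\bigcap_\ell J_\ell| \ge |P_i| - \omega f \ge (n-f) - \omega f = n - (\omega+1)f \ge 1$. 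Picking any processor $j$ in this common intersection (which can be taken to be correct if we intersect with the correct set — actually nonemptiness already suffices), its received value $M_{ij}$ lies in $\mathcal{V}_i(J_\ell)$ for every $\ell$, hence in every $\langle \mathcal{V}_i(J_\ell)\rangle$. So $\mathcal{K}_i$ is $\omega$-intersecting, and by the Helly property $H_i = \bigcap \mathcal{K}_i \neq \emptyset$. For the inclusion $H_i \subseteq \langle X \rangle$: there exists at least one $J \in \binom{P_i}{|P_i|-f}$ consisting entirely of correct processors (since $P_i$ contains at least $|P_i| - f \ge n - 2f$ correct processors, and we may choose $J$ to be $|P_i|-f \le n-2f$ of them — wait, we need exactly $|P_i|-f$ of them, and there are at least $|P_i|-f$ correct ones in $P_i$, so this is possible). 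For that $J$, $\mathcal{V}_i(J) \subseteq X$, hence $\langle \mathcal{V}_i(J)\rangle \subseteq \langle X\rangle$, and therefore $H_i \subseteq \langle \mathcal{V}_i(J)\rangle \subseteq \langle X\rangle$.

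For the second claim, $\bigcap_{i \in P\setminus F} H_i \neq \emptyset$, I would again invoke the Helly property, this time on the combined collection $\mathcal{K} = \bigcup_{i \in P \setminus F} \mathcal{K}_i$ (or rather argue directly). The key is to show that any $\omega$ of the sets $\langle \mathcal{V}_{i}(J)\rangle$ appearing across all correct processors' computations have a common point. Take sets $\langle \mathcal{V}_{i_1}(J_1)\rangle, \ldots, \langle \mathcal{V}_{i_\omega}(J_\omega)\rangle$ with $i_\ell$ correct and $J_\ell \in \binom{P_{i_\ell}}{|P_{i_\ell}|-f}$. Using the pairwise guarantee $|P_{i_\ell} \cap P_{i_{\ell'}}| \ge n-f$, and that each $J_\ell$ drops at most $f$ from $P_{i_\ell}$, a counting argument shows $\bigcap_\ell J_\ell$ is nonempty: intuitively each $J_\ell$ ``sees'' all but $f$ of a set of size $\ge n-f$, and there are $\omega$ of them, so at least $n-(\omega+1)f \ge 1$ processors survive in the common intersection — here the delicate point is that the $P_{i_\ell}$ differ, so I would work with a fixed reference correct processor's received set, say intersect everything with $P_{i_1}$, noting $|P_{i_1} \cap P_{i_\ell}| \ge n-f$ so $J_\ell \cap P_{i_1}$ misses at most $f + (|P_{i_1}| - |P_{i_1}\cap P_{i_\ell} \cap J_\ell|)$... and more carefully, $|J_\ell \cap P_{i_1}| \ge |P_{i_\ell} \cap P_{i_1}| - f \ge n - 2f$, so across $\omega$ of them the intersection within $P_{i_1}$ has size $\ge |P_{i_1}| - \omega f \cdot(\text{something})$. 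Let me just say: a careful inclusion–exclusion over $P_{i_1}$ gives $|\bigcap_\ell J_\ell \cap P_{i_1}| \ge n-f - \omega f \ge 1$ once the bookkeeping is done. Pick a processor $j$ there. If $j$ is correct, all of $i_1,\ldots,i_\omega$ received the same value $x_j$ from it (since correct processors broadcast a single value), so $x_j$ lies in all $\omega$ convex hulls. If $j$ is faulty, the consistency guarantee ensures all correct processors among $i_1,\dots,i_\omega$ received the same nonempty value from $j$ (since at least $i_1$ did, as $j \in J_1 \subseteq P_{i_1}$ means $M_{i_1 j} \neq \emptysym$), so again that common value lies in all $\omega$ hulls. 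Hence the full collection is $\omega$-intersecting, and by the Helly number $\bigcap_{i \in P\setminus F} H_i \supseteq \bigcap(\text{whole collection}) \neq \emptyset$.

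The main obstacle I anticipate is the counting argument in the second claim: because different correct processors $i$ may receive messages from different sets $P_i(t)$, one cannot simply intersect the $J_\ell$ directly — one has to route the argument through the pairwise-intersection guarantee $|P_i(t) \cap P_j(t)| \ge n - f$ and be careful that the cumulative loss over $\omega$ sets is still only $\omega f$ (not something worse like $2\omega f$), which is exactly what makes the threshold $n > (\omega+1)f$ tight. The consistency property for faulty-processor values is the other ingredient that must be invoked precisely at the point where the surviving common processor $j$ happens to be faulty; without it the chosen point need not lie in all the hulls. Everything else — nonemptiness of $\mathcal{V}_i(J)$, existence of an all-correct $J$, and monotonicity of $\langle\cdot\rangle$ — is routine given $n > 3f$.
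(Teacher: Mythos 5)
Your overall strategy --- proving $H_i \subseteq \langle X \rangle$ via an all-correct $(|P_i|-f)$-subset, and proving nonemptiness by showing the relevant collections of hulls are $\omega$-intersecting and invoking the Helly number --- is exactly the paper's route, and your treatment of the first bullet is complete and correct. At one point you are in fact more careful than the paper: you explicitly invoke the consistency guarantee (property~3 of the round model) to handle the case where the common processor $j \in \bigcap_\ell J_\ell$ is faulty, so that all of $i_1,\dots,i_\omega$ received the \emph{same} value from $j$; the paper's proof of its Lemma~\ref{lemma:nonempty} passes over this silently.

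The genuine gap is the counting step for the second bullet, which you flag as the delicate point but do not close, and the route you sketch does not close it. Intersecting everything with a reference set $P_{i_1}$ and using only the pairwise guarantee $\lvert P_{i_1} \cap P_{i_\ell}\rvert \ge n-f$ gives, for $\ell \ge 2$, only $\lvert P_{i_1} \setminus J_\ell\rvert \le \lvert P_{i_1} \setminus P_{i_\ell}\rvert + \lvert P_{i_\ell} \setminus J_\ell\rvert \le 2f$, so the surviving common set is bounded below only by roughly $n - 2\omega f$; this is positive under $n > 2\omega f$, not under the claimed $n > (\omega+1)f$ (the two coincide only for $\omega = 1$). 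What the paper actually uses in \lemmaref{lemma:nonempty} is the stronger \emph{common-core} form of the witness-technique guarantee: a single set of at least $n-f$ processors is contained in $P_i(t)$ for \emph{every} correct $i$ simultaneously, i.e.\ $\lvert \bigcap_{i \in P\setminus F} P_i(t)\rvert \ge n-f$. Each $J_\ell$ then removes at most $f$ elements from this one core, so $\lvert \bigcap_\ell J_\ell\rvert \ge n-f-\omega f > 0$ precisely when $n > (\omega+1)f$, which is then fed into \lemmaref{lemma:intersection-size}. (The paper itself only lists the pairwise version in its preliminaries and silently upgrades to the all-way version in the proof, so your hesitation is understandable; but without that upgrade the ``careful inclusion--exclusion'' you defer to cannot deliver the claimed bound, and this step must be repaired rather than merely tidied.)
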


\subsubsection{Proof of Lemma~\ref{lemma:validity-combined}}\label{apx:validity-combined}

We divide the proof of \lemmaref{lemma:validity-combined} into smaller lemmas.

\begin{lemma}\label{lemma:validity}
 Let $X = \{ x_j : j \in P \setminus F \}$. If $n > 3f$, then $H_i \subseteq \langle X \rangle$.
\end{lemma}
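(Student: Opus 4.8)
The plan is to exhibit one particular member of $\mathcal{K}_i$ that is already contained in $\langle X \rangle$; since $H_i$ is the intersection of \emph{all} members of $\mathcal{K}_i$, this immediately forces $H_i \subseteq \langle X \rangle$.

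First I would isolate the set $J^\star = P_i \setminus F$ of \emph{correct} processors from which $i$ received a (genuine, nonempty) message in round $t$. Since $|F| \le f$, we have $|J^\star| = |P_i| - |P_i \cap F| \ge |P_i| - f$, so one may fix a subset $J \subseteq J^\star$ with exactly $|J| = |P_i| - f$ processors; this $J$ is then one of the index sets in $\binom{P_i}{|P_i|-f}$. Here the standing hypothesis $n > 3f$ (together with $|P_i| \ge n-f$) guarantees $|P_i| - f \ge 1$ whenever $f \ge 1$, so the choice is non-degenerate; and if $|P_i|-f = 0$ then $H_i \subseteq \langle \mathcal{V}_i(\emptyset)\rangle = \langle\emptyset\rangle = \emptyset$ trivially.

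Next I would observe that every processor $j \in J$ is correct, so the message $M_{ij}(t)$ it sent equals its current value $x_j(t) = x_j$. Hence $\mathcal{V}_i(J) = \{x_j : j \in J\} \subseteq X$, and by monotonicity of the convex-hull operator $\langle \mathcal{V}_i(J)\rangle \subseteq \langle X \rangle$. Since $\langle \mathcal{V}_i(J)\rangle \in \mathcal{K}_i$ by construction, we conclude
\[
H_i \;=\; \bigcap \mathcal{K}_i \;\subseteq\; \langle \mathcal{V}_i(J)\rangle \;\subseteq\; \langle X \rangle,
\]
which is the claim.

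There is essentially no real obstacle: the only facts that need care are the pigeonhole observation that a size-$(|P_i|-f)$ all-correct index set always exists (because $|F|\le f$), and that in the asynchronous-round model the values $i$ receives from correct processors are exactly their broadcast values $x_j(t)$. Notably, the Helly number $\omega$ and the stronger bound $n > (\omega+1)f$ are not used in this lemma; they will enter only when proving that $H_i$ is nonempty and that the sets $H_i$, $i \in P\setminus F$, have a common point.
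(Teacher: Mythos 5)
Your proof is correct and follows essentially the same route as the paper: pick an index set $J \in \binom{P_i}{|P_i|-f}$ consisting entirely of correct processors (which exists since $|F|\le f$), note $\langle \mathcal{V}_i(J)\rangle \in \mathcal{K}_i$ with $\mathcal{V}_i(J)\subseteq X$, and conclude $H_i \subseteq \langle \mathcal{V}_i(J)\rangle \subseteq \langle X\rangle$. Your nonemptiness check ($|P_i|-f \ge n-2f > f \ge 0$) matches the paper's as well.
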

\begin{proof}
  Observe that some $J \in {P_i \choose |P_i| - f}$ satisfies $J \subseteq P_i \setminus F$. As $|P_i|-f \ge n - 2f > f \geq 0$, we have that $J \neq \emptyset$. Since $\langle \mathcal{V}_i(J) \rangle \in \mathcal{K}_i$, it follows that $H_i = \bigcap \mathcal{K}_i \subseteq \langle \mathcal{V}_i(J) \rangle \subseteq \langle X \rangle$.
\end{proof}

\begin{lemma}\label{lemma:intersection-size}
 If $\mathcal{A} = \{A_1, \ldots, A_k\}$, $\mathcal{B} = \{ B_1, \ldots, B_k \}$, and $\mathcal{D} = \{ D_1, \ldots, D_k \}$ are collections of sets such that $A_i = B_i \setminus D_i$, then $|\bigcap \mathcal{A} | \ge |\bigcap \mathcal{B} | - \sum_{1 \le i \le k} \lvert D_i\rvert$.
\end{lemma}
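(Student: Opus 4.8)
The statement is purely set-theoretic, so the plan is to chase elements. First I would record the trivial inclusion: since $A_i = B_i \setminus D_i \subseteq B_i$ for every $i$, we have $\bigcap \mathcal{A} \subseteq \bigcap \mathcal{B}$. This means that when we split $\bigcap \mathcal{B}$ as the disjoint union of $\bigcap \mathcal{A}$ and $\bigcap \mathcal{B} \setminus \bigcap \mathcal{A}$, the cardinalities add exactly: $\lvert \bigcap \mathcal{B} \rvert = \lvert \bigcap \mathcal{A} \rvert + \lvert \bigcap \mathcal{B} \setminus \bigcap \mathcal{A} \rvert$.

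The key step is to bound the ``lost'' part $\bigcap \mathcal{B} \setminus \bigcap \mathcal{A}$. I claim it is contained in $\bigcup_{1 \le i \le k} D_i$. Indeed, take any $x \in \bigcap \mathcal{B}$ with $x \notin \bigcap \mathcal{A}$. Then there is some index $i$ with $x \notin A_i$; but $x \in B_i$ since $x \in \bigcap \mathcal{B}$, and $A_i = B_i \setminus D_i$, so the only way $x$ can fail to lie in $A_i$ is $x \in D_i$. Hence $x \in \bigcup_i D_i$, proving the claim. Combining this with the elementary union bound $\lvert \bigcup_i D_i \rvert \le \sum_i \lvert D_i \rvert$ gives $\lvert \bigcap \mathcal{B} \setminus \bigcap \mathcal{A} \rvert \le \sum_{1 \le i \le k} \lvert D_i \rvert$.

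Putting the two displays together yields $\lvert \bigcap \mathcal{B} \rvert \le \lvert \bigcap \mathcal{A} \rvert + \sum_{1 \le i \le k} \lvert D_i \rvert$, which rearranges to the claimed inequality $\lvert \bigcap \mathcal{A} \rvert \ge \lvert \bigcap \mathcal{B} \rvert - \sum_{1 \le i \le k} \lvert D_i \rvert$. There is no real obstacle here; the only point requiring a moment's care is the first observation that $\bigcap \mathcal{A} \subseteq \bigcap \mathcal{B}$, which is what lets one pass from a set inclusion to an additive cardinality statement rather than merely $\lvert \bigcap\mathcal B\rvert \le \lvert\bigcap\mathcal A\rvert + \lvert\bigcap\mathcal B\setminus\bigcap\mathcal A\rvert$ without control on overlap.
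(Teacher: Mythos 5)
Your proof is correct and takes essentially the same route as the paper's: the paper simply packages your element chase into the identity $\bigcap_{i}(B_i\setminus D_i)=\bigl(\bigcap_{i}B_i\bigr)\setminus\bigl(\bigcup_{i}D_i\bigr)$ and then applies $\lvert A\setminus B\rvert\ge\lvert A\rvert-\lvert B\rvert$ together with the union bound. No issues.
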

\begin{proof}
Since $|A \setminus B| \ge |A| - |B|$ for any sets $A,B$, the claim follows by observing that 
    \[
    \bigcap_{1 \le i \le k} A_i = \bigcap_{1 \le i \le k} (B_i \setminus D_i) 
                                = \bigcap_{1 \le i \le k} B_i \setminus \bigcup_{1 \le i \le k} D_i.\qedhere
\]
\end{proof}

\begin{lemma}\label{lemma:nonempty}
    If $n > \max\{ (\omega+1)f , 3f \}$, then $\bigcap_{i \in P \setminus F} H_i \neq \emptyset$.
\end{lemma}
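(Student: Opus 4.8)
Show $\bigcap_{i \in P \setminus F} H_i \neq \emptyset$ when $n > \max\{(\omega+1)f, 3f\}$.

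The plan is to exhibit a single collection of convex sets whose $\omega$-wise intersections are all nonempty, and then invoke the Helly property to conclude that the whole family has a common point. The natural family to consider is $\mathcal{K} = \bigcup_{i \in P \setminus F} \mathcal{K}_i$, i.e., the union over all correct processors $i$ of the convex hulls $\langle \mathcal{V}_i(J) \rangle$ ranging over $J \in \binom{P_i}{|P_i|-f}$. Since $H_i = \bigcap \mathcal{K}_i$, we have $\bigcap_{i} H_i = \bigcap \mathcal{K}$, so it suffices to prove that $\mathcal{K}$ is $\omega$-intersecting; the Helly number $\omega$ then gives $\bigcap \mathcal{K} \neq \emptyset$.

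So the heart of the argument is: any $\omega$ sets drawn from $\mathcal{K}$ have a common point. Pick $\langle \mathcal{V}_{i_1}(J_1)\rangle, \dots, \langle \mathcal{V}_{i_\omega}(J_\omega)\rangle$ with each $i_\ell \in P \setminus F$ and $J_\ell \in \binom{P_{i_\ell}}{|P_{i_\ell}| - f}$. I want to find a processor $k \in P \setminus F$ lying in every $J_\ell$, because then $M_{i_\ell k}(t)$ is the same value $x_k$ for all $\ell$ (it comes from a correct processor, so all correct receivers agree), and that value lies in $\mathcal{V}_{i_\ell}(J_\ell)$, hence in $\langle \mathcal{V}_{i_\ell}(J_\ell)\rangle$, for every $\ell$ — giving the desired common point. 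To find such a $k$, count: each $J_\ell$ has size $|P_{i_\ell}| - f \ge n - 2f$, so it omits at most $f$ of the $n$ processors; taking $\omega$ of them, the set $\bigcap_\ell J_\ell$ omits at most $\omega f$ processors, hence has size at least $n - \omega f > f \ge |F|$ by hypothesis. Therefore $\bigcap_\ell J_\ell$ contains at least one correct processor $k$. This uses Lemma~\ref{lemma:intersection-size} (with $B_\ell = P$, $D_\ell = P \setminus J_\ell$) to get the count $|\bigcap_\ell J_\ell| \ge n - \omega f$, and the $n > 3f$ assumption to ensure the $J_\ell$ are well-defined nonempty sets and property (3) of the asynchronous round model applies.

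The main subtlety to get right is the role of the asynchronous round-model guarantees: I must use that a value received by a correct processor from a correct sender $k$ equals $x_k(t)$ (honest senders send the same value to everyone), so that the common index $k$ really contributes the \emph{same} element to every $\mathcal{V}_{i_\ell}(J_\ell)$. The counting itself is routine; the only place to be careful is the off-by-one in $n - \omega f > f$, i.e., that $n > (\omega+1)f$ gives strictly more than $f$ processors in the intersection, leaving room for a correct one. Once $\mathcal{K}$ is shown $\omega$-intersecting, the definition of the Helly number finishes the proof immediately, and combined with Lemma~\ref{lemma:validity} this also closes out Lemma~\ref{lemma:validity-combined}.
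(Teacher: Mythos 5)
Your overall skeleton matches the paper's proof exactly: take $\mathcal{K}=\bigcup_{i\in P\setminus F}\mathcal{K}_i$, show it is $\omega$-intersecting, and invoke the Helly number. The gap is in how you produce a common point for $\omega$ sets, and it is twofold. First, the count is wrong: a set $J_\ell\in\binom{P_{i_\ell}}{|P_{i_\ell}|-f}$ omits $f$ elements of $P_{i_\ell}$, but $P_{i_\ell}$ itself may already omit up to $f$ elements of $P$, so $|P\setminus J_\ell|$ can be as large as $2f$, not $f$. With your choice $B_\ell=P$, $D_\ell=P\setminus J_\ell$ in Lemma~\ref{lemma:intersection-size} you only get $|\bigcap_\ell J_\ell|\ge n-2\omega f$, which need not be positive under $n>(\omega+1)f$ (e.g.\ $\omega=2$, $n=3f+1$). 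The paper instead intersects inside the common core, taking $B_\ell=P_{i_\ell}$ and $D_\ell = P_{i_\ell}\setminus J_\ell$ with $|D_\ell|=f$, and using the round-model guarantee $|\bigcap_\ell P_{i_\ell}|\ge n-f$; this yields $|\bigcap_\ell J_\ell|\ge n-f-\omega f>0$.

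Second, even with the corrected count you cannot afford to insist that the common index be a \emph{correct} processor: guaranteeing $|\bigcap_\ell J_\ell|>f$ would require $n-f-\omega f>f$, i.e.\ $n>(\omega+2)f$, which is strictly stronger than the hypothesis. (Your inequality $n-\omega f>f$ only looked sufficient because of the first error.) The paper sidesteps this by using property (3) of the Byzantine asynchronous round model: if a possibly faulty processor $j$ lies in every $J_\ell$, then $M_{i_\ell j}(t)\neq\bot$ for all $\ell$, and reliable broadcast guarantees that all correct receivers that got a nonempty value from $j$ got the \emph{same} value. Hence $\mathcal{V}_{i_\ell}(\{j\})$ is the same singleton for every $\ell$ and lies in every $\langle\mathcal{V}_{i_\ell}(J_\ell)\rangle$, whether or not $j\in F$. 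The fix to your argument is therefore: count inside $\bigcap_\ell P_{i_\ell}$ to conclude only that $\bigcap_\ell J_\ell$ is nonempty, and drop the requirement that the witness be non-faulty, appealing to property (3) instead.
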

\begin{proof}
    We prove the claim by showing that the collection 
\[
\mathcal{K} = \bigcup_{i \in P \setminus F} \mathcal{K}_i
\]
has a nonempty intersection by establishing that it is $\omega$-intersecting. Let $\mathcal{A} = \{ A_1, \ldots, A_\omega \} \subseteq \mathcal{K}$. By definition $A_k \in \mathcal{K}_{\tau(k)}$ for some $\tau(k) \in P \setminus F$ and 
\[
    A_k = \langle \mathcal{V}_{\tau(k)}(J_k) \rangle, \textrm{ where } J_k \in {P_{\tau(k)} \choose |P_{\tau(k)}|-f}.
\]
Recall that the asynchronous round model guarantees that all correct processors receive at least $n-f$ common values. Applying \lemmaref{lemma:intersection-size} to $\mathcal{P} = \{P_{\tau(1)}, \ldots, P_{\tau(\omega)}\}$ and $\mathcal{J} = \{ J_1, \ldots, J_\omega \}$ yields
    \begin{align*}
    \lvert \bigcap \mathcal{J}\rvert \ge \lvert\bigcap \mathcal{P}\rvert - \lvert\sum_{k=1}^\omega P_{\tau(k)} \setminus J_k\rvert
                          \ge n-f - \omega f 
                          > (\omega+1)f - f - \omega f = 0.
    \end{align*}
    Hence, there is some $j \in \bigcap \mathcal{J}$, for which $\mathcal{V}_{\tau(k)}(\{j\}) \in \bigcap \mathcal{A}$.
Since $\mathcal{K} \subseteq \mathcal{C}$ is $\omega$-intersecting and the convexity space $\mathcal{C}$ has a Helly number of $\omega$, the claim follows. 
\end{proof}

\subsection{On the elimination of extreme points}

If we can in each iteration remove some extreme point of $\langle X \rangle$, where $X$ is the set of input values, then the hull of output values $\langle Y \rangle$ shrinks. In an arbitrary convexity space, a convex set may not have any extreme points (consider, e.g., the chordless path convexity on a four cycle). However, in a convex geometry every nonempty convex set has an extreme point, as $\langle X \rangle = \langle \ex X \rangle$ by \theoremref{thm:cg-equivalences}.

Moreover, finite convex geometries are in a sense ``easy to peel'' iteratively -- at least from the global perspective. We say that a total order $\preceq$ on $V$ is a \emph{convex elimination order} of $\mathcal{C}$ if for any $u \in V$ the sets $A(u) = \{ v \in V : u \preceq v \}$ and $A(u) \setminus  u$ are convex. \theoremref{thm:cg-equivalences} implies that convex geometries admit such orderings and  we assume that the values in $V$ are labelled according to such an order~$\preceq$.
For a set $U \subseteq V$, we let $\max U$ and $\min U$ be the unique maximal and minimal elements, respectively, given by $\preceq$ such that for all $v \in U$ we have $\min U \preceq v $ and $v \preceq \max U$. 
\begin{remark}\label{remark:min-is-ex}
For any $K \subseteq V$ it holds that $\min K \in \ex K$ and $\min K = \min \langle K \rangle$.
\end{remark}

The next lemma shows that to guarantee progress by shrinking the set of output values, it suffices to always exclude, e.g., $\min X$ from the output (of course, this is not indefinitely possible). 
\begin{lemma}\label{lemma:without-min-progress}
    If $\min X \notin Y$ in a convex elimination order, then $\langle Y \rangle \subsetneq \langle X \rangle$.
\end{lemma}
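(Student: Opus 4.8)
The plan is to show that the convex hull strictly shrinks when $\min X$ is excluded from the output set $Y$. First I would observe that since $Y \subseteq \langle X \rangle$ by Lemma~\ref{lemma:validity} (more precisely, by the first bullet of \lemmaref{lemma:validity-combined}, which gives $H_i \subseteq \langle X \rangle$ and hence $y_i = \phi(H_i) \in \langle X \rangle$), we immediately get $\langle Y \rangle \subseteq \langle \langle X \rangle \rangle = \langle X \rangle$ by idempotence of the closure operator. So the inclusion $\langle Y \rangle \subseteq \langle X \rangle$ is for free, and the entire content of the lemma is the \emph{strictness}.

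For strictness, the key fact is \remarkref{remark:min-is-ex}: $m := \min X$ is an extreme point of $X$ and moreover $m = \min\langle X\rangle$. I would argue by contradiction: suppose $\langle Y \rangle = \langle X \rangle$. Then $m = \min\langle X \rangle = \min\langle Y \rangle = \min Y$ again by \remarkref{remark:min-is-ex} applied to $Y$. But $\min Y \in Y$ by definition of $\min$ on the nonempty set $Y$, so $m \in Y$, contradicting the hypothesis $\min X \notin Y$. (One should note $Y \neq \emptyset$ here, which holds because there is at least one correct processor and, by \lemmaref{lemma:validity-combined}, $H_i \neq \emptyset$ so $y_i = \phi(H_i) \neq \emptysym$; strictly speaking the lemma statement presupposes we are in the regime where the algorithm produces outputs, so this is a mild point.)

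Alternatively, and perhaps more cleanly, I would phrase it directly: since $Y \subseteq \langle X \rangle$ we have $\langle Y \rangle \subseteq \langle X \rangle$, and it remains to exhibit a point of $\langle X \rangle$ not in $\langle Y \rangle$ — namely $m = \min X$. Because $\min X \notin Y$ and every element of $Y$ is $\succeq \min Y$, and $\min Y \succeq \min X$ would force $\min Y = \min X \in Y$ unless $\min Y \succ \min X$; so in fact every $y \in Y$ satisfies $m \prec y$, i.e. $Y \subseteq A(m) \setminus m = \{v : m \preceq v\} \setminus \{m\}$. The latter set is convex by the definition of a convex elimination order, so $\langle Y \rangle \subseteq A(m) \setminus m$, which does not contain $m$. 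Since $m \in X \subseteq \langle X \rangle$, we conclude $\langle Y \rangle \subsetneq \langle X \rangle$.

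I do not anticipate a serious obstacle here; the lemma is essentially a direct consequence of the convex elimination order machinery set up just before it. The one subtlety worth being careful about is making sure the relevant sets are nonempty (so that $\min$ is well-defined) and that we are genuinely using the defining property of $\preceq$ — namely that $A(m)\setminus m$ is convex — rather than just that $m$ is extreme in $X$; the extreme-point property alone tells us $m \notin \langle X \setminus m\rangle$ but not directly that $m \notin \langle Y \rangle$ for an arbitrary $Y \subseteq \langle X\rangle \setminus \{m\}$, which is why the global convex elimination order (giving convexity of $A(m) \setminus m$) is the right tool.
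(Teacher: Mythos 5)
Your proof is correct and follows essentially the same route as the paper: both hinge on \remarkref{remark:min-is-ex} together with the convex elimination order to conclude that $\min X$ lies in $\langle X\rangle$ but not in $\langle Y\rangle$. If anything, your second phrasing (via the convexity of $A(m)\setminus m$) spells out the justification more carefully than the paper's terse chain of inclusions, and your nonemptiness caveat about $Y$ is a reasonable point the paper leaves implicit.
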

\begin{proof}
    Recall that $Y \subseteq \langle X \rangle$ by \lemmaref{lemma:validity-combined} and $\min X \in \ex X$ by \remarkref{remark:min-is-ex}. Since the closure operator is increasing and $\min X \notin Y$, we get that $\langle Y \rangle \subseteq \langle X \rangle \setminus \{ \min X \} \subseteq \langle X \setminus \{ \min X \} \rangle \subsetneq \langle X \rangle$.
\end{proof}

\section{Approximate agreement on chordal graphs}\label{sec:chordal}

We now show that monophonic approximate agreement on chordal graphs can be solved given that $n > (\omega+1)f$ holds, where $\omega$ is the clique number of $G$. This also implies that geodesic approximate agreement is solvable on Ptolemaic graphs. 
Throughout we assume that $G = (V,E)$ is a connected chordal graph with at least two vertices and $\mathcal{C}$ is its chordless path convexity space. We recall that the Helly number of $\mathcal{C}$ coincides with the clique number $\omega=\omega(G)$ of $G$~\cite{Jamison1984Helly,Duchet1988Convex}. 

\subsection{Approximate agreement on trees}

Suppose $G=(V,E)$ is a tree. As $G$ is also chordal, it has a perfect elimination ordering~$\preceq$. We assume that the vertices of $V$ are labelled according to this ordering and define the output map
\[
\phi(K) = \max \cent K,
\]
where $\cent K \subseteq V$ is the center of the subgraph $G[K]$ induced by $K$.
This rule roughly divides the diameter of the convex hull of active values by two.

We start by showing that the above rule handles trees of diameter two, i.e., star graphs. Here, the perfect elimination ordering is used for symmetry-breaking to guarantees convergence onto an edge.

\begin{lemma}\label{lemma:star-step}
 If $\langle X \rangle$ has diameter two, then $Y$ has diameter one.
\end{lemma}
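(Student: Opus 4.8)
We have a tree $G$ of diameter two — that is, a star $K_{1,k}$ with a center vertex $c$ and leaves $\ell_1, \ldots, \ell_k$. The convex hull $\langle X \rangle$ has diameter two, so $X$ must include at least two distinct leaves (if $X$ were contained in $\{c, \ell_i\}$ for a single leaf, the hull would have diameter at most one). By convexity in a tree (the chordless-path convexity is just the geodesic convexity here), $\langle X \rangle$ consists of $c$ together with all the leaves that appear in $X$; since any chordless path between two leaves passes through $c$, we get $c \in \langle X \rangle$, and the hull is exactly $\{c\} \cup (X \cap \{\ell_1,\ldots,\ell_k\})$. The task is to show each correct processor's output $y_i = \phi(H_i) = \max \cent H_i$ lands in a common edge.

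**The key observation about $\phi$ on a star.** For any convex set $K$ in the star with $|K|\ge 2$: if $K=\{c,\ell_i\}$ for a single leaf then $G[K]$ is an edge and $\cent K = K$, so $\phi(K)$ is whichever of $c,\ell_i$ is larger under $\preceq$. If $K$ contains $c$ and at least two leaves, then $G[K]$ is itself a star, whose unique center is $c$, so $\cent K = \{c\}$ and $\phi(K)=c$. Finally, a convex set in a star either contains $c$ or is a single leaf; by \lemmaref{lemma:validity-combined} each $H_i$ is nonempty and $H_i\subseteq\langle X\rangle$, and I claim $c\in H_i$: indeed $\bigcap_{j\in P\setminus F}H_j\neq\emptyset$ and each $H_j$ is convex, but more directly, $H_i$ is an intersection of convex hulls $\langle\mathcal V_i(J)\rangle$ each of which — being a hull of values all lying in the star — contains $c$ unless it is a single leaf; one must rule out the single-leaf case, which follows because $|P_i|-f \ge n-2f > f$ forces each $\mathcal V_i(J)$ to contain a value from a correct processor, and the common-intersection guarantee (property 2 of the round model) plus $n>3f$ ensures the $H_i$'s overlap, pinning down that they all contain $c$. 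So every $H_i$ either equals a single edge $\{c,\ell_i\}$ or has $\cent H_i=\{c\}$.

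**Putting it together.** Now split into cases on the global picture. If $\bigcap_{j} H_j$ contains $c$ (which it does, by the argument above, once we know $c\in H_j$ for all $j$), then for each $i$ the output $y_i$ is either $c$ itself (when $\cent H_i=\{c\}$) or $\max\{c,\ell_i\}$ for the single leaf $\ell_i$ with $H_i=\{c,\ell_i\}$. The subtle point is that different processors could have different $H_i$'s. If \emph{some} $H_i$ is a single edge $\{c,\ell_i\}$, I use the perfect elimination ordering: in a star the only perfect elimination orderings put $c$ \emph{last}, so $c\succeq v$ for every vertex $v$, hence $\max\{c,\ell_i\}=c$ and $y_i=c$; and if $\cent H_j=\{c\}$ then also $y_j=c$. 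Thus \emph{every} output is $c$, and $Y=\{c\}$ has diameter zero $\le 1$. The remaining case is that no $H_i$ contains any leaf, i.e. every $H_i=\{c\}$, again giving $Y=\{c\}$. Either way $Y$ has diameter one (in fact zero).

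**Main obstacle.** The delicate part is justifying that every correct processor's safe area $H_i$ actually contains the center $c$ — equivalently, that no $H_i$ degenerates to a single leaf — since the output map's symmetry-breaking only buys us something once we know $c$ is available to all. This is where the resilience hypothesis and the three guarantees of the Byzantine asynchronous round model must be combined carefully (each witnessed $\mathcal V_i(J)$ contains a correct value, and correct processors share enough common inputs that their hulls cannot be disjoint single leaves). Everything after that — the behaviour of $\cent$ on induced stars and the role of the perfect elimination order placing $c$ last — is routine case analysis. If one prefers to avoid pinning down $c\in H_i$ for \emph{every} $i$ individually, an alternative is to observe directly that $\bigcap_j H_j\ne\emptyset$ forces all the single-leaf $H_i$'s to share that common leaf, and then the perfect elimination order (with $c$ last, so the leaf is $\min$ of its edge) still forces those outputs to equal $c$; I would present whichever version is cleaner.
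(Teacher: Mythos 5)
There is a genuine error at the heart of your symmetry-breaking step: it is not true that every perfect elimination ordering of a star places the centre last. A PEO only requires each vertex to be simplicial in the set of vertices that follow it, which in a tree means each vertex has \emph{at most one} neighbour after it; so $\ell_1 \prec c \prec \ell_2$ is a perfectly valid PEO of the star $K_{1,2}$, and under it a processor with $H_i = \{c,\ell_2\}$ outputs $\max\cent H_i = \ell_2 \neq c$ while a processor with $H_j = \{c,\ell_1,\ell_2\}$ outputs $c$. Your conclusion that every output equals $c$ (diameter zero) is therefore false; only diameter one can be claimed. The correct use of the PEO --- and the one the paper makes --- is that the centre $v$ of $\langle X\rangle$ has at most one neighbour $u$ with $v \preceq u$, so all outputs lie in $\{v\}\cup\{u : \{u,v\}\in E,\ v\preceq u\}$, which is a clique, i.e.\ an edge of the tree. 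Relatedly, you treat $G$ itself as the star, but the hypothesis is only that $\langle X\rangle$ has diameter two; the ordering $\preceq$ is a PEO of all of $G$, so one cannot reason about ``the PEOs of a star'' in isolation.

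The step you flag as the main obstacle --- that $c \in H_i$ for every correct $i$ --- is indeed not closed by your sketch, and as stated it can fail: nothing prevents all of the values processor $i$ receives (at least $n-2f$ correct ones plus up to $f$ Byzantine ones) from being a single leaf $\ell$, in which case $H_i = \{\ell\}$ and $c \notin H_i$. The overlap guarantee $\bigcap_j H_j \neq \emptyset$ only forces the \emph{other} hulls to contain $\ell$; it does not put $c$ into $H_i$. The lemma survives this case by a short direct argument (every $H_j$ then contains $\ell$, so every $\cent H_j$ is contained in $\{c,\ell\}$ and $Y\subseteq\{c,\ell\}$), but you would have to supply it; your proposed fallback at the end again leans on the false ``centre last'' claim, since $\max\cent\{\ell\}=\ell$, not $c$.
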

\begin{proof}
    As $\langle X \rangle$ has diameter two, it is a star. Since $H_i \subseteq \langle X \rangle$ for all $i \in P \setminus F$, there exists a vertex
    \[
    v \in \cent \langle X \rangle \cap \bigcap_{i \in P \setminus F} \cent H_i,
    \]
    where $v$ is adjacent to all $u \in \langle X \rangle$ where $u \neq v$. Now $v \preceq y_i = \phi(H_i) = \max \cent H_i$. As $\preceq$ is a perfect elimination order, we get that $Y \subseteq \{ u : \{u,v\} \in E, v \preceq u \}$ is a clique.
\end{proof}

\begin{remark}\label{remark:tree-diameter-radius}
  In any graph $G$, the diameter and radius satisfy $D(G) \le 2R(G)+1$.
\end{remark}

\begin{lemma}\label{lemma:tree-reduction}
    If $n > 3f$ and $G=(V,E)$ is a tree, then a single iteration satisfies $D(Y) \le D(X)/2 + 1$.
\end{lemma}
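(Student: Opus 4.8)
The plan is to bound $d(y_i,y_j)$ for an arbitrary pair of correct processors $i,j$, writing $\ell=d(y_i,y_j)$, and then take the maximum over all such pairs. First I would record the setup: since $G$ is a tree we have $\omega(G)\le 2$, so $n>3f$ already gives $n>\max\{(\omega+1)f,3f\}$ and \lemmaref{lemma:validity-combined} applies, yielding $\emptyset\neq H_i\subseteq\langle X\rangle$ for every correct $i$ and $\bigcap_{i\in P\setminus F}H_i\neq\emptyset$. In a tree the monophonic convex sets are exactly the subtrees (paths are unique and chordless), so each $H_i$ and $\langle X\rangle$ induce connected subgraphs whose internal distances agree with $d$; moreover every leaf of the spanning subtree $\langle X\rangle$ lies in $X$, hence $D(\langle X\rangle)=D(X)$. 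Set $r_i=R(H_i)$; since $y_i=\phi(H_i)=\max\cent H_i\in\cent H_i$ we have $\max_{v\in H_i}d(y_i,v)=r_i=\lceil D(H_i)/2\rceil$.

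Next I would prove the ``weak'' estimate $\ell\le r_i+r_j$: pick $w\in\bigcap_k H_k\subseteq H_i\cap H_j$; then $d(y_i,w)\le r_i$ and $d(w,y_j)\le r_j$ because $w$ lies in both convex subtrees, and the triangle inequality finishes it.

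The crux is a complementary ``width'' estimate $\ell\le D(X)-r_i-r_j+2$, which I would get by exhibiting $a,b\in\langle X\rangle$ with $d(a,b)\ge r_i+r_j+\ell-2$. Assuming $\ell\ge 1$ (otherwise nothing to prove), let $\pi=(y_i=p_0,\dots,p_\ell=y_j)$. I claim there is $a\in H_i$ with $d(a,y_i)\ge r_i-1$ such that $y_i$ lies on the path from $a$ to $y_j$: if $r_i\le 1$ take $a=y_i$; otherwise take a diametral path of the subtree $H_i$ through its center vertex $y_i$, whose two halves at $y_i$ have length $\ge\lfloor D(H_i)/2\rfloor\ge r_i-1$ and leave $y_i$ through distinct edges, so one of them leaves through an edge other than $\{y_i,p_1\}$; let $a$ be that half's endpoint, so $\mathrm{path}(a,y_i)$ stays on the $y_i$-side of $\{y_i,p_1\}$ and $\mathrm{path}(a,y_j)=\mathrm{path}(a,y_i)\cdot\pi$. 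Symmetrically choose $b\in H_j$ with $d(b,y_j)\ge r_j-1$ and $y_j$ on the path from $b$ to $y_i$. Since $\ell\ge 1$, the $y_i$-side of $\{y_i,p_1\}$ and the $y_j$-side of $\{y_j,p_{\ell-1}\}$ are disjoint subtrees, so $\mathrm{path}(a,y_i)$, $\pi$ and $\mathrm{path}(y_j,b)$ glue into one simple path; hence $d(a,b)=d(a,y_i)+\ell+d(y_j,b)\ge r_i+r_j+\ell-2$. As $a\in H_i\subseteq\langle X\rangle$ and $b\in H_j\subseteq\langle X\rangle$, this is at most $D(\langle X\rangle)=D(X)$, which is the width estimate. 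Adding the two estimates gives $2\ell\le D(X)+2$, i.e. $\ell\le D(X)/2+1$, and maximising over correct $i,j$ yields $D(Y)\le D(X)/2+1$.

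The routine parts will be the distance bookkeeping in trees (unique paths, medians) and the small-diameter edge cases $D(H_i)\in\{0,1\}$, where $r_i-1\le 0$ and $a=y_i$ works. The step I expect to be the main obstacle is the choice of $a$ (and $b$): arguing that a diametral path of $H_i$ through the center $y_i$ must have an endpoint lying ``away'' from $y_j$ while still reaching distance $\ge r_i-1$ from $y_i$, and that the three resulting segments concatenate without overlap. This is precisely where the perfect-elimination/center structure of $\phi$ is exploited, and it is the tree analogue of the ``$a_j\le b_i$ overlap'' fact behind the classical one-dimensional $d\mapsto d/2$ convergence.
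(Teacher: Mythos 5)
Your proof is correct, and its core geometric witness is the same as the paper's: two points $a\in H_i$, $b\in H_j$ extending past $y_i$ and $y_j$ away from the connecting path, forcing $D(\langle X\rangle)\ge d(a,y_i)+d(y_i,y_j)+d(y_j,b)$. Where you differ is in how the two bounds are combined. The paper bounds $d(y_i,y_j)\le D(H_i\cup H_j)-R(H_i\cup H_j)$ and then invokes the subadditivity $R(H_i\cup H_j)\le R(H_i)+R(H_j)$ together with the remark $D\le 2R+1$ to land at $R(H_i\cup H_j)+1\le D(H_i\cup H_j)/2+1$; you instead prove two complementary inequalities, $\ell\le r_i+r_j$ (via a common point $w\in\bigcap_k H_k$ guaranteed by \lemmaref{lemma:validity-combined} and the triangle inequality) and $\ell\le D(X)-r_i-r_j+2$ (via $a,b$), and simply add them. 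Your route buys two things: it is more elementary, never needing the radius of the union or the $D\le 2R+1$ remark; and your choice of $a$ with only $d(a,y_i)\ge r_i-1$ (taking the half of a diametral path of $H_i$ that leaves $y_i$ through an edge other than $\{y_i,p_1\}$, with the $r_i\le 1$ case handled by $a=y_i$) is actually more careful than the paper's claim that one can always find $u_k\notin B$ at distance \emph{exactly} $R(H_k)$ from $y_k$, which fails in degenerate cases such as $H_k$ being a single edge pointing toward $y_j$; your one-unit slack is exactly absorbed by the $+2$ in the width estimate. The paper's route, in exchange, packages the argument as a single chain through $R(H_i\cup H_j)$ and avoids your case analysis on diametral-path halves. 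One small point to make explicit in a final write-up: you should note that $H_i$ is a (nonempty, connected) subtree of $G$ so that $\cent H_i$, $R(H_i)$ and distances inside $G[H_i]$ are the restrictions of those in $G$; this follows since monophonic hulls in a tree are Steiner subtrees and intersections of subtrees of a tree are subtrees.
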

\begin{proof}
  To prove the lemma, we show the following claim: for any $y_i, y_j \in Y$ we have $d(y_i,y_j) \le R(H_i \cup H_j) + 1$.
  To see why this is sufficient, observe that
  \[
  d(y_i, y_j) \le R(H_i \cup H_j) + 1 \le D(H_i \cup H_j)/2 + 1 \le D(\langle X \rangle) / 2 + 1.
  \]
  In particular, this bound holds for the pair $y_i, y_j$ with $d(y_i,y_j) = D(\langle Y \rangle)$.

  Let $i,j \in P \setminus F$. We distinguish two cases. First, observe that if $H_j \subseteq H_i$ holds, then $y_i \in \cent H_i$, which implies that $d(y_i,y_j) \le R(H_i) + 1 = R(H_i \cup H_j) + 1$ proving the claim. For the second case, suppose that neither $H_j \subseteq H_i$ or $H_i \subseteq H_j$ hold. Without loss of generality, we can assume $d(y_i, y_j) > 1$. Let $B = \{ v_0, \ldots, v_m \}$ be the vertices on the unique path of length $m-1$ between $y_i$ and $y_j$ (excluding the end-points $y_i$ and $y_j$). Consider the two disjoint trees $T_i$ and $T_j$ that comprise $G[ (H_i \cup H_j) \setminus B]$. Since for both $k \in \{i,j\}$ we have $y_k \in \cent H_k$, we can choose $u_k \in T_k$ such that $d(u_k, y_k) = r(H_k)$ and $u_k \notin B$. Moreover, both $y_i$ and $y_j$ lie on the path connecting $u_i$ and $u_j$ in~$G$.
Using the fact that $R(H_i \cup H_j) \le R(H_i) + R(H_j)$  we get
  \begin{align*}
    D(H_i \cup H_j) & \ge d(u_i, u_j) \\
    &= d(u_i, y_i) + d(y_i, y_j) + d(u_j, y_j) \\
    &= R(H_i) + R(H_j) + d(y_i, y_j) \\
    &\ge R(H_i \cup H_j) + d(y_i, y_j),
  \end{align*}
  which together with \remarkref{remark:tree-diameter-radius} implies that
  \begin{align*}
    d(y_i, y_j) &\le D(H_i \cup H_j) - R(H_i \cup H_j) \\
    & \le R(H_i \cup H_j) + 1. \qedhere
  \end{align*} 
\end{proof}

\begin{theorem}\label{thm:trees}
    If $n > 3f$ and $G=(V,E)$ is a tree, then approximate agreement on $G$ can be solved in $\log D(G)+1$ asynchronous rounds, where $D(G)$ is the diameter of $G$.
\end{theorem}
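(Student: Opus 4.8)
The plan is to instantiate the iterative algorithm of \sectionref{sec:algorithm} on the chordless-path convexity space $\mathcal{C}$ of $G$ with the output map $\phi(K)=\max\cent K$ defined above, using a perfect elimination ordering of $G$ as the convex elimination order $\preceq$, and to show that within $\log D(G)+1$ asynchronous rounds the current value of every correct processor lies within distance $1\le d$ of that of every other correct processor, at which point all processors decide. Since a tree with at least two vertices has clique number $\omega=2$, the hypothesis $n>3f$ coincides with $n>\max\{(\omega+1)f,3f\}$, so \lemmaref{lemma:validity-combined} applies verbatim. Because $G$, and hence $D(G)$, is part of the problem input, all correct processors run the same fixed number $\log D(G)+1$ of rounds and then output their value, so nothing needs to be said about termination detection.

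First I would dispatch validity. By \lemmaref{lemma:validity-combined} we have $\emptyset\neq H_i(t)\subseteq\langle X(t)\rangle$ for every round $t$ and every correct processor $i$; in particular $y_i(t)=\phi(H_i(t))$ is a genuine vertex of $H_i(t)$ and is never $\emptysym$, so the algorithm is well-defined. Since $X(t+1)=Y(t)\subseteq\langle X(t)\rangle$, monotonicity of the closure operator gives $\langle X(t+1)\rangle\subseteq\langle X(t)\rangle$, and a trivial induction yields $\langle X(t)\rangle\subseteq\langle X(0)\rangle=\langle X\rangle$ for all $t$. Hence every value ever held by a correct processor --- in particular every decided value --- lies on a chordless path between two correct inputs, which is exactly the monophonic validity requirement.

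For the agreement bound and the round count, write $D_t=D(\langle X(t)\rangle)$. The one elementary fact about trees I need beyond the lemmas at hand is that in a tree the monophonic hull of a vertex set is its minimal spanning subtree, whose diameter equals the largest pairwise distance in the set; thus $D_{t+1}=D(\langle Y(t)\rangle)=D(Y(t))$, and \lemmaref{lemma:tree-reduction} gives the recurrence $D_{t+1}\le D_t/2+1$. Unrolling it yields $D_t-2\le(D_0-2)/2^t\le(D(G)-2)/2^t$, so after at most $\log D(G)$ rounds we reach $D_t\le 2$. At that point \lemmaref{lemma:star-step} takes over: one further round forces the diameter down to $1$, and thereafter $\langle X(t)\rangle$ is a single vertex or an edge, so $H_i\subseteq\langle X(t)\rangle$ keeps each $y_i$ inside that clique and the diameter stays at most $1$. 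Therefore after $\log D(G)+1$ rounds $D(Y)\le 1\le d$, and the processors decide, completing the argument.

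The only genuine subtlety --- and the reason \lemmaref{lemma:star-step} is needed as a separate ingredient rather than falling out of the recurrence --- is that $D_{t+1}\le D_t/2+1$ has fixed point $2$, not $1$: repeated halving shrinks the active hull down to a star but not past it, and it is precisely the perfect elimination ordering baked into $\phi$ that breaks the remaining symmetry and collapses the star onto one of its edges. Everything else is bookkeeping: assembling \lemmaref{lemma:validity-combined}, \lemmaref{lemma:tree-reduction}, and \lemmaref{lemma:star-step} with the identity between hull-diameter and set-diameter in trees.
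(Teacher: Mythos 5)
Your proposal is correct and follows essentially the same route as the paper's proof: validity via \lemmaref{lemma:validity-combined}, the halving recurrence from \lemmaref{lemma:tree-reduction} driven down to diameter two, and \lemmaref{lemma:star-step} to break the residual star symmetry via the perfect elimination ordering. Your unrolling $D_t-2\le(D_0-2)/2^t$ is in fact a slightly cleaner piece of bookkeeping than the paper's induction, but the substance is identical.
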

\begin{proof}
    The validity condition is trivially satisfied as $Y(t) \subseteq \langle X(t) \rangle$ for any $t \ge 0$ by \lemmaref{lemma:validity-combined}. Now we show by induction on $t$ that $D(X(t)) \le D(X(0))/2^t + 1$ for $t \le \lceil \log D(G) \rceil$. The base case $t=0$ is trivial. Suppose the claim holds for some $t < \log D(G)$. From \lemmaref{lemma:tree-reduction}, we get 
    \[
    D(X(t+1)) = D(Y(t)) \le \frac{D(X(t))}{2} + 1 \le \frac{D(X(0))/2^t + 1}{2} + 1 \le \frac{D(X(0))}{2^{t+1}} + 3/2.
    \]
    Since the diameter is integral, the diameter satisfies $\frac{D(X(0))}{2^{t+1}} +1 \le \frac{D(G)}{2^{t+1}} + 1$. Running the algorithm for $t = \lceil \log D(G) \rceil + 1$ iterations yields that the diameter of $X(t)$ is at most two on iteration $t$. 
    If $X(t)$ has diameter 1, the claim follows. Otherwise, the claim follows from \lemmaref{lemma:star-step}.
\end{proof}

\subsection{Fast monophonic approximate agreement on chordal graphs}

We now  present a fast monophonic approximate agreement algorithm on chordal graphs. 
To this end, we use the tree algorithm above on a suitable tree decomposition of the actual graph~$G$.

\begin{definition}
    Let $G$ be a graph, $T$ a tree and $\chi \colon V(T) \to 2^{V(G)}$ be a mapping. We say that the pair $(T, \chi)$ is a \emph{tree decomposition} of $G$ if the following conditions are satisfied:
    \begin{itemize}
        \item for all $v \in V(G)$ there exists $b \in V(T)$ such that $v \in \chi(b)$, 
        \item for all $e \in E(G)$ there exists $b \in V(T)$ such that $e \subseteq \chi(b)$, 
        \item if $v \in \chi(a) \cap \chi(b)$, then $v \in \chi(c)$ for all $c \in V(T)$ residing on the unique path $a \leadsto b$.
    \end{itemize}
The tree decomposition is a \emph{clique tree} if each $b \in V(T)$ induces a maximal clique $\chi(b)$ in~$G$.
\end{definition}

Chordal graphs can be characterised as those graphs having a clique tree~\cite[Proposition 12.3.11]{Diestel2010Textbook}.
In fact, if~$G$ is chordal, the tree~$T$ can always be chosen as a spanning tree of $G$'s clique graph, i.e., the graph whose nodes are the maximal cliques of~$G$ and whose edges join those cliques with nonempty intersection~\cite[Theorem 3.2]{Blair1993Cliquetrees}.
Unlike non-chordal graphs in which the number of maximal cliques can be exponential, the number of maximal cliques is at most linear in chordal graphs:

\begin{lemma}[{\cite{Berry2011Cliques}}]\label{lemma:cliquetree-upper}
If $G$ is a chordal graph, then it has a clique tree $(T,\chi)$ with $\lvert V(T)\rvert \in O(\lvert V(G)\rvert)$.
\end{lemma}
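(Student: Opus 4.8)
The plan is to bound the number of maximal cliques of a chordal graph by $\lvert V(G)\rvert$ and then invoke the clique-tree characterisation already recalled above. Since $G$ is chordal, it has a perfect elimination ordering; fix one and write $V(G) = \{v_1, \ldots, v_n\}$ with $v_1 \preceq \cdots \preceq v_n$. For each index $i$ define
\[
 C_i = \{v_i\} \cup \{\, v_j : j > i,\ \{v_i,v_j\} \in E \,\}.
\]
By definition of the elimination ordering, $v_i$ is simplicial in $\{v_i, \ldots, v_n\}$, so its neighbourhood within that set is a clique; hence $C_i$ is a clique of $G$.

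The key step is to show that every maximal clique of $G$ is of the form $C_i$ for some $i$. Given a maximal clique $C$, let $v_i = \min_\preceq C$. Every vertex of $C$ other than $v_i$ is a neighbour of $v_i$ with strictly larger index, so $C \subseteq C_i$; since $C_i$ is itself a clique and $C$ is maximal, $C = C_i$. Consequently $G$ has at most $n = \lvert V(G)\rvert$ distinct maximal cliques. (This is the classical Dirac/Fulkerson--Gross bound, and it is the place where chordality is genuinely used.)

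To conclude, recall that a chordal graph admits a clique tree, and in fact the underlying tree $T$ can be taken as a spanning tree of the clique graph of $G$, whose vertices are precisely the maximal cliques of $G$ (as noted above via \cite[Proposition 12.3.11]{Diestel2010Textbook} and \cite[Theorem 3.2]{Blair1993Cliquetrees}). For such a clique tree $(T,\chi)$ we have $\lvert V(T)\rvert$ equal to the number of maximal cliques of $G$, which by the previous paragraph is at most $n$; hence $\lvert V(T)\rvert \in O(\lvert V(G)\rvert)$. The main obstacle is the claim that every maximal clique equals some $C_i$: it is standard but relies essentially on the simplicial property of the elimination ordering and on choosing the correct index $i$ (the minimum of the clique in the ordering).
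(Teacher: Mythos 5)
Your proof is correct: bounding the number of maximal cliques by $\lvert V(G)\rvert$ via the minimum element of each maximal clique in a perfect elimination ordering is the classical Fulkerson--Gross/Dirac argument, and combining it with the clique-tree characterisation already quoted in the paper immediately gives $\lvert V(T)\rvert \le \lvert V(G)\rvert$. The paper itself offers no proof of this lemma --- it imports it by citation to \cite{Berry2011Cliques} --- so your argument simply supplies the standard proof of the cited fact, and it is complete.
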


For the purposes of our algorithm, we use a special kind of clique trees, which we call \emph{expanded}. A clique tree $(T, \chi)$ is \emph{expanded} if for each $\{a,b\} \in E(T)$ we have either $\chi(a) \subseteq \chi(b)$ or $\chi(b) \subseteq \chi(a)$; see \figureref{fig:tree-decomp}a--b for an example of a expanded clique tree.

\begin{figure}
    \begin{center}
    \includegraphics[page=3,scale=0.8]{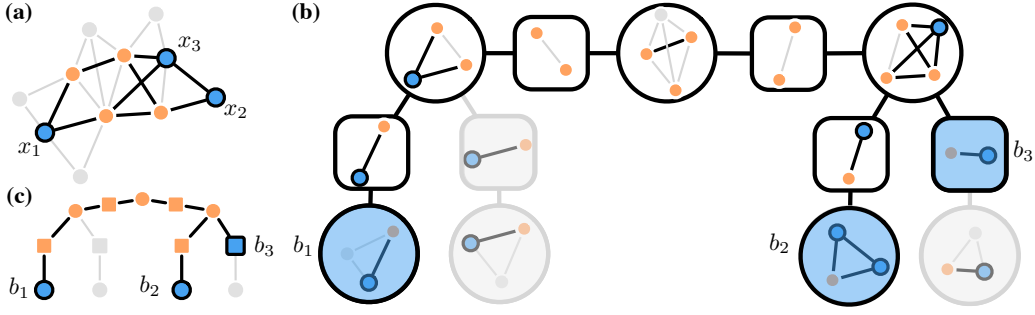}
    \end{center}
    \caption{Approximate agreement on chordal graphs via clique trees. (a) The chordal value graph $G$ and three input values $X = \{ x_1,x_2,x_3 \} \subseteq V(G)$. (b) An expanded clique tree $(T, \chi)$ of $G$. The bags $B = \{b_1,b_2,b_3\}$ satisfy $x_i \in \chi(b_i)$ and the bags are used as input for the approximate agreement algorithm on trees. The round bags are maximal cliques and the rectangular bags are the intersection of its neighbouring round bags, which are the minimal vertex separators in $G$. Note that a bag $b \in \langle B \rangle$ may contain vertices of $G$ outside the convex hull $\langle X \rangle$ of the initial input values $X$ (e.g., bag $b_1$ and the central bag). (c)~The graph $T$ on which we run the tree~algorithm.\label{fig:tree-decomp}}
\end{figure}

\begin{lemma}\label{lemma:expanded:upper}
Every chordal graph~$G$ has an expanded clique tree~$T$ with $\lvert V(T)\rvert = O(\lvert V(G)\rvert)$.
\end{lemma}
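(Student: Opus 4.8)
The plan is to build the expanded clique tree by \emph{subdividing} every edge of a small clique tree and labelling each new node with the intersection of its two neighbouring bags.

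First I would apply \lemmaref{lemma:cliquetree-upper} to fix a clique tree $(T,\chi)$ of $G$ with $\lvert V(T)\rvert \in O(\lvert V(G)\rvert)$. From it I construct $(T',\chi')$ as follows: the nodes of $T'$ are $V(T)$ together with one fresh node $c_e$ for each edge $e=\{a,b\}\in E(T)$; each original edge $e=\{a,b\}$ is replaced by the two edges $\{a,c_e\}$ and $\{c_e,b\}$; and I set $\chi'(b)=\chi(b)$ for $b\in V(T)$ while $\chi'(c_e)=\chi(a)\cap\chi(b)$. Subdividing the edges of a tree yields a tree, so $T'$ is a tree, and $\lvert V(T')\rvert = \lvert V(T)\rvert + \lvert E(T)\rvert = 2\lvert V(T)\rvert - 1 \in O(\lvert V(G)\rvert)$.

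Next I would verify that $(T',\chi')$ is a tree decomposition of $G$. The vertex- and edge-covering conditions carry over verbatim from $(T,\chi)$, since every bag of $T$ survives as a bag of $T'$ (these are the ``round'' bags of \figureref{fig:tree-decomp}, each a maximal clique of $G$). For the coherence condition I would show that for each $v\in V(G)$ the set $S_v=\{x\in V(T'):v\in\chi'(x)\}$ induces a connected subgraph of $T'$: by coherence of $(T,\chi)$ the set $S_v\cap V(T)$ induces a subtree $T_v$ of $T$, and a subdivision node $c_e$ with $e=\{a,b\}$ lies in $S_v$ exactly when $v\in\chi(a)\cap\chi(b)$, that is, exactly when both $a,b\in V(T_v)$, that is, exactly when $e\in E(T_v)$; hence $S_v$ is precisely the subdivision of $T_v$ sitting inside $T'$, which is connected. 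Finally, for expandedness, every edge of $T'$ has the form $\{a,c_e\}$ with $e=\{a,b\}\in E(T)$, and $\chi'(c_e)=\chi(a)\cap\chi(b)\subseteq\chi(a)=\chi'(a)$, so one endpoint's bag always contains the other's. I would also observe that every bag of $T'$ is a clique of $G$ (being either a maximal clique or an intersection of two cliques), so $(T',\chi')$ is an expanded clique tree in the intended sense.

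The bookkeeping is elementary; the only point that warrants care — and the step I would treat as the crux — is the coherence check, i.e.\ confirming that attaching the separator bag $\chi(a)\cap\chi(b)$ to the node subdividing $\{a,b\}$ preserves the property that the bags containing any fixed $v$ form a subtree. Phrasing this through the subtree $T_v$ as above makes it immediate, so no real obstacle remains.
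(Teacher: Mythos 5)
Your construction is exactly the one the paper uses: subdivide each edge of the clique tree from \lemmaref{lemma:cliquetree-upper} and label the new node with the intersection of its two neighbouring bags. The proposal is correct and in fact more careful than the paper's own (very terse) proof, since you explicitly verify the coherence and expandedness conditions that the paper leaves implicit.
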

\begin{proof}An expanded clique tree can be constructed from from any clique tree $(T,\chi)$ of $G$ as follows:
\begin{align*}
    V(T') = V(T) \cup E(T) \quad \textrm{ and } \quad E(T') = \bigcup_{\substack{e = \{b_0,b_1\} \in E(T)}} \big\{ \{b_0,e\}, \{e,b_1\} \big\}
    \,.
\end{align*}
The expanded map $\chi' : V(T') \to V(G)$ is defined as
\[
 \chi(b) = \begin{cases}
    \chi(b) & \textrm{if } x \in V(T),\\
     \bigcap_{x \in b } \chi(x) & \textrm{if } x \in E(T).
 \end{cases}
 \]
The upper bound on $\lvert V(T)\rvert$ follows from \lemmaref{lemma:cliquetree-upper}.
\end{proof}

\subsubsection{The algorithm}
Let $G = (V,E)$ denote the chordal value graph, $(T, \chi)$ an expanded clique tree of~$G$, and $\mathbf{A}$ the approximate agreement algorithm on trees given by \theoremref{thm:trees}. 
Given an input $x_i(0) \in V(G)$ on the graph $G$, processor $i \in P \setminus F$ starts by choosing any \emph{bag} $b_i(0) \in V(T)$ such that the $x_i(0) \in b_i(0)$; see \figureref{fig:tree-decomp}. 
In iteration $t \geq 1$, every processor $i \in P \setminus F$ performs the following:
\begin{enumerate}
    \item Broadcast $x_i(t)$ and $b_i(t)$ to all other processors.
    \item Simulate one step of $\mathbf{A}$ on the $b(\cdot)$ values and set $b_i(t+1) = \mathbf{A}\left(b_{0,i}(t), \ldots, b_{n-1,i}(t)\right)$.
    \item Compute the safe area $H_{i}^G$ from the received values $x_{ij}(t)$.
    \item Set $x_i(t+1)$ to an arbitrarily chosen element of $\chi(b_i(t+1)) \cap H_i^G$.
\end{enumerate}

Since the $b_i(\cdot)$ values are updated using the algorithm $\mathbf{A}$, these values converge onto a single edge $\{a,b\} \in E(T)$ in the tree $T$.
As $\chi(a) \cup \chi(b)$ is a clique due to the expandedness of~$T$, the output values $x(\cdot)$ will have diameter at most one in $G$ assuming that $x_i(t+1)$ is well-defined for each $i \in P_i \setminus F$, that is, $\chi(b_i(t+1)) \cap H_i^G \neq \emptyset$. Showing this is the main challenge of the correctness~proof.

Consider processor $i \in P \setminus F$ and let $t \ge 0$.
We will show that $b\cap H_i^G \neq \emptyset$ for all $b\in H_i^T$, where $H_i^T$ is the intersection of convex sets computed by algorithm $\vec A$.
For each $v \in V(G)$ we define $\rho(v)$ to be the set of pairs $(j,k)\in P_i\times P_i$ such that some chordless path from~$x_j$ to~$x_k$ visits vertex~$v$. We say that a vertex in the tree $T$ is an interior vertex if it is not a leaf in $T$.

\begin{lemma}\label{lemma:intersecting-paths}
Let $b \in H_i^T$ be an interior node of~$H_i^T$.
There exists 
 $v \in \chi(b)$ such that $\lvert\rho(v)\vert \geq f+1$.
\end{lemma}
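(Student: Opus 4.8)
The goal is to find a vertex $v \in \chi(b)$ lying on chordless paths between the inputs $x_j, x_k$ of at least $f+1$ ordered pairs $(j,k) \in P_i \times P_i$. The natural strategy is to exploit the structure of the clique tree together with the fact that $b$ is an interior node, hence a minimal vertex separator in $G$. Indeed, in an expanded clique tree, an interior node $b$ that is a separator bag splits $T$ (and hence $G$) into two sides, and the set $\chi(b)$ is precisely the set of vertices lying on every chordless path crossing from one side to the other --- more carefully, every chordless $x_j \leadsto x_k$ path with $x_j, x_k$ on opposite sides of $b$ must pass through $\chi(b)$, and for such a path a well-chosen vertex of $\chi(b)$ does lie on it. The first step of the plan is therefore to make precise the claim that $b$ being interior in $H_i^T$ forces the inputs witnessing $b \in H_i^T$ to be genuinely spread across both components of $T \setminus b$.

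First I would unpack the definition of $H_i^T$. Recall $H_i^T = \bigcap_{J \in \binom{P_i}{|P_i|-f}} \langle \mathcal{V}_i^T(J) \rangle$ where $\mathcal{V}_i^T(J) = \{ b_j : j \in J \}$ is the set of bags received from processors in $J$ and $\langle \cdot \rangle$ is taken in the tree convexity of $T$ (the geodesic/monophonic convexity on a tree, i.e., the subtree spanned by a set of nodes). So $b \in H_i^T$ means: for every set $J$ of $|P_i|-f$ processors, $b$ lies on the minimal subtree of $T$ spanning $\{ b_j : j \in J \}$. Since $b$ is an interior node of this subtree for at least one such $J$ (in fact the hypothesis is that $b$ is interior in $H_i^T$, which is itself a subtree, so $b$ is interior in every spanning subtree of a set that includes the extremal bags of $H_i^T$), removing $b$ from $T$ disconnects $\{ b_j : j \in J\}$ into pieces landing in at least two components $T_1, T_2$ of $T \setminus b$. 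The key counting observation: if every $J$ of size $|P_i| - f \ge n - 2f$ must have bags on both sides of $b$, then in particular there are at least $f+1$ processors in $P_i$ whose bag lies (weakly) on the $T_1$-side and at least $f+1$ on the $T_2$-side; otherwise some $J$ avoiding one side would exist (here one uses $|P_i| - f \ge n - 2f > 2f$, equivalently $n > 4f$, or more carefully the bound already in force). Let me denote these two processor sets $Q_1, Q_2$ with $|Q_1|, |Q_2| \ge f+1$.

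The second step connects tree-side membership back to chordless paths in $G$. For $j \in Q_1$ and $k \in Q_2$, the inputs satisfy $x_j \in \chi(b_j)$ with $b_j$ on the $T_1$-side, and $x_k \in \chi(b_k)$ on the $T_2$-side. Since $b$ separates $T_1$ from $T_2$ in the clique tree, $\chi(b)$ is a clique separator of $G$ separating the $G$-vertices appearing only in $T_1$-bags from those appearing only in $T_2$-bags; therefore any $x_j \leadsto x_k$ path in $G$ meets $\chi(b)$, and in particular some chordless such path meets $\chi(b)$ --- choose the chordless path passing through a vertex $v \in \chi(b)$ that is adjacent appropriately, using that $\chi(b)$ is a clique so one can route through a single vertex of it and stay chordless (shortcut through the clique). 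Now pick $j_0 \in Q_1$. For each $k \in Q_2$, some vertex $v_k \in \chi(b)$ lies on a chordless $x_{j_0} \leadsto x_k$ path, i.e., $(j_0, k) \in \rho(v_k)$. Since $|\chi(b)| \le \omega$ and $|Q_2| \ge f+1$, if $\omega$ were $1$ this is immediate; in general I would instead fix both endpoints to range and apply pigeonhole: consider all pairs $(j,k) \in Q_1 \times Q_2$, there are at least $(f+1)^2$ of them, each contributes $(j,k) \in \rho(v)$ for some $v \in \chi(b)$, so some single $v \in \chi(b)$ collects at least $(f+1)^2/\omega$ pairs. To get the clean bound $f+1$ I would be more careful: fix $j_0 \in Q_1$ and $k_0 \in Q_2$; the subtree spanned by $\{b_{j_0}\} \cup \{ b_k : k \in Q_2\}$ has $b$ as a cut node, and one shows a common vertex of $\chi(b)$ serves all the $x_{j_0} \leadsto x_k$ routes because $\chi(b)$, being a clique, lets us fix one vertex $v^\ast \in \chi(b)$ and route every crossing path through $v^\ast$ while preserving chordlessness. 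Then $(j_0, k) \in \rho(v^\ast)$ for all $k \in Q_2$, giving $|\rho(v^\ast)| \ge |Q_2| \ge f+1$.

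\textbf{Main obstacle.} The delicate point is the claim that a \emph{single} vertex $v^\ast \in \chi(b)$ can be placed on a chordless path for every crossing pair simultaneously --- a priori different pairs might force different vertices of the separator, and naively routing through a fixed $v^\ast$ could create chords. The resolution must use that $\chi(b)$ is a \emph{clique} minimal separator and some minimality/simpliciality argument about chordless paths in chordal graphs (e.g.\ a chordless path crossing a clique separator enters and leaves it at vertices that can be collapsed to one, since any two vertices of the clique are adjacent). Getting this exactly right, and correctly handling the case where $x_j$ or $x_k$ already lies inside $\chi(b)$ or on the ``wrong'' side, is where the real work of the proof lies; the counting over $\binom{P_i}{|P_i|-f}$ to extract the sets $Q_1, Q_2$ of size $\ge f+1$ is routine by comparison.
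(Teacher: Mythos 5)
There is a genuine gap, and it sits exactly where you flagged it: the claim that a \emph{single} vertex $v^\ast \in \chi(b)$ admits a chordless path for every crossing pair is not proved, and it is false in general (e.g.\ with a two-vertex separator $\{v_1,v_2\}$, one crossing pair may be forced through $v_1$ by adjacency constraints while another is forced through $v_2$). Your fallback pigeonhole over the $(f+1)^2$ pairs in $Q_1\times Q_2$ only yields $(f+1)^2/\omega$ pairs on some vertex, which is $\geq f+1$ only when $f+1\geq\omega$ --- not guaranteed. Moreover, the preliminary step extracting two sides each containing $\geq f+1$ processors is itself shaky: when $T\setminus b$ has three or more branches, a set $J$ avoiding one sparsely-populated branch can still span $b$, so $b\in H_i^T$ does not force every branch (or even some fixed pair of branches) to carry $f+1$ bags; your argument also leans on $n>4f$, which is stronger than the hypothesis $n>(\omega+1)f$ in general.

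The paper closes the argument with a different, and simpler, pigeonhole. Root $T$ at $b$ with branches $T_1,\dots,T_m$, set $V_0=\chi(b)$ and $V_r=\bigcup_{a\in V(T_r)}\chi(a)\setminus V_0$, and let $Q_r$ be the processors with inputs in $V_r$. If some input already lies in $\chi(b)$, take $v$ to be that input: all $n-f\geq f+1$ pairs ending at it lie in $\rho(v)$, done. Otherwise at least two branches are occupied (else $H_i^T$ would sit inside one branch, contradicting $b\in H_i^T$), so \emph{every} $j\in P_i$ can be paired with some $k$ in a different branch; since $V_0$ separates the branches, every chordless $x_j\leadsto x_k$ path meets $V_0$, so each of these $\lvert P_i\rvert = n-f > \omega f$ pairs is charged to some vertex of $V_0$. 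With $\lvert V_0\rvert\leq\omega$ bins, some $v$ receives at least $f+1$ pairs. No uniform choice of $v^\ast$, and no lower bound on the sizes of the individual $Q_r$, is ever needed. Your geometric setup (the clique-tree separator structure) matches the paper's; what is missing is the realisation that one pair per processor already gives $n-f>\omega f$ charged pairs, so the Helly-number bound on $\lvert\chi(b)\rvert$ does the rest.
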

\begin{proof}
Let $T_1, \dots, T_m$ be the branches of~$T$ when rooted at~$b$.
Since~$b$ is not a leaf, we have $m \geq 2$.
Define the sets of vertices $V_0 = \chi(b)$ of graph $G$ and
\[
V_r
=
\bigcup_{a\in V(T_r)} \chi(a) \setminus V_0
\]
for $1\leq r \leq m$.
Then the nonempty sets among $V_0,V_1,\dots,V_m$ form a partition of the set~$V(G)$.
Furthermore, there are no edges between any~$V_r$ and~$V_s$ with $1\leq r < s\leq m$.

Now set $Q_r = \{ j \in P_i \colon x_j \in V_r \}$ to be the set of processors heard from by~$i$ with values in~$V_r$ for $0\leq r \leq m$.
If $Q_0 \neq \emptyset$ then we are done:
In fact, we can choose any $k\in Q_0$ and $v = x_k$.
There are at least $\lvert P_i\rvert = n-f \geq f+1$ processors $j\in P_i$ whose all paths from~$x_j$ to~$x_k$ visit vertex $v = x_k$.

Suppose that $Q_0 = \emptyset$.
Note that at least two $Q_r$ are nonempty:
If~$Q_r$ with $r\geq 1$ is the only nonempty one, then $H_i^G \subseteq V_r$ and thus $H_i^T\subseteq T_r$, a contradiction to $b \in H_i^T$.
But then, for every~$r$ and every $j\in Q_r$ there exists some $s$ and $k\in Q_s$ with $r\neq s$.
Since~$V_0$ separates~$V_r$ and~$V_s$, all paths, in particular all chordless paths, from~$x_j$ to~$x_k$ visit some vertex $v\in V_0$.
By the pigeonhole principle, since there are at $\lvert P_i\rvert = n-f > \omega f $ such pairs $(j,k)$ and there are at most $\lvert V_0\rvert = \lvert \chi(b)\rvert \leq \omega$ vertices in~$V_0$, there exists one~$v$ for which there exist at least $f+1$ such pairs $(j,k)$.
\end{proof}

\begin{lemma}\label{lemma:chordal-validity}
Let $b \in H_i^T$ be an interior node of~$H_i^T$.
Then $\chi(b) \cap H_i^G \neq \emptyset$.
\end{lemma}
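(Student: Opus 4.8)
The plan is to combine the combinatorial consequence of Lemma~\ref{lemma:intersecting-paths} with the way the safe area $H_i^G$ is constructed from $\omega$-wise intersections. By Lemma~\ref{lemma:intersecting-paths}, since $b$ is an interior node of $H_i^T$, there is a vertex $v \in \chi(b)$ with $\lvert \rho(v)\rvert \geq f+1$; that is, there are at least $f+1$ pairs $(j,k) \in P_i \times P_i$ such that some chordless path from $x_j$ to $x_k$ passes through $v$. The goal is to show $v \in H_i^G$, which suffices since $v \in \chi(b)$. Recall $H_i^G = \bigcap \mathcal{K}_i^G$ where $\mathcal{K}_i^G = \{\langle \mathcal{V}_i(J)\rangle : J \in \binom{P_i}{|P_i|-f}\}$ and $\langle\cdot\rangle$ is the chordless-path (monophonic) convex hull in $G$.

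First I would fix an arbitrary $J \in \binom{P_i}{|P_i|-f}$ and argue $v \in \langle \mathcal{V}_i(J)\rangle$. The key counting step: since $\lvert\rho(v)\rvert \geq f+1$ and $|P_i \setminus J| = f$, there must exist a pair $(j,k) \in \rho(v)$ with $j,k \in J$ — indeed, removing the $f$ processors outside $J$ can destroy at most $f$ of the $f+1$ witnessing pairs (a pair is destroyed only if at least one of its two members is removed; more carefully, one counts that the pairs avoiding $P_i\setminus J$ number at least $\lvert\rho(v)\rvert - f \geq 1$, using that each removed processor appears in some set of pairs, though here the cleanest argument is simply: if every pair in $\rho(v)$ met $P_i\setminus J$, pick for each pair a member in $P_i\setminus J$; this does not immediately give a contradiction, so instead I would phrase Lemma~\ref{lemma:intersecting-paths} or this step in terms of a set of $f+1$ \emph{processors} each of which has all-or-some path through $v$, or re-examine: actually the right formulation is that there are $f+1$ pairs, and since $|P_i \setminus J| \le f$, at least one pair $(j,k)$ has both coordinates in $J$ — this holds because the $f+1$ pairs, if we only needed them vertex-disjoint, but they need not be; the honest fix is that Lemma~\ref{lemma:intersecting-paths} should be read as giving $f+1$ processors $j$ such that a chordless path from $x_j$ to a common $x_k$ visits $v$, matching its proof where in the $Q_0\neq\emptyset$ case it literally says "at least $f+1$ processors $j$"). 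Granting that, once $j,k \in J$ we have $x_j, x_k \in \mathcal{V}_i(J)$, and since $v$ lies on a chordless path between $x_j$ and $x_k$, the definition of the monophonic convexity gives $v \in \langle\{x_j,x_k\}\rangle \subseteq \langle \mathcal{V}_i(J)\rangle$.

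Since $J$ was arbitrary, $v \in \bigcap_{J} \langle\mathcal{V}_i(J)\rangle = H_i^G$, and therefore $v \in \chi(b) \cap H_i^G$, proving the lemma. The main obstacle is the counting step in the middle paragraph: one must be careful about whether Lemma~\ref{lemma:intersecting-paths} delivers $f+1$ \emph{pairs} or $f+1$ \emph{processors sharing a common partner}, since only the latter (or a vertex-disjointness property of the pairs) guarantees that discarding $f$ processors still leaves a surviving witness with both endpoints in $J$. Inspecting the proof of Lemma~\ref{lemma:intersecting-paths}, the $Q_0 = \emptyset$ case produces $f+1$ pairs through a pigeonhole over $|\chi(b)| \le \omega$ separator vertices, but those pairs may share endpoints; the cleanest resolution is to observe that each such pair contributes an endpoint, and since any $J$ of size $|P_i|-f$ excludes only $f$ processors, and $n - f > \omega f \ge (f+1) + \dots$, one has enough room — or, more robustly, to strengthen Lemma~\ref{lemma:intersecting-paths} to assert the existence of $v$ and a set $S \subseteq P_i$ with $|S| \ge f+1$ and a fixed $k$ such that every $j \in S$ has a chordless $x_j \leadsto x_k$ path through $v$; this is what the proof actually shows in the $Q_0 \neq \emptyset$ branch and can be arranged in the $Q_0 = \emptyset$ branch by taking $k$ to be an endpoint realizing the maximum number of partners. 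I would adopt that strengthened formulation so that, for any $J$ of co-size $f$, at least one $j \in S \cap J$ remains with $k$ possibly also needing to be in $J$ — handled by choosing $k$ from $Q_0$-type vertices or folding $k$ into the count as well.
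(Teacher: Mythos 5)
You take the same overall route as the paper: fix the vertex $v\in\chi(b)$ supplied by Lemma~\ref{lemma:intersecting-paths} and show $v\in\langle\mathcal{V}_i(J)\rangle$ for every $J\in\binom{P_i}{\lvert P_i\rvert-f}$, hence $v\in H_i^G$. Your worry about ``$f+1$ pairs versus $f+1$ processors'' is a legitimate reading of the lemma statement, but it is not the real difficulty: the pigeonhole in the proof of Lemma~\ref{lemma:intersecting-paths} assigns one pair to each $j\in P_i$, so the $f+1$ pairs landing on $v$ automatically have $f+1$ distinct first coordinates, and one such processor $j^*$ survives into any $J$ of co-size $f$. The genuine gap is in what you do next: your argument needs a pair of $\rho(v)$ with \emph{both} endpoints in $J$, and none of your proposed repairs achieves this --- with $f+1$ pairs, deleting $f$ processors can destroy all of them; with $f+1$ processors sharing a common partner $k$, that single $k$ may itself be among the $f$ deleted, and ``folding $k$ into the count'' is not a proof.

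The missing idea is that the second endpoint is not taken from $\rho(v)$ at all; it comes from the hypothesis $b\in H_i^T$, which you never use beyond invoking Lemma~\ref{lemma:intersecting-paths}. Since $b\in H_i^T\subseteq\langle\mathcal{V}_i^T(J)\rangle_T$ for the \emph{same} set $J$, the bags of $J$ cannot all lie in the single branch $T_r$ containing $x_{j^*}$, so there is some $k^*\in J$ whose value is separated from $x_{j^*}$ by the clique $\chi(b)$; the paper then argues that a chordless $x_{j^*}\leadsto x_{k^*}$ path through $v$ exists and concludes $v\in\langle\mathcal{V}_i(J)\rangle$. Without this per-$J$ use of $b\in H_i^T$, counting alone cannot rule out that all of $J$'s values sit in one branch, in which case no pair of $\rho(v)$ with both endpoints in $J$ need exist. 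So your plan, as written, does not close.
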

\begin{proof}
As in the proof of \lemmaref{lemma:intersecting-paths}, let $T_1, \dots, T_m$ be the branches of~$T$ when rooted at~$b$.
Define the sets~$V_r$ of vertices and~$Q_r$ of processors in the same fashion.
Fix some $v \in \chi(b)$ such that $\lvert \rho(v)\rvert \geq f+1$; by \lemmaref{lemma:intersecting-paths} such vertex $v \in V(G)$ exists.
To prove the lemma, we show that for any $J \subseteq {P_i \choose \lvert P_i\lvert - f }$ it holds that $v \in \langle \mathcal{V}_i^G(J) \rangle$.
Consider the set
\[
    \Psi = \{ j \in P_i : (x_j, \ldots, x_k) \in \rho(v) \} \subseteq P_i
\]
of processors such that $x_j \in V(G)$ starts a (possibly 0-length) path that intersects $v \in \chi(b)$.
Since $\lvert\rho(v)\rvert \geq f+1$ and $J = P_i \setminus A$ for some $A \subseteq P_i$ with $\lvert A\rvert=f$, there exists some $j^* \in \Psi \setminus A = \Psi \cap J$.
If $x_{j^*} = v$, then trivially $v \in \langle \mathcal{V}_i^G(J) \rangle$.
If $x_{j^*} \in V_r$ for some branch~$T_r$ with $r\geq 1$, then there exists some $k^* \in J \setminus Q_r$ since otherwise $b \notin \langle \mathcal{V}_i(J) \rangle_T$.

    Now $(j^*, k^*)\in Q_r\times Q_s$ and there exists a chordless path $(x_{j^*}=u_0, \ldots, v, \ldots, u_\ell = x_{k^*})$ that visits $v$. 
    By the definition of the monophonic convex hull, all vertices  $u_0, \ldots, u_\ell \in \langle \mathcal{V}_i^G(J) \rangle$ as they reside on a shortest path between $x_{j^*}, x_{k^*} \in \mathcal{V}_i^G(J)$.
\end{proof}

We are now ready to prove the main result on chordal graphs:

\begin{theorem}\label{thm:chordal-algorithm}
Let $G = (V,E)$ be a chordal graph.
If $n > (\omega(G)+1)f$, then monophonic approximate agreement on~$G$ can be solved in $O(\log \lvert V\rvert)$ asynchronous rounds.
\end{theorem}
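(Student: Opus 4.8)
The plan is to run the tree algorithm $\mathbf{A}$ of \theoremref{thm:trees} on the bag coordinate while simultaneously contracting the value coordinate with the abstract safe-area computation, and then to verify validity, agreement, and — the crux — that every value update is well-defined. Concretely, I would fix once and for all an expanded clique tree $(T,\chi)$ of $G$ with $\lvert V(T)\rvert = O(\lvert V(G)\rvert)$, which exists by \lemmaref{lemma:expanded:upper}, have each correct processor run the four-step iteration for $t^\star = O(\log D(T))$ rounds (the bound from \theoremref{thm:trees} applied to $T$ with $d=1$), and output $y_i = x_i(t^\star)$. Since $G$ is connected on at least two vertices, $\omega(G)\ge 2$ and hence $n>3f$, so the Byzantine asynchronous round model guarantees hold; moreover \lemmaref{lemma:validity-combined} applies to both convexity spaces in play: the chordless-path convexity of $G$ (Helly number $\omega(G)$, using $n>(\omega(G)+1)f$) and the tree convexity of $T$ (Helly number $2$, using $n>3f$). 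Writing $B(t)=\{b_j(t):j\in P\setminus F\}$, it follows that for every round $t$ and every correct $i$, $H_i^G(t)$ is a nonempty subset of $\langle X(t)\rangle$, $H_i^T(t)$ is a nonempty subtree of $T$ contained in $\langle B(t)\rangle$, and $b_i(t+1)=\phi(H_i^T(t))\in H_i^T(t)$ is well-defined.

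The main work is an induction on $t$ maintaining the invariants (I1) $x_j(t)\in\chi(b_j(t))$ for every correct $j$, and (I2) $\langle X(t)\rangle\subseteq\langle X(0)\rangle$; both hold trivially at $t=0$. For the inductive step the only nonroutine point — the heart of the proof, as the text already flags — is that step~4 is well-defined, i.e.\ $\chi(b_i(t+1))\cap H_i^G(t)\neq\emptyset$ for every correct $i$. Since $b_i(t+1)\in H_i^T(t)$, it suffices to prove $\chi(b)\cap H_i^G(t)\neq\emptyset$ for \emph{every} $b\in H_i^T(t)$. If $b$ is an interior node of $T$, this is exactly \lemmaref{lemma:intersecting-paths} together with \lemmaref{lemma:chordal-validity}: the hypothesis $n>\omega(G)f$ forces at least $f+1$ pairs of received input values whose chordless paths all cross a single vertex $v$ among the at most $\omega(G)$ vertices of $\chi(b)$, and that $v$ then lies in $\langle\mathcal{V}_i^G(J)\rangle$ for every $J$, hence in $H_i^G(t)$. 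If $b$ is a leaf of $T$, the branching argument of \lemmaref{lemma:intersecting-paths} degenerates, but one recovers it directly: a leaf of a tree lies in a tree-convex hull only if it is one of the generating vertices, so $b\in H_i^T(t)\subseteq\langle\mathcal{V}_i(J)\rangle_T$ for every $J\in\binom{P_i}{\lvert P_i\rvert-f}$ forces at least $f+1$ processors of $P_i$ to have reported bag $b$; at least one is correct and, by (I1), reports a $G$-value in $\chi(b)$ — which is precisely the condition ``$Q_0\neq\emptyset$'' that makes the proofs of \lemmaref{lemma:intersecting-paths}--\lemmaref{lemma:chordal-validity} go through without $b$ being interior. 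Granting this, $x_i(t+1)\in\chi(b_i(t+1))\cap H_i^G(t)$ is a legitimate choice with $x_i(t+1)\in H_i^G(t)\subseteq\langle X(t)\rangle\subseteq\langle X(0)\rangle$ and $x_i(t+1)\in\chi(b_i(t+1))$, so (I1)--(I2) persist. In particular $Y=X(t^\star)\subseteq\langle X(0)\rangle=\langle X\rangle$, and since the chordless-path convexity has Carathéodory number at most two, every output value lies on a chordless path between two inputs; this is the validity requirement.

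It remains to argue agreement and the round bound. The $b$-coordinate evolves exactly as algorithm $\mathbf{A}$ on the tree $T$ (the value coordinate does not affect $\mathbf{A}$, and $\mathbf{A}$ already tolerates arbitrary Byzantine inputs on that coordinate), so by \theoremref{thm:trees} after $t^\star=O(\log D(T))$ rounds the bags $\{b_i(t^\star):i\text{ correct}\}$ have diameter at most one in $T$ and hence all lie within a single edge $\{a,c\}\in E(T)$ (or coincide). Expandedness makes one of $\chi(a),\chi(c)$ contain the other, so $\chi(a)\cup\chi(c)$ is a clique of $G$; by (I1) every correct value $x_i(t^\star)$ lies in it, so $Y$ is contained in a clique and $D(Y)\le 1\le d$. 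Finally $D(T)<\lvert V(T)\rvert=O(\lvert V(G)\rvert)$ by \lemmaref{lemma:expanded:upper}, giving $t^\star=O(\log\lvert V\rvert)$ as claimed. I expect the well-definedness step — and in particular handling a leaf bag, where the counting argument of \lemmaref{lemma:intersecting-paths} does not directly apply — to absorb essentially all of the effort.
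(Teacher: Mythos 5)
Your proposal is correct and follows essentially the same route as the paper: fix an expanded clique tree, run the tree algorithm of \theoremref{thm:trees} on the bag coordinate, update the $G$-value inside $\chi(b_i(t+1)) \cap H_i^G$, and use \lemmaref{lemma:intersecting-paths} and \lemmaref{lemma:chordal-validity} to show this intersection is nonempty, with expandedness turning the final tree edge into a clique of $G$. Your explicit handling of the case where the bag is a leaf of $T$ (a leaf lies in a tree hull only as a generator, so at least $f+1$ processors of $P_i$ must report it, one of them correct, which yields $Q_0 \neq \emptyset$) is a worthwhile addition, since the paper states \lemmaref{lemma:chordal-validity} only for interior nodes yet invokes it in the theorem's proof without addressing that case.
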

\begin{proof}
  Let $X^G = X$ be the set of input values in $G$ and $X^T = \{b_i(0) : i \in P \setminus F \}$ be the set of initial bags of the expanded clique tree $(T,\chi)$ of $G$, where $x_i(0) \in \chi(b_i)$.
By \theoremref{thm:trees}, the approximate agreement algorithm on $T$ converges in $O(\log \lvert X^T \rvert)$ iterations.
By \lemmaref{lemma:expanded:upper}, we have $\log \lvert X^T \rvert = O(\log \lvert X^G\rvert)$.
The algorithm on~$T$ thus converges in $O(\log \lvert X^G \rvert)$ iterations.

The set of output vertices in~$T$ is $\{b,b'\} \subseteq E(T)$. 
Since $(T,\chi)$ is an expanded clique tree, we have that $\chi(b') \subseteq \chi(b)$ or vice versa, that is, $\chi(b) \cup \chi(b')$ is a clique.
Validity is satisfied as each non-faulty processor $i \in P \setminus F$ outputs a value $y_i \in \chi(b_i)$ such that $y_i \cap H_i^G$, which exists by \lemmaref{lemma:chordal-validity}.
Agreement is satisfied, as all output values reside in the same clique, and hence, the diameter of the output values is at most one. 
\end{proof}

Finally, we observe that the above implies that \emph{geodesic} approximate agreement can be solved in Ptolemaic graphs, as geodesic and monophonic convexities are identical on these graphs~\cite{Farber1986Convexity}.
\begin{corollary}
    If $G = (V,E)$ is a connected Ptolemaic graph and $n > (\omega(G)+1)f$, then geodesic approximate agreement on $G$ is solvable in $O(\log |V|)$ asynchronous rounds.
\end{corollary}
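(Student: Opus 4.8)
The plan is to observe that on a Ptolemaic graph the geodesic and monophonic convexity spaces coincide, so that geodesic approximate agreement on such a graph is literally an instance of the monophonic approximate agreement problem on a chordal graph, to which \theoremref{thm:chordal-algorithm} applies verbatim.

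Concretely, I would proceed as follows. First, recall that a Ptolemaic graph is by definition chordal and distance-hereditary; in particular $G$ is chordal, so the resilience threshold $n > (\omega(G)+1)f$ and the $O(\log\lvert V\rvert)$ round bound of \theoremref{thm:chordal-algorithm} are available for $G$. Second, invoke the structural characterisation of Farber and Jamison~\cite{Farber1986Convexity}: on a Ptolemaic graph the chordless-path (monophonic) convex hull operator and the shortest-path (geodesic) convex hull operator agree on every vertex set. Consequently, for the set $X$ of non-faulty input values the hull $\langle X\rangle$ is the same subset of $V$ under either convexity, so the validity predicate ``$y$ lies on a chordless path between input vertices'' and the validity predicate ``$y$ lies on a shortest path between input vertices'' carve out exactly the same feasible region. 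Third, observe that the agreement predicate $D(Y)\le d$ is a purely metric condition in $G$ that does not refer to any convexity at all, and that the clique number $\omega(G)$ is an intrinsic graph invariant; hence the geodesic approximate agreement problem on $G$ has precisely the same input set, the same feasible outputs, and the same parameter $\omega(G)$ as the monophonic problem on the chordal graph $G$. Applying \theoremref{thm:chordal-algorithm} then yields an algorithm meeting both constraints within $O(\log\lvert V\rvert)$ asynchronous rounds whenever $n > (\omega(G)+1)f$; since that algorithm already produces outputs of diameter at most one, any $d\ge 1$ is covered.

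I do not expect a genuine obstacle here, as the statement is a direct corollary. The only point that warrants care is checking that every place where the algorithm of \sectionref{sec:chordal} invokes the monophonic hull (in particular the safe area $H_i^G$ of each processor, \lemmaref{lemma:intersecting-paths}, and \lemmaref{lemma:chordal-validity}) may, on a Ptolemaic graph, be read as invoking the geodesic hull without any change to the algorithm or its analysis; this is immediate from the coincidence of the two hull operators. Equivalently, one may run the algorithm ``natively'' with geodesic safe areas, and every lemma in \sectionref{sec:chordal} goes through word for word.
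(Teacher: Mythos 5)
Your proposal is correct and follows exactly the paper's own argument: the paper likewise derives the corollary from \theoremref{thm:chordal-algorithm} by noting that Ptolemaic graphs are chordal and that the geodesic and monophonic convexities coincide on them, citing~\cite{Farber1986Convexity}. Your additional care in checking that every invocation of the monophonic hull in \sectionref{sec:chordal} can be read geodesically is a sensible (if implicit in the paper) elaboration of the same reasoning.
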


\section{Byzantine lattice agreement on cycle-free semilattices}\label{sec:lattice}

The abstract convex geometry framework can be applied to solve agreement problems on other combinatorial structures. As an example, we consider asynchronous Byzantine lattice agreement on a special class of semilattices. Let $\mathbb{L} = (V, \oplus)$ be a semilattice and $\le$ its natural partial order. The \emph{comparability graph} of $\le$ is the graph $G = (V,E)$ where $\{u,v\} \in E$ if $u\neq v$ and $u$ and $v$ are comparable. A partial order $\le$ is \emph{cycle-free} if the comparability graph is chordal~\cite{Ma1991cycle-free}. Similarly, we say $\mathbb{L}$ is cycle-free if $\le$ is cycle-free. See \figureref{fig:lattice-examples} for examples.

\begin{figure}
    \begin{center}
    \includegraphics[page=2,scale=0.95]{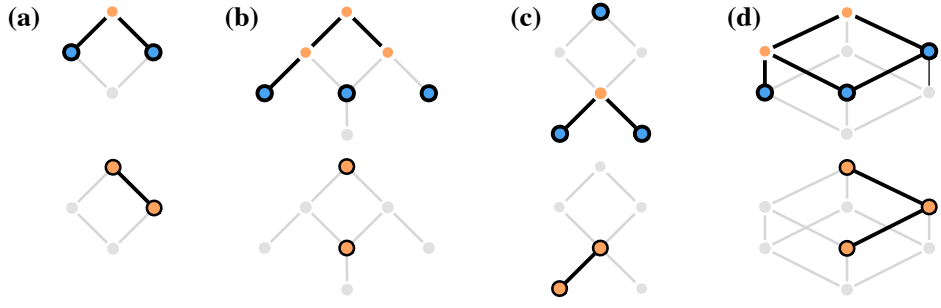}
    \end{center}
    \caption{Examples of algebraic convex sets on semilattices. The figures show the Hasse diagrams of these semilattices. In the top row, the blue vertices are input values and the orange vertices are contained in the hull of the blue vertices. The bottom row shows feasible outputs for these cases (i.e.\ chains contained in the convex hull). The semilattices (a)--(b) are cycle-free, whereas (c) and (d) respectively contain an induced 4- and 6-cycle in the comparability graph. \label{fig:lattice-examples}}
\end{figure}

\begin{lemma}\label{lemma:cycle-free-order}
    Let $\mathbb{L} = (V,\bigoplus)$ be a cycle-free semilattice. There exists an elimination order~$\preceq$ on the algebraic convexity of $\mathbb{L}$ such that $A(u) = \{ v : u \oplus v \in \{u,v\}, u \preceq v \}$ is a chain for any~$u \in V$.
\end{lemma}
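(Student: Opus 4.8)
The plan is to build $\preceq$ greedily, peeling one value at a time while keeping the set of remaining values a subsemilattice, using two reformulations. First, for the algebraic convexity the extreme points of a subsemilattice $K$ are exactly its join-irreducible elements: $a\in K$ is extreme in $K$ iff there are no $b,c\in K\setminus\{a\}$ with $b\oplus c=a$ (if $a=\bigoplus S$ for some $\emptyset\neq S\subseteq K\setminus\{a\}$ then, pairing up the elements of $S$ and repeatedly using that $a$ absorbs any join it equals, one reduces to such a pair). Since an extreme point can be removed from a convex set without losing convexity, a total order $u_0\prec\cdots\prec u_{|V|-1}$ is an elimination order of the algebraic convexity provided each $u_i$ is join-irreducible in $K_i:=\{u_i,\dots,u_{|V|-1}\}$, because then $\{v:u_i\preceq v\}=K_i$ and $\{v:u_i\preceq v\}\setminus\{u_i\}=K_{i+1}$ are convex. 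Second, the set $A(u_i)=\{v:u_i\oplus v\in\{u_i,v\},\ u_i\preceq v\}$ in the statement is exactly the closed neighbourhood of $u_i$ in the comparability graph $G[K_i]$ (which is just the comparability graph of $\le$ restricted to $K_i$), and a clique in a comparability graph is the same thing as a chain; hence ``$A(u_i)$ is a chain'' is equivalent to ``$u_i$ is simplicial in $G[K_i]$''.

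With these reformulations the lemma reduces to one combinatorial claim: every nonempty subsemilattice $K$ of $\mathbb{L}$ has an element that is both join-irreducible in $K$ and simplicial in $G[K]$; iterating this from $K_0=V$ produces the order, and the $K_i$ remain subsemilattices since join-irreducibles are extreme. To prove the claim, note that $G[K]$ is an induced subgraph of the comparability graph $G$, which is chordal because $\le$ is cycle-free, so $G[K]$ is chordal and thus has a perfect elimination ordering; let $s$ be its first vertex, so $s$ is simplicial in $G[K]$. It then remains to see $s$ is join-irreducible in $K$. If not, $s=b\oplus c$ with $b,c\in K\setminus\{s\}$; then $b,c\le s$ and $b,c\neq s$, so $b$ and $c$ are distinct neighbours of $s$ in $G[K]$, and simpliciality of $s$ forces them to be comparable, say $b\le c$; but then $s=b\oplus c=c$, contradicting $c\neq s$.

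I expect the main obstacle to be precisely this last step — arguing that the simplicial vertex supplied by chordality cannot fail to be join-irreducible. The delicate configuration is the one in which $s$ would be a non-trivial join of two of its \emph{lower} neighbours in $G[K]$, and it is exactly the absorption of a join along a comparable pair (which simpliciality makes available) that rules it out. The remaining ingredients — the characterisation of extreme points of a subsemilattice as its join-irreducibles (consistent with the semilattice-convexity facts already cited, e.g.\ \cite{poncet2014semilattice}), the stability of subsemilattices under deletion of an extreme point, and the heredity of chordality to induced subgraphs — are routine.
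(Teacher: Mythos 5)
Your proof is correct and follows essentially the same route as the paper's: take a perfect elimination ordering of the chordal comparability graph and observe that a simplicial vertex $s$ cannot be a join $b\oplus c$ of two other remaining elements, since $b,c\le s$ would make them adjacent to $s$, hence comparable to each other, forcing $b\oplus c\in\{b,c\}$. The only difference is presentational (you re-extract a simplicial vertex from each induced subgraph rather than fixing one global elimination ordering), which does not change the argument.
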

\begin{proof}
    Let $\preceq$ be the perfect elimination order of the chordal comparability graph $G$ of $\mathbb{L}$. By definition of $\preceq$, for any $u \in V$ the set
    $A(u) = \{ \{u,v \} \in E : u \preceq v \}$
    is a clique, i.e., all elements in $A(u)$ are pairwise comparable in $\mathbb{L}$.
    We argue that $\preceq$ is also a convex elimination order for the algebraic convexity of $\mathbb{L}$.
    Let $V = \{v_1,\dots,v_N\}$ such that $v_i\preceq v_{i+1}$ for all $1\leq i < N$.
    We show by induction on~$i$ that $V_{i} = V \setminus \{ v_1, \ldots, v_{i} \}$ is convex. The base case $V_0=V$ is trivial. For the inductive step suppose $V_i$ is convex and consider the value $v_{i+1}$. Since $v_{i+1}$ is simplicial in $G[V_i]$, all the values $u \in V_i$ that are comparable with $v_{i+1}$ form a chain and $v_{i+1}$ is co-irreducible. Thus, $V_i \setminus \{ v_{i+1} \}$ is a subsemilattice of $\mathbb{L}$ and $v_{i+1} \in \ex V_{i+1}$.
\end{proof}

From now on  we let $\preceq$ denote the ordering given by \lemmaref{lemma:cycle-free-order} and define the output map
\[
    \phi(K) = \begin{cases} 
        \bigoplus K & \textrm{if } K \neq \ex K \\
        \max K & \textrm{otherwise.}
    \end{cases}
\]
With this, the framework given in \sectionref{sec:helly} and \lemmaref{lemma:without-min-progress} yield the following result.

\begin{lemma}\label{lemma:lattice-validity}
If $n > \max\{(\omega+1)f,3f\}$, then for all $i \in P \setminus F$ there exists $j \in P \setminus F$ such that 
    \[
    x_j \le y_i \le \bigoplus X.
    \]
\end{lemma}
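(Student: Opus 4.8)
The plan is to establish two facts: that the value $y_i = \phi(H_i)$ always lies inside the set $H_i$ itself, and that, for the algebraic convexity of a semilattice, the convex hull admits the explicit description $\langle A\rangle = \{\bigoplus B : \emptyset \neq B \subseteq A\}$ for every $A \subseteq V$. The latter is immediate: the right-hand side contains $A$ via singletons, is closed under $\oplus$ by associativity, commutativity and idempotency, and is contained in every subsemilattice containing $A$; consequently every element of $\langle A\rangle$ dominates (in the order $\le$) some element of $A$, and $\bigoplus A$ is the $\le$-greatest element of $\langle A\rangle$.

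For the first fact, I would invoke \lemmaref{lemma:validity-combined} with $\mathcal{C}$ the algebraic convexity of $\mathbb{L}$, whose Helly number is the height $\omega$, so that the hypothesis $n > \max\{(\omega+1)f,3f\}$ yields $\emptyset \neq H_i \subseteq \langle X\rangle$. Being an intersection of convex sets, $H_i$ is itself convex, i.e.\ a finite nonempty subsemilattice, hence $\bigoplus H_i \in H_i$; and trivially $\max H_i \in H_i$. Therefore $y_i = \phi(H_i) \in H_i$, regardless of which branch of the definition of $\phi$ applies.

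It then remains to read off the two inequalities. For the upper bound, $y_i \in H_i \subseteq \langle X\rangle$ combined with the description of the hull gives $y_i \le \bigoplus X$. For the lower bound, exactly as in the proof of \lemmaref{lemma:validity}, since $\lvert P_i \cap F\rvert \le f$ there is some $J \in {P_i \choose \lvert P_i\rvert-f}$ with $J \subseteq P_i \setminus F$, whence $\mathcal{V}_i(J) = \{x_j : j \in J\} \subseteq X$ and $H_i \subseteq \langle \mathcal{V}_i(J)\rangle$; writing $y_i = \bigoplus B$ for a nonempty $B \subseteq \mathcal{V}_i(J)$ and picking any $x_j \in B$ with $j \in J \subseteq P \setminus F$ yields $x_j \le y_i$. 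The one point I expect to require care is the verification that $y_i$ lands in $H_i$ and not merely in $\langle X\rangle$ --- this is exactly what forces the processor $j$ supplying the lower bound to be non-faulty --- whereas the case distinction in $\phi$ turns out to be immaterial here (it is needed only for the convergence argument together with \lemmaref{lemma:without-min-progress}); everything else is routine bookkeeping.
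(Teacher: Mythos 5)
Your proposal is correct and follows essentially the same route as the paper's proof: both rely on \lemmaref{lemma:validity-combined} to get $\emptyset \neq H_i \subseteq \langle X\rangle$, observe that $y_i = \phi(H_i) \in H_i$ because $H_i$ is a nonempty subsemilattice, and conclude via the fact that in the algebraic convexity every element of a hull is a join of generators, hence dominates some non-faulty input and is dominated by $\bigoplus X$. Your extra detour through $\langle \mathcal{V}_i(J)\rangle$ for a non-faulty $J$ merely unpacks the proof of \lemmaref{lemma:validity} that the paper invokes implicitly; the substance is identical.
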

\begin{proof}
    Let $i \in P \setminus F$. By \lemmaref{lemma:validity-combined} we have that $H_i \subseteq \langle X \rangle$. As $H_i$ and $\langle X \rangle$ are convex, i.e., closed under the $\oplus$ operator, we have $y_i = \phi(H_i) \in H_i \subseteq \langle X \rangle$ and $y_i \le \bigoplus X$.
    Moreover, for all $u \in \langle X \rangle$ there exists some $x_j \in X$ such that $x_j \le u $. In particular, this holds for any $y_i \in Y \subseteq \langle X \rangle$.
\end{proof}

\begin{lemma}\label{lemma:min-chain}
    If $\phi(H_i) = \min X$ for some $i \in P \setminus F$, then $Y$ is a chain.
\end{lemma}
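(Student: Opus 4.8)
The plan is to prove that the hypothesis $\phi(H_i)=m$, with $m:=\min X$, already forces $H_i=\{m\}$; combined with the common-intersection guarantee of \lemmaref{lemma:validity-combined} this puts $m$ into every correct processor's safe area, and then simpliciality of $m$ in the comparability graph of $\mathbb L$ shows that all outputs lie in a single clique of that graph, i.e.\ form a chain.

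I would start with two easy observations. First, $\phi(K)$ is either $\bigoplus K$, which lies in $K$ because convex sets here are subsemilattices and hence closed under $\oplus$, or $\max K$; in either case $\phi(K)\in K$, so $m\in H_i$. Second, by \remarkref{remark:min-is-ex} we have $m=\min\langle X\rangle$, hence $\langle X\rangle\subseteq A(m)$, and since $\preceq$ is a perfect elimination order of the comparability graph $G$ of $\mathbb L$ (\lemmaref{lemma:cycle-free-order}), the vertex $m$ is simplicial in $A(m)$. Thus $C:=\{m\}\cup(\mathcal N(m)\cap A(m))$ is a clique of $G$, i.e.\ a set of pairwise comparable elements, i.e.\ a chain of $\mathbb L$.

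The main step is the claim $H_i=\{m\}$. Because $H_i\subseteq\langle X\rangle\subseteq A(m)$, every $u\in H_i$ satisfies $m\preceq u$. Suppose $H_i$ were not free, equivalently not a chain; then $\phi(H_i)=\bigoplus H_i=m$ and $H_i$ contains an incomparable pair $b,c$. Since $b,c\le\bigoplus H_i=m$, and neither can equal $m$ without becoming comparable to the other, we get $b<m$ and $c<m$; hence $b$ and $c$ both lie in $\mathcal N(m)\cap A(m)$ yet are non-adjacent -- contradicting that $\mathcal N(m)\cap A(m)$ is a clique. So $H_i$ is free and $\phi(H_i)=\max H_i=m$; together with $m\preceq u$ for all $u\in H_i$ this yields $u=m$ throughout, so $H_i=\{m\}$. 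I expect this case analysis to be the only delicate point: one must notice that $\bigoplus H_i$ being the $\preceq$-least element of $\langle X\rangle$ is incompatible with $H_i$ containing an incomparable pair, precisely because a perfect elimination order lists a simplicial vertex first.

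To finish, \lemmaref{lemma:validity-combined} gives $\emptyset\neq\bigcap_{k\in P\setminus F}H_k\subseteq H_i=\{m\}$, so $m\in H_j$ for every $j\in P\setminus F$. Fix such a $j$; then $y_j=\phi(H_j)\in H_j\subseteq\langle X\rangle\subseteq A(m)$, and either $H_j$ is a chain, so $m$ and $y_j$ (both in $H_j$) are comparable, or $H_j$ is not free and $y_j=\bigoplus H_j\ge m$, again comparable. In both cases $y_j\in C$. Hence $Y\subseteq C$, and since $C$ is a chain so is $Y$.
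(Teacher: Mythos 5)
Your proof is correct and takes essentially the same route as the paper's: you show $H_i=\{\min X\}$ by forcing $H_i$ to be a chain via the simpliciality of $\min X$ in the suffix $A(\min X)$ of the elimination order (the content of \lemmaref{lemma:cycle-free-order}), then use the nonempty common intersection from \lemmaref{lemma:validity-combined} to place $\min X$ in every $H_j$ and conclude $Y$ lies in the chain $A(\min X)$. The only difference is presentational: you spell out explicitly, via the two branches of $\phi$, why each $y_j$ is comparable with $\min X$, a step the paper leaves implicit.
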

\begin{proof}
    Suppose $y_i = \phi(H_i) = \min X$. Recall that $\min X \in \ex X$ by \remarkref{remark:min-is-ex}. First, we establish that $y_i = \min X$ implies that $H_i$ is a chain. If the second case of $\phi$ is used, then $H_i$ is trivially a chain. In the first case, we have that $y_i = \bigoplus H_i$, which is by definition comparable with each $u \in H_i$ and $\min X = \min \langle X \rangle \preceq \min H_i$, since $H_i \subseteq \langle X \rangle$. Thus, \lemmaref{lemma:cycle-free-order} implies that $H_i$ is a chain. 
    
    Since $H_i$ is a chain, we have $y_i = \phi(H_i) = \max H_i = \min X$, which yields that $H_i = \{ y_i \}$. Now consider any $y_j \in Y$. Since $y_j = \phi(H_j)$ and $H_j \cap H_i \neq \emptyset$ by \lemmaref{lemma:validity-combined}. Thus, $y_i \in H_j$ and $y_j$ is comparable with $y_i$. Since $Y \subseteq \langle X \rangle$ and $y_i \preceq y_j$ and \lemmaref{lemma:cycle-free-order} implies that $Y$ is a chain.
\end{proof}

\begin{theorem}\label{thm:lattice}
    Suppose $\mathbb{L} = (V, \oplus)$ is a cycle-free semilattice of height $\omega$ and $n > (\omega+1)f$. Then Byzantine semilattice agreement on $\mathbb{L}$ can be solved in the asynchronous model.
\end{theorem}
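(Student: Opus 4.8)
The plan is to run the iterative algorithm of Section~\ref{sec:helly} with the output map $\phi$ defined just above, show it achieves validity and terminates after finitely many rounds on a chain, and thereby obtain Byzantine semilattice agreement. The resilience assumption $n > (\omega+1)f$ together with $n > 3f$ (which follows since $\omega \ge 2$ for any nontrivial semilattice) means \lemmaref{lemma:validity-combined} applies: in every round $t$, each correct processor~$i$ computes a nonempty safe area $H_i(t)$ with $\emptyset \neq H_i(t) \subseteq \langle X(t)\rangle$, and $\bigcap_{i\in P\setminus F} H_i(t) \neq \emptyset$. Validity is then immediate from \lemmaref{lemma:lattice-validity}: for every $i$, the output $y_i = \phi(H_i) \in H_i \subseteq \langle X\rangle$ satisfies $y_i \le \bigoplus X$, and since $\langle X\rangle$ is a subsemilattice every element of it dominates some input $x_j$, so $x_j \le y_i$ for some $j \in P\setminus F$. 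This is exactly the Byzantine-appropriate validity condition stated in the introduction.

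The core of the argument is termination, i.e.\ that after finitely many rounds the set $Y(t)$ of output values is a chain (the agreement condition). The idea, as signposted in the text, is a potential-function argument based on \lemmaref{lemma:without-min-progress}: in the convex elimination order $\preceq$ of \lemmaref{lemma:cycle-free-order}, as long as the algorithm has not yet terminated we want to argue that $\min X(t) \notin Y(t)$, so that $\langle X(t+1)\rangle = \langle Y(t)\rangle \subsetneq \langle X(t)\rangle$ and the hull strictly shrinks. Since $V$ is finite, the sequence of strictly decreasing hulls $\langle X(0)\rangle \supsetneq \langle X(1)\rangle \supsetneq \cdots$ must stabilise, and stabilisation forces a situation in which $\min X(t)$ is in fact output by some processor. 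By \lemmaref{lemma:min-chain}, once some correct processor~$i$ outputs $\phi(H_i) = \min X$, the set $Y$ is a chain and agreement holds; all correct processors can then decide. One has to be a little careful: it is not literally true that $\min X(t)$ is never output before termination — rather, the clean dichotomy is that \emph{either} $\min X(t)\notin Y(t)$, in which case the hull shrinks and we recurse, \emph{or} $\min X(t)\in Y(t)$, in which case \lemmaref{lemma:min-chain} already gives us a chain and we are done. Either way, within at most $\lvert V\rvert$ rounds (a crude bound, since each non-terminal round removes at least one vertex from the hull) the output set is a chain.

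The main obstacle is the interaction between the two cases of the output map $\phi$ and the need to make the ``shrinking'' step go through. The subtlety is whether, in a round that is not yet terminal, we can guarantee $\min X(t) \notin Y(t)$; the map $\phi(K) = \bigoplus K$ when $K$ is not free, and $\phi(K) = \max K$ when $K$ is free, is precisely engineered so that outputting $\min X$ can only happen in a degenerate, already-converged situation — this is the content of \lemmaref{lemma:min-chain}, whose proof shows $y_i = \min X$ forces $H_i = \{y_i\}$ and then uses $H_j \cap H_i \neq \emptyset$ and \lemmaref{lemma:cycle-free-order} to conclude $Y$ is a chain. So the remaining work is just to assemble: invoke \lemmaref{lemma:validity-combined} for the per-round invariants, \lemmaref{lemma:lattice-validity} for validity, and then the finite-descent argument on hulls, closing with \lemmaref{lemma:min-chain} at the terminal round. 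A minor point to check is that the algorithm can \emph{detect} termination and have processors irrevocably decide; this follows the standard approach (as in Mendes et al.~\cite{MHVG15}) of running for the a priori round bound $\lvert V\rvert$ — or $O(\log \lvert V\rvert)$ if one sharpens the analysis via the clique-tree/eccentricity machinery of Section~\ref{sec:chordal} applied to the comparability graph — and then having every correct processor output its current value, which by the above is guaranteed to lie on a common chain.
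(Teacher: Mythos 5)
Your proposal is correct and follows essentially the same route as the paper's own proof: validity via \lemmaref{lemma:lattice-validity}, the per-round dichotomy between hull shrinkage (\lemmaref{lemma:without-min-progress}) and the chain case (\lemmaref{lemma:min-chain}), and the $\lvert V\rvert$-iteration bound for deciding. Your explicit phrasing of the dichotomy (either $\min X(t)\notin Y(t)$ and the hull strictly shrinks, or $\min X(t)\in Y(t)$ and $Y$ is already a chain) is in fact a slightly cleaner rendering of the paper's terser ``eventually $\min X \in Y$'' step.
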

\begin{proof}
  Validity is maintained by \lemmaref{lemma:lattice-validity}. By \lemmaref{lemma:without-min-progress}, if $\min X \notin Y$ holds, then $\langle Y \rangle \subsetneq \langle X \rangle$. Since $Y$ is always nonempty, eventually $\min X \in Y$. Then by \lemmaref{lemma:min-chain} the set $Y$ is a chain. As $\langle X \rangle$ can shrink at most $|V|$ times, the processors can decide and terminate after $|V|$ iterations.
\end{proof}

\section{Synchronous convex consensus\label{apx:synchronous}}

In this section, we consider the following \emph{exact} agreement problem over an arbitrary convexity space $\mathcal{C}$ in the synchronous model of computation. In the \emph{convex consensus problem on a convexity space $\mathcal{C}$}, the task is to satisfy the following constraints:
\begin{itemize}[noitemsep]
    \item agreement: $|Y| = 1$ (all non-faulty processors decide on a single value),
    \item validity: $Y \subseteq \langle X \rangle$ (the decided value resides in the convex hull of the input values).
\end{itemize}

We now establish that the convex consensus on any convex geometry is solvable in $\Theta(f)$ synchronous communication rounds if and only if $n > \max\{ 3f, \omega f\}$ is satisfied. In particular, the upper bound holds for \emph{any} arbitrary convexity space. These results can be seen as generalisation of the results for $\mathbb{R}^m$ given by Mendes et al.~\cite{vaidya13byzantinevector,MHVG15} to arbitrary, discrete convexity spaces.

\begin{restatable}{theorem}{syncupper}\label{thm:synchronous-upper}
    Let $\mathcal{C}$ be an abstract convexity space on $V$ with Helly number $\omega$. If $n > \max \{3f, \omega f \}$, then convex consensus on $\mathcal{C}$ can be solved in $O(f)$ synchronous communication rounds using $O( n f^2)$ messages of size $O(n \cdot (\log n + \log |V|))$.
\end{restatable}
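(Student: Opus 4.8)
The plan is to follow the classical reliable-broadcast-then-local-computation template used for synchronous Byzantine consensus in $\mathbb{R}^m$~\cite{vaidya13byzantinevector,MHVG15}, but replace the coordinate-wise midpoint computation with a convex-hull intersection argument driven by the Helly number. First, I would have the processors run $f+1$ rounds of a standard \emph{witness}/reliable-broadcast subroutine so that every non-faulty processor ends up with a common view: a multiset (really, a map $P \to V \cup \{\emptysym\}$) of purported input values, say $x_j$ for $j$ in some set $P' \subseteq P$ with $|P'| \ge n-f$, such that (i) all non-faulty inputs appear correctly, (ii) any value attributed to a faulty processor is attributed consistently (the same value or $\emptysym$) across all non-faulty processors, and (iii) all non-faulty processors agree on the full map. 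This is the only place synchrony and the $f+1$ rounds are used; it is exactly the synchronous analogue of the guarantees listed for the Byzantine asynchronous round model, and requires only $n > 3f$. From here on the computation is purely local and deterministic, hence automatically produces identical outputs, so $|Y|=1$ (agreement) is immediate once we show the output lands in $\langle X\rangle$.

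For the local rule: each processor forms, exactly as in \sectionref{sec:algorithm}, the collection $\mathcal{K} = \{\, \langle \mathcal{V}(J)\rangle : J \in \binom{P'}{|P'|-f}\,\}$ and sets $H = \bigcap \mathcal{K}$. The key claims are (a) $H \neq \emptyset$ and (b) $H \subseteq \langle X\rangle$, where $X$ is the set of non-faulty inputs. Claim (b) is Lemma~\ref{lemma:validity}: some $J$ avoids all faulty processors (since $|P'|-f \ge n-2f > f$), so $H \subseteq \langle \mathcal{V}(J)\rangle \subseteq \langle X\rangle$. For claim (a), I would show $\mathcal{K}$ is $\omega$-intersecting and invoke the Helly property of $\mathcal{C}$: given any $\omega$ members $\langle \mathcal{V}(J_1)\rangle,\dots,\langle\mathcal{V}(J_\omega)\rangle$, the intersection $\bigcap_k J_k$ is nonempty because $|\bigcap_k J_k| \ge |P'| - \omega f \ge n - f - \omega f > 0$ (this is the single point where $n > \omega f$ enters), and any common index $j$ gives a value $x_j$ lying in every one of the $\omega$ hulls, so they intersect; hence by Helly the whole collection $\mathcal{K}$ intersects. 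Then the processors output any canonical element of $H$ (e.g.\ $\min H$ under a fixed total order on $V$, or $\phi(H)$ for any fixed output map) — since the view and the rule are identical for all non-faulty processors, they output the same vertex, which lies in $\langle X\rangle$. This establishes both agreement and validity.

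Finally I would tally the complexity. The witness/reliable-broadcast phase runs in $O(f)$ rounds; each non-faulty processor sends $O(n)$ messages per round and relays $O(n)$ values, so $O(nf^2)$ messages total over the $O(f)$ rounds, and each message carries at most $O(n)$ entries each of size $O(\log n + \log|V|)$ (a processor id plus a vertex of $V$), giving message size $O(n(\log n + \log|V|))$. The local hull-intersection computation is just bookkeeping and contributes nothing to the round/message count. The main obstacle — and the only genuinely non-routine point — is getting the reliable-broadcast guarantees stated cleanly enough that the $\omega$-intersecting argument goes through verbatim: specifically one must be careful that condition (iii) (all non-faulty processors share the \emph{same} map, not merely consistent ones) holds, since otherwise different processors compute different $H$ and agreement fails; this is standard for synchronous Byzantine broadcast in $f+1$ rounds but needs to be invoked precisely. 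Everything after that is a direct transcription of the Helly-based argument already developed in \sectionref{sec:helly}, specialised to a single shot rather than an iteration.
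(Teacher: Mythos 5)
Your overall architecture matches the paper's: run a synchronous Byzantine broadcast phase so that all non-faulty processors obtain an identical view, then perform a purely local Helly-based hull intersection and output a canonical element. However, there is a concrete arithmetic gap in your Helly step. You build the view over a set $P' \subseteq P$ with $\lvert P'\rvert \ge n-f$ (allowing up to $f$ entries to be $\emptysym$) and then intersect hulls over subsets $J \in \binom{P'}{\lvert P'\rvert - f}$. The resulting bound is $\lvert \bigcap_k J_k\rvert \ge \lvert P'\rvert - \omega f \ge n - f - \omega f$, and positivity of this quantity requires $n > (\omega+1)f$, \emph{not} the hypothesised $n > \omega f$. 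For instance with $\omega = 3$, $f = 1$, $n = 4$ the hypothesis $n > \omega f$ holds but your bound gives $n - f - \omega f = 0$, so the $\omega$ index sets may fail to intersect and the argument collapses. This is precisely the resilience loss that distinguishes the asynchronous safe-area computation (which needs $n > (\omega+1)f$) from the synchronous theorem being proved.

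The paper avoids this by using full terminating multivalued Byzantine agreement (\`a la Srikanth--Toueg) with each processor as designated sender: every correct processor decides a value for \emph{every} sender $j \in P$, with $\bot$ for a detected-faulty sender replaced by an arbitrary default in $V$. The common view therefore covers all $n$ processors, the collection is $\mathcal{K} = \{\langle \mathcal{V}(J)\rangle : J \in \binom{P}{n-f}\}$, and the intersection bound becomes $\lvert\bigcap_k J_k\rvert \ge n - \omega f > 0$, which is exactly what $n > \omega f$ buys. To repair your proof, strengthen the broadcast primitive from ``reliable broadcast over at least $n-f$ senders'' to ``agreement on a (possibly default) value for all $n$ senders''; the rest of your argument, including the validity claim via a subset $J$ avoiding all faulty processors and the complexity tally, then goes through as in the paper.
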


The next result establishes that the higher resilience threshold of $n > \omega f$ for convex consensus is necessary already in the case of convex geometries.

\begin{restatable}{theorem}{synclower}\label{thm:synchronous-lower}
    Let $\mathcal{C}$ be a convex geometry with a Helly number $\omega$. If $n \le \omega f$ holds, then convex consensus on $\mathcal{C}$ cannot be solved in the synchronous message-passing model.
\end{restatable}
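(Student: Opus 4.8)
The plan is to run the classical scenario-partition impossibility argument, using a maximum-size free set of $\mathcal{C}$ in place of the vertices of a simplex. Since $\mathcal{C}$ is a convex geometry with Helly number $\omega$, it contains a free set $K$ with $|K| = \omega$. The first step is a purely combinatorial observation about $K$: for every $S \subseteq K$ we have $\langle S \rangle = S$, and $a \notin \langle S \setminus \{a\} \rangle$ for each $a \in S$. The latter is immediate because $a \in \ex K$ and $\langle S \setminus \{a\}\rangle \subseteq \langle K \setminus \{a\}\rangle$ by monotonicity of the closure operator; for the former, $S \subseteq \langle S \rangle \subseteq \langle K \rangle = K$ since $K$ is convex, and any $b \in K \setminus S$ satisfies $b \in \ex K$, hence $b \notin \langle K \setminus \{b\}\rangle \supseteq \langle S\rangle$, so that $\langle S \rangle = S$.

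We may assume $n \ge 2$ and $\omega \ge 2$: when $\omega = 1$ the problem is trivially solvable by outputting a point contained in every nonempty convex set (such a point exists because the Helly number is $1$), and when $n \le 1$ there is nothing to coordinate. Set $m = \min\{\omega, n\} \ge 2$ and partition $P$ into $m$ nonempty groups $P_1, \dots, P_m$ with $|P_k| \le f$ for all $k$; this is possible since $m \le n$ (so the groups can be nonempty) and $n \le m f$ (so each group fits within $f$), the latter holding because $m = \min\{\omega,n\}$ and $n \le \omega f$. Choose distinct elements $a_1, \dots, a_m \in K$ and put $S = \{a_1, \dots, a_m\}$.

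Now consider two executions. In the fault-free execution $\alpha$, every processor $i \in P_k$ starts with input $a_k$; then the set of correct inputs is exactly $S$, so by agreement the correct processors decide a common value $y$, and by validity $y \in \langle S \rangle = S$, i.e.\ $y = a_j$ for some $j \in \{1, \dots, m\}$. In execution $\beta$, the set $F = P_j$ (of size at most $f$) is Byzantine faulty and each processor in $P_j$ sends, in every round, exactly the message it sent in the same round of $\alpha$, while every processor outside $P_j$ is correct with the same input as in $\alpha$. A straightforward induction on the round number shows that every correct processor sees in $\beta$ exactly what it saw in $\alpha$: correct processors outside $P_j$ send identical messages by the induction hypothesis, and the faulty processors send identical messages by construction. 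Hence the correct processors decide $y = a_j$ in $\beta$ as well. But in $\beta$ the correct inputs form the set $S \setminus \{a_j\}$, so validity forces $y \in \langle S \setminus \{a_j\}\rangle$, which does not contain $a_j$. This contradiction shows that no algorithm solves convex consensus when $n \le \omega f$.

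The argument is short, and the indistinguishability induction is entirely standard — the same ``replaying'' device used in the $n > (m+2)f$ lower bound of Mendes et al.\ and in classical Byzantine agreement bounds. The only points requiring a little care are the combinatorial claim that a sub-collection of a free set still behaves like the vertex set of a simplex (the proof of $\langle S \rangle = S$), and the bookkeeping in the partition step guaranteeing $m \ge 2$ and group sizes at most $f$, so that execution $\beta$ genuinely has at least one correct processor whose validity constraint yields the contradiction.
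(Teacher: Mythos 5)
Your proof is correct and follows essentially the same route as the paper's: take a free set of size $\omega$, assign its elements to groups of at most $f$ processors, and compare the fault-free execution with one in which the group holding the decided value $a_j$ is Byzantine and replays its messages, so that validity fails because $a_j \notin \langle S \setminus \{a_j\}\rangle$. The only difference is that you are somewhat more careful than the paper, explicitly verifying that subsets of a free set are themselves free and handling the degenerate cases $n < \omega$ and $\omega = 1$, which the paper's argument glosses over.
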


\subsection{Synchronous model of computation}

Recall that in the standard synchronous message-passing model, where the computation proceeds in discrete rounds, where each non-faulty processor performs in lock-step the following:
\begin{enumerate}[noitemsep]
    \item send messages to other processors in the system,
    \item receive messages from all other processors (or no message from a faulty processor),
    \item update local state based on received messages.
\end{enumerate}

\subsection{Upper bound for convex consensus}

We start with the positive result given by \theoremref{thm:synchronous-upper}. The algorithm follows the same idea as the Vaidya--Garg algorithm for Euclidean spaces~\cite{vaidya13byzantinevector,MHVG15}. For the sake of completeness, we reiterate the algorithm here and generalise the analysis for any finite abstract convexity space.

\syncupper

Note that in the above the convexity space $\mathcal{C}$ need not be a \emph{convex geometry}; indeed, it suffices that the Helly number $\omega$ is finite. Moreover, observe that standard multivalued consensus protocols do not directly solve the problem, as they have weaker validity constraints (e.g., validity requires that if all agree on a single value, then this must be decided on).

\subsubsection{Multivalued Byzantine agreement and broadcast} As a subroutine, we use the following variant of safe broadcast in the synchronous model. Let $M$ be a finite set of messages and $s \in P$ be a fixed sender with some input message $m \in M$. The task is to have all correct processors $i \in P \setminus F$ decide on a value $y_i \in M \cup \{ \bot \}$ to the following constraints:
\begin{itemize}[noitemsep]
    \item agreement: $y_i \in M \cup \{ \bot \}$ and $|Y| = 1$ (all correct processors agree on a single message)
    \item validity: if $s \notin F$, then $Y = \{ m \}$. 
\end{itemize}
In words, all correct processors agree on a single message or $\bot$ denoting that the sender is faulty. However, if the sender $s$ is correct, then all correct nodes must decide on the message $m$.

\begin{theorem}[Theorem 4 in \cite{srikanth1987simulating}]
    Let $M$ be a finite set. If $n > 3f$, then multivalued Byzantine agreement on $M$ can be solved in $2f+2$ synchronous rounds using $O(nf^2)$ messages of size at most $O(\log n + \log |M|)$.
\end{theorem}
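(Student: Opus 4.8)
The plan is to prove the theorem via the authenticated-broadcast simulation of Srikanth and Toueg: first build a low-level broadcast primitive that emulates unforgeable signatures using only the assumption $n>3f$, and then run a Dolev--Strong--style chain-certification protocol on top of it. The primitive exposes two events, $\mathrm{broadcast}(p,m,k)$ (process $p$ initiates a broadcast of $m$ tagged with round $k$) and $\mathrm{accept}(p,m,k)$, implemented by the standard echo rule: in round $k$ the initiator $p$ sends $(\mathrm{init},m)$ to all; a correct process relays $(\mathrm{echo},m)$ in round $k+1$ if it saw the init, and in any later round once it has seen $f+1$ echoes of $(p,m,k)$; a process executes $\mathrm{accept}(p,m,k)$ upon collecting $n-f$ echoes. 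I would first establish the three invariants that make this primitive behave like a signature, all of which reduce to the inequalities $n-f\ge 2f+1$ and $n-2f\ge f+1$: \emph{Correctness} (if a correct $p$ broadcasts, every correct process accepts by round $k+1$, since all $\ge n-f$ correct processes echo); \emph{Unforgeability} (if a correct $p$ never broadcasts $(p,m,k)$, no correct process accepts it --- taking the earliest correct echoer and tracing its trigger back either to a nonexistent init or to an earlier correct echoer yields a contradiction); and \emph{Relay} (if any correct process accepts in round $r$, then $\ge f+1$ correct processes have echoed, so every correct process echoes by round $r+1$ and hence accepts by round $r+1$).

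On top of the primitive I would simulate the signature-based broadcast protocol. The sender $s$ broadcasts its value $m$ at ``level'' $1$; whenever a correct process first \emph{extracts} a value $v$ --- meaning it has accepted a valid chain of $\ell$ distinct initiators rooted at $s$ --- at some level $\ell\le f$, it re-broadcasts $v$ at level $\ell+1$, thereby extending the chain. After $f+1$ levels each correct process decides the unique value it extracted if its extracted set is a singleton, and $\bot$ otherwise. Validity for a correct sender is immediate: Correctness guarantees $m$ is extracted by everyone, while Unforgeability forbids any $v\ne m$ from acquiring a level-$1$ certificate, so the extracted set is exactly $\{m\}$. The heart of the argument is Agreement: if a correct process extracts $v$ at a level $\ell\le f$, it re-broadcasts and Relay/Correctness propagate $v$ to all correct processes by level $\ell+1\le f+1$; and if $v$ is first extracted only at the terminal level $f+1$, its chain of $f+1$ distinct initiators must contain a correct one, who extracted $v$ at a strictly smaller level and hence already forced everyone to extract it. Thus all correct processes compute the same extracted set and decide identically.

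For the quantitative bounds, the round count is $2(f+1)=2f+2$ because the protocol has $f+1$ certification levels and each level is a single invocation of the primitive, spanning its two synchronous rounds (init, then echo-and-accept), with successive levels pipelined. Crucially, because authentication is carried by the $\mathrm{accept}$ events rather than by explicitly forwarded signature lists, every transmitted message consists only of a value in $M$, a level counter bounded by $f+1$, and a constant number of process identifiers, giving size $O(\log n+\log|M|)$; this is precisely what separates the simulation from plain Dolev--Strong, whose messages grow with the chain. The $O(nf^2)$ message bound then follows from a routine count of the broadcast invocations made by correct processes over the $f+1$ levels.

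I expect the main obstacle to be the correctness machinery rather than the counting: namely, proving Unforgeability and Relay cleanly and then verifying that replacing genuine signatures by these two properties preserves the Dolev--Strong invariant through the boundary case of maximal-length chains (level $f+1$), where one must invoke the pigeonhole existence of a correct initiator in the chain. Getting the round bookkeeping to land exactly on $2f+2$ --- so that the final level's Relay completes within the round budget --- is the other point requiring care.
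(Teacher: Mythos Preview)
The paper does not prove this statement at all: it is imported verbatim as a black-box result from Srikanth and Toueg (the label ``Theorem~4 in~\cite{srikanth1987simulating}'' is the entire content the paper supplies), and is used only as a subroutine in the reduction that follows. Your proposal, by contrast, is a faithful reconstruction of the original Srikanth--Toueg argument --- the simulated authenticated-broadcast primitive with its Correctness/Unforgeability/Relay invariants, layered under a Dolev--Strong chain protocol --- and the sketch is correct in outline, including the $2(f{+}1)$ pipelined round count and the observation that carrying authentication via the $\mathrm{accept}$ events (rather than forwarded signature lists) is what keeps messages to $O(\log n + \log\lvert M\rvert)$ bits. So there is nothing to compare against here beyond noting that the paper defers entirely to the citation while you have re-derived it; for the purposes of the present paper your level of detail already exceeds what is needed.
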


\subsubsection{From multivalued Byzantine agreement to convex consensus}

With the above subroutine, we can see that convex consensus reduces to the problem of solving multivalued Byzantine agreement with linear-in-$n$ overhead to the bit complexity.

\begin{theorem}
    Let $\mathcal{C}$ be an abstract convexity space on $V$ with Helly number $\omega$. Consider a network of $n$ nodes with $f$ faulty nodes. Suppose there exists an algorithm $\vec A$ that solves multivalued Byzantine agreement on $V$ in a network with $n$ nodes and $f$ Byzantine faulty nodes. 
     If $n > \omega f$, then convex consensus on $\mathcal{C}$ can be solved in the same time as multivalued Byzantine agreement.
\end{theorem}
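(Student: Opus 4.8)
The plan is to reduce convex consensus to multivalued Byzantine agreement by having every processor run one instance of $\vec A$ as a reliable broadcaster of its own input value, and then locally aggregating the $n$ results using the Helly property, exactly as in the asynchronous safe-area construction of Section~\ref{sec:helly} but now exploiting the stronger synchronous guarantees. Concretely, each processor $s \in P$ acts as the sender in an instance $\vec A_s$ with input message $x_s \in V$; after $\vec A$ terminates, every correct processor $i$ holds the same vector $(z_1,\dots,z_n)$ where $z_s \in V \cup \{\bot\}$, $z_s = \bot$ can only occur if $s \in F$, and $z_s = x_s$ whenever $s \notin F$. Let $R = \{ s : z_s \neq \bot\}$; then $|R| \ge n-f$ and the set $W = \{ z_s : s \in R \}$ satisfies $X' := \{ x_s : s \in P\setminus F\} \subseteq W \cup (\text{faulty junk})$, but crucially every correct input appears in $W$ and $W$ contains at most $f$ spurious values contributed by faulty senders. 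Each correct processor computes the same collection $\mathcal{K} = \{ \langle \mathcal{V}(J)\rangle : J \in \binom{R}{|R|-f} \}$ and outputs $y = \phi(\bigcap \mathcal{K})$ for a fixed deterministic choice function $\phi$ (e.g. the $\preceq$-minimal element).

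The key steps, in order, are: (i) observe that since all correct processors see the identical vector $(z_1,\dots,z_n)$, they compute identical $R$, $\mathcal{K}$ and hence identical output, giving agreement $|Y|=1$ for free; (ii) show $\bigcap\mathcal{K} \subseteq \langle X'\rangle$ where $X' = \{x_s : s \in P\setminus F\}$ — this is immediate because at least one $J \in \binom{R}{|R|-f}$ is entirely contained in $P\setminus F$ (as $|R|-f \ge n-2f > f$ and at most $f$ senders in $R$ are faulty), so $\bigcap\mathcal{K} \subseteq \langle \mathcal{V}(J)\rangle \subseteq \langle X'\rangle$, which gives validity; (iii) show $\bigcap\mathcal{K} \neq \emptyset$ so that $\phi$ is well-defined and $y$ exists — here we invoke the Helly number. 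We argue $\mathcal{K}$ is $\omega$-intersecting: any $\omega$ sets $\langle\mathcal{V}(J_1)\rangle,\dots,\langle\mathcal{V}(J_\omega)\rangle \in \mathcal{K}$ arise from $J_1,\dots,J_\omega \in \binom{R}{|R|-f}$; by a union bound $|\bigcap_k J_k| \ge |R| - \omega f \ge n - f - \omega f > \omega f - f - \omega f \cdot 0$... more carefully $|R| - \omega f \ge (n-f) - \omega f \ge (\omega f + 1) - f - \omega f$ is not quite positive, so instead use $|R| \ge n - f$ and $n > \omega f$ giving $|R| - \omega f \ge n - f - \omega f$; since $n \ge \omega f + 1$ this is $\ge 1 - f$ which is insufficient — the correct bound is obtained by noting each $J_k$ omits only $f$ elements of the common set $R$, so $|\bigcap_k J_k| \ge |R| - \omega f$, and we need $n - f - \omega f \geq$ ... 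Here one must be slightly more careful and instead bound $|\bigcap_k J_k| \ge |R| - \omega f$ but use that $|R|$ itself may exceed $n-f$; the cleanest route is to note there is some common processor $p \in \bigcap_k J_k$ as long as $|R| > \omega f$, and since $n > \omega f$ and the correct senders alone give $|R \cap (P\setminus F)| \ge n - f$... I would finalize this counting so that it yields a common index $p$, whence $z_p \in \bigcap_k \langle\mathcal{V}(J_k)\rangle$, establishing $\omega$-intersection; then by definition of the Helly number $\omega$, $\bigcap\mathcal{K} \neq \emptyset$.

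Finally, I would assemble these into the theorem: agreement from step (i), validity from step (ii), nontriviality from step (iii), and the round/message complexity being that of $\vec A$ plus an $O(n)$ factor on bit complexity because each processor runs $n$ parallel instances of $\vec A$ (one per sender) — so the time is identical to a single invocation of multivalued Byzantine agreement while the total messages and message sizes blow up by a factor $n$. Combining with the Srikanth–Toueg protocol that realizes $\vec A$ in $2f+2$ rounds gives the $O(f)$-round bound of Theorem~\ref{thm:synchronous-upper}.

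I expect the main obstacle to be step (iii): pinning down the exact counting argument that guarantees $\mathcal{K}$ is $\omega$-intersecting under the hypothesis $n > \omega f$ alone. The subtlety is that $R$ may be larger than $n-f$ (faulty senders can successfully broadcast spurious values and be included in $R$), so one cannot naively write $|R| = n-f$; rather one must argue that the at-most-$f$ faulty indices in $R$ are precisely the ones that can be safely deleted by choosing the right $J$'s, and that any $\omega$-fold intersection of the $(|R|-f)$-subsets still hits a common correct processor because the correct processors form a set of size $\ge n - f > \omega f - f \ge (\omega-1)f$ within $R$ — requiring $n > \omega f$ exactly to close the pigeonhole. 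Getting the inequality chain exactly right (and matching the hypothesis $n>\omega f$ rather than $n > (\omega+1)f$, which is why the synchronous bound improves on the asynchronous one) is the crux of the argument; everything else is bookkeeping that follows the established template of Lemmas~\ref{lemma:validity}--\ref{lemma:nonempty}.
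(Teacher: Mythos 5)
Your overall reduction is the same as the paper's: run $n$ parallel instances of $\vec A$, one per sender, observe that all correct processors end up with identical vectors (so agreement is free), compute a safe area as an intersection of hulls of subsets with $f$ senders removed, use one all-correct subset for validity, and invoke the Helly number for nonemptiness. Steps (i) and (ii) match the paper exactly.

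However, step (iii) --- which you yourself flag as unresolved --- is a genuine gap, and the specific form of your construction makes the counting fail rather than merely be fiddly. By restricting to $R = \{ s : z_s \neq \bot \}$ and taking $J \in \binom{R}{\lvert R\rvert - f}$, each $J$ omits $f$ elements of $R$, so $\omega$ of them jointly omit up to $\omega f$ elements and $\lvert \bigcap_k J_k \rvert \ge \lvert R\rvert - \omega f \ge (n-f) - \omega f$, which is not positive under $n > \omega f$ once $f \ge 1$. The worst case is exactly $R = P \setminus F$ (every faulty sender returns $\bot$): then every value indexed by $R$ is already a correct input, and deleting $f$ further indices is both unnecessary and fatal to the pigeonhole. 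The paper avoids this by mapping every $\bot$ to an arbitrary default value of $V$, so that all $n$ indices carry a value, and then taking $J \in \binom{P}{n-f}$: each $J$ omits exactly $f$ of the $n$ processors, hence $\lvert \bigcap_k J_k\rvert \ge n - \omega f > 0$, while validity survives because some $J$ of size $n-f$ lies entirely in $P \setminus F$. Alternatively, within your framework you should delete only $f - (n - \lvert R\rvert)$ indices from $R$ (each $\bot$ already accounts for one excluded faulty sender), giving $\lvert \bigcap_k J_k\rvert \ge n - \omega f + (\omega - 1)(n - \lvert R\rvert) > 0$. One of these repairs is required; as written, the $\omega$-intersecting claim, and hence the existence of the output, is not established.
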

\begin{proof}
 Let $x_i \in V$ be the input for processor $i \in P$. Suppose we run $n$ parallel copies, $\vec A_1, \ldots, \vec A_n$ of a multivalued Byzantine agreement algorithm such that for instance $\vec A_i$ processor $i$ acts as the sender with the message $m_i = x_i \in V$. After the algorithms $\vec A_1, \ldots, \vec A_n$ have terminated, every non-faulty processor $i \in P \setminus F$ has decided for all $j \in P$ on the message $m_j$ transmitted by processor $j$. In case $j \in F$ and $m_j = \bot$, we can arbitrarily map $m_j$ to some default value in $V$ without loss of generality.

Unlike in the asynchronous case, all correct processors have identical views after the algorithms $\vec A_1, \ldots, \vec A_n$ terminate. After this point, no further communication is needed. It only remains to compute an output value $y \in \langle X \rangle$ consistently. For $J \subseteq P$, let $\mathcal{V}(J) = \{ m_j : j \in J\}$ denote the set of messages from processors in $J$. Each processor locally computes
 \[
 \mathcal{K} = \left\{ \langle \mathcal{V} (J) \rangle : J \in {P \choose n-f} \right\} \quad \textrm{ and } \quad H = \bigcap \mathcal{K}.
\]
As before, it is easy to check that $H \subseteq \langle X \rangle$. To see that $H \neq \emptyset$, we show that $\mathcal{K}$ is $\omega$-intersecting by picking $A_1, \ldots, A_\omega \in \mathcal{K}$. Since $A_k = \langle \mathcal{V}(J_k) \rangle$ for some $J_k \in  {P \choose n-f}$, applying \lemmaref{lemma:intersection-size} it follows that 
\[
\left |\bigcap_{k=1}^{\omega} J_k \right| \ge |P| - \sum_{k=1}^{\omega} \left|P \setminus J_k \right| \ge n - \omega f > 0.
\]
Thus, $\mathcal{K}$ is $\omega$-intersecting and the convexity space $\mathcal{C}$ has Helly number $\omega$, so we have $H = \bigcap \mathcal{K} \neq \emptyset$. Given that all correct processors use the same output function $\phi$, they decide on the same output value $\phi(H) \in H \subseteq \langle X \rangle$. This satisfies both agreement and validity.
\end{proof}

\subsection{Lower bounds for convex consensus}

Clearly, convex consensus on any non-trivial convexity space $\mathcal{C}$ is at least as hard as binary consensus. Therefore, classic lower bounds for synchronous binary consensus immediately imply that solving convex consensus requires $\Omega(f)$ synchronous rounds and a resilience condition of $n > 3f$. The following observation shows that the bound $n > \omega f$ on resilience is necessary for convex consensus already when $\mathcal{C}$ is a convex geometry.

\synclower    
\begin{proof}
    Suppose algorithm $\vec A$ solves convex consensus on $\mathcal{C}$. Since $\mathcal{C}$ is a convex geometry with Helly number $\omega$, it has a free set $A \in \mathcal{C}$ of size $|A|=\omega$. Let $C_1, \ldots, C_k$ be a partition of the $n = kf$ processors into $k$ disjoint sets and define the inputs as $x_i = a_j$ for $i \in C_j$. Let $\Xi_1$ be the execution of~$\vec A$ when $F = \emptyset$. By agreement and validity, all processors decide on some value $a_j \in A$. Now consider an execution $\Xi_2$ of $\vec A$, where $F = C_j$ is the set of Byzantine processors that behave exactly as $C_j$ in the execution $\Xi_1$. For any $i \in P \setminus C_j$ the two executions $\Xi_1$ and $\Xi_2$ are indistinguishable, so the correct processors output $a_j \in A$. However, this violates the validity constraint, as $a_j \notin \langle X \rangle = A \setminus \{ a_j \}$.
\end{proof}

\section{Resilience lower bounds for abstract convexity spaces}\label{sec:lower-bounds}

We establish general lower bounds for asynchronous approximate agreement on abstract convexity spaces using a partitioning argument and so-called blocking instances, which are given by irredundant and free sets of the underlying convexity space. This approach can be seen as a generalisation of the Mendes et al.~\cite{MHVG15} lower bound technique from Euclidean spaces into arbitrary convexity spaces. The Carath\'eodory number of $\mathcal{C}$ equals the maximum cardinality of an irredundant set in $\mathcal{C}$.

\begin{theorem}\label{thm:general-lb}
    Let $\mathcal{C}$ be a convexity space with Carath\'eodory number $c$ and Helly number $\omega$. Then:
    \begin{itemize}
        \item If $n \le (c+1)f$, then there is no asynchronous abstract approximate agreement algorithm on $\mathcal{C}$ that on all inputs satisfies validity and agreement (i.e., the set of outputs is free).
        \item If $\mathcal{C}$ is a convex geometry and $n \le (\omega+1)f$, then there is no asynchronous abstract approximate agreement algorithm on $\mathcal{C}$ satisfying validity that outputs at most $\omega-1$ distinct values.
    \end{itemize}
\end{theorem}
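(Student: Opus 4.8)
The plan is to use the classical indistinguishability/partitioning technique for asynchronous impossibility, with the combinatorial ``blocking instance'' playing the role of the hard input configuration. For the first bullet, suppose $n \le (c+1)f$. Since the Carath\'eodory number of $\mathcal{C}$ is $c$, there is an irredundant set $A = \{a_0,\dots,a_c\}$ of size $c+1$ with $\partial A = \langle A\rangle \setminus \bigcup_{i} \langle A \setminus a_i\rangle \neq \emptyset$; pick a witness point $p \in \partial A$. Partition the $n$ processors into $c+1$ groups $C_0,\dots,C_c$ each of size at most $f$, and assign input $a_i$ to every processor in $C_i$. First I would argue that, because the system is asynchronous and $|C_i| \le f$, there is an admissible execution in which the processors in $C_i$ are ``silent'' (their messages are arbitrarily delayed) for long enough that the remaining processors must decide having effectively seen only the inputs $\{a_j : j \neq i\}$; by validity their outputs lie in $\langle \{a_j : j \neq i\}\rangle = \langle A \setminus a_i\rangle$, which in particular excludes $p$. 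Then the standard scenario-gluing argument produces a single execution (or a chain of pairwise indistinguishable executions) forcing the global output set $Y$ to avoid every $\langle A\setminus a_i\rangle$; but $Y \subseteq \langle X\rangle = \langle A\rangle$, so $Y \subseteq \partial A$, and we must derive a contradiction with the agreement (freeness) requirement. The cleanest route is probably to show $Y$ must in fact contain all of $A$ up to the indistinguishability argument: in the execution where $C_i$ is silent, correct processors in $C_j$ ($j\neq i$) still have $a_j$ as a legal (indeed forced, since $a_j$ is extreme and irredundancy gives $a_j \notin \langle A\setminus a_j\rangle$) output, and stitching these together yields $A \subseteq Y$; since $|A| = c+1 > \omega$... wait --- here $c$ need not relate to $\omega$, so instead I would directly contradict freeness by noting that if $Y \supseteq A$ and $A$ is irredundant of size $\ge 2$ then $A$ (hence $Y$, hence $\langle Y\rangle = \ex Y$) cannot be free, because an irredundant set of size $\ge 2$ has a point $p \in \partial A \subseteq \langle A\rangle$ that is not extreme in $A$. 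That is the contradiction for the first bullet.

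For the second bullet, assume $\mathcal{C}$ is a convex geometry with Helly number $\omega$ and $n \le (\omega+1)f$. Now I would use a \emph{free} set $A = \{a_0,\dots,a_\omega\}$ of size $\omega+1$, which exists since in a convex geometry the Helly number equals the maximum size of a free set. The key structural fact to extract is: for a free set $A$ and any proper subset $A' \subsetneq A$, the hulls $\langle A\setminus a_i\rangle$ are pairwise ``spread out'' enough that no point can be close to all of them --- more precisely, I expect to need that the sets $\{a_0\},\dots,\{a_\omega\}$ (the singletons of the free set) are at pairwise ``distance'' exhibiting that any $\omega$ of the hulls $\langle A\setminus a_i\rangle$ have empty common intersection (an anti-Helly statement for free sets, which is essentially the definition of a free set of size $\omega+1$ in a convex geometry together with the equality of the Helly number and the free-set bound). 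Partition the processors into $\omega+1$ groups of size $\le f$ with inputs $a_i$ as before; by the same silencing/indistinguishability chain, for each $i$ there is an execution where correct processors decide inside $\langle A\setminus a_i\rangle$. Gluing these $\omega+1$ executions along a chain of pairwise-indistinguishable intermediate executions, any purported algorithm outputting at most $\omega-1$ distinct values would have to assign, across these glued executions, a set $Y$ of at most $\omega-1$ values each of which avoids one of the $\omega+1$ hulls $\langle A\setminus a_i\rangle$; a counting/pigeonhole argument then forces some value $y \in Y$ to simultaneously lie in at least... I would set this up so that $\le \omega-1$ output values cannot ``cover'' the $\omega+1$ silencing scenarios consistently, yielding a processor forced to decide a value outside $\langle X\rangle$, contradicting validity.

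The main obstacle I anticipate is making the ``gluing'' rigorous in the asynchronous model: a single execution cannot simultaneously have several different groups permanently silent, so one cannot literally force $Y$ to avoid all the hulls at once. The standard fix (as in Mendes et al.) is a chain of executions $\Xi^{(0)}, \Xi^{(1)}, \dots$ where consecutive ones are indistinguishable to a large set of correct processors, transporting output constraints step by step; getting the bookkeeping right --- which processors are silent, which are correct, and ensuring at each step the silenced group has size $\le f$ and the witnessing correct processors have size $\ge n-f$ so they genuinely cannot tell the executions apart --- is the delicate part. The combinatorial inputs (irredundant set witnessing $\partial A \neq\emptyset$ for bullet one, free set of size $\omega+1$ for bullet two) are handed to us by the stated characterisations of the Carath\'eodory and Helly numbers, so the convexity-theoretic content is light; the real work is the careful asynchronous scenario argument, which closely parallels (and generalises) the Euclidean lower-bound proof of \cite{MHVG15}.
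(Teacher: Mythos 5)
There is a genuine gap, and it starts with an off-by-one in the combinatorial setup. You take an irredundant set of size $c+1$ (resp.\ a free set of size $\omega+1$), but by the very characterisations you invoke, the \emph{maximum} irredundant set has size $c$ and the maximum free set has size $\omega$, so the sets your construction needs generally do not exist. The paper instead uses a hard instance $A$ of size $m=c$ (resp.\ $m=\omega$) and partitions the $n\le(m+1)f$ processors into $m$ input groups $C_1,\dots,C_m$ \emph{plus one extra buffer group} $B$; the buffer group is what makes the resilience bound $(m+1)f$ rather than $mf$, and it plays an essential role (it crashes in the first scenario and is correct-but-delayed in the second). Your partition into $c+1$ input groups with no buffer cannot be repaired by shrinking the set, because then each group would have size exceeding $f$.

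The second gap is in the indistinguishability step itself. You assert that when a correct group $C_i$ is merely delayed, validity forces the remaining processors' outputs into $\langle A\setminus a_i\rangle$. It does not: validity is defined over the inputs of \emph{non-faulty} processors, and a slow correct processor's input still belongs to $X$, so only $Y\subseteq\langle A\rangle$ is guaranteed. The paper's fix is a two-scenario argument: in execution $\Xi$ the buffer $B$ is faulty and crashes, some correct $j$ must output $y_j\neq x_j$ (else $Y=A$, which for an irredundant or free $A$ of size $\ge 2$ violates the agreement/freeness requirement); then a selector $\mu(x_j,y_j)=a_h$ is chosen with $y_j\notin\langle A\setminus a_h\rangle$ (this is Lemma~\ref{lemma:bad-points}), and in a second execution $\Xi'$ the group $C_h$ is Byzantine, mimics its behaviour from $\Xi$, and $B$ is delayed, so the true input set is $A\setminus a_h$ and $j$'s unchanged decision $y_j$ violates validity. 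This pointwise selector $\mu$ is the key idea that lets two scenarios suffice; your proposed chain of glued executions forcing $Y\subseteq\partial A$ (or a pigeonhole over $\omega+1$ hulls) is both unnecessary and, as you yourself note, not carried out --- a single asynchronous execution cannot have several groups simultaneously and permanently silent, and you do not resolve this.
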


Combining the above result with classic results in combinatorial convexity theory gives lower bounds for specific problems. For any graph $G$ with diameter at least two, the Carath\'eodory number is two~\cite{Duchet1988Convex} and clique is a free set. This implies the following result. 

\begin{corollary}
    The monophonic approximate agreement problem on any $G$ with diameter at least two cannot be solved if $n \le 3f$. There is no asynchronous algorithm that outputs a clique of size less than $\omega$ unless $n \le (\omega+1)f$.
\end{corollary}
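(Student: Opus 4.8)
The plan is to specialise \theoremref{thm:general-lb} to the monophonic convexity space of a graph $G$ with diameter at least two, which is exactly the setting already anticipated in the paragraph preceding the statement. The corollary has two parts, matching the two bullets of \theoremref{thm:general-lb}. For the first part, I would invoke the fact — cited in the paper to Duchet~\cite{Duchet1988Convex} — that the Carath\'eodory number of the chordless path convexity of any graph is at most two, and exactly two whenever the graph has diameter at least two (if the diameter is at least two there exist non-adjacent vertices $u,v$, and any chordless $u\!\leadsto\! v$ path has an interior vertex lying in $\langle\{u,v\}\rangle$ but not representable by a single input, so $c=2$). Then $c+1=3$, and the first bullet of \theoremref{thm:general-lb} immediately gives that no asynchronous algorithm can satisfy validity and agreement (output a free set) when $n\le 3f$. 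Since in the monophonic convexity free sets are precisely cliques (stated in the ``Examples of convex geometries'' paragraph), this is exactly the claim that monophonic approximate agreement is unsolvable for $n\le 3f$.

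For the second part, I would first note that the chordless path convexity of $G$ need not be a convex geometry (it is one iff $G$ is chordal), so the second bullet of \theoremref{thm:general-lb} — which requires $\mathcal{C}$ to be a convex geometry — does not apply verbatim to arbitrary $G$. Hence I would instead argue directly by a partitioning argument analogous to the proof of \theoremref{thm:synchronous-lower}: take a clique $A=\{a_1,\dots,a_\omega\}$ of size $\omega=\omega(G)$ (a free set, with $\langle A\rangle = A$ and $\ex A = A$), partition the $n\le(\omega+1)f$ processors into groups, assign the inputs from $A$, and use indistinguishability between an execution with no faults and executions in which one group is Byzantine and silently withholds its value. The key point is that with $n\le(\omega+1)f$ and at most $\omega-1$ distinct outputs, a counting/pigeonhole argument forces the correct processors, in some indistinguishable pair of executions, to decide on a set of outputs that — viewed from the execution where the group contributing one of the $\omega$ inputs is faulty — must lie inside $\langle A\setminus\{a_i\}\rangle = A\setminus\{a_i\}$, yet the outputs actually include $a_i$ or some vertex not on any chordless path among the remaining inputs, violating validity.

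Concretely, I would set it up with $\omega+1$ groups of $f$ processors each (padding $n$ down to $(\omega+1)f$ if necessary, since fewer processors only makes the impossibility stronger); give groups $1,\dots,\omega$ the inputs $a_1,\dots,a_\omega$ and group $\omega+1$ an arbitrary input, say $a_1$. Assuming for contradiction an algorithm outputs at most $\omega-1$ distinct values while satisfying validity, run it on this configuration with $F=\emptyset$; the output set $Y$ is free, hence a clique contained in $\langle A\rangle = A$, of size at most $\omega-1$, so it misses some $a_i$. Now consider the execution where $F$ is the group holding input $a_i$ and that group behaves exactly as in the faultless run; the remaining correct processors cannot distinguish the two executions, so they still output the set $Y\ni a_i$ is false — rather, they output $Y\subseteq A$, and if $a_i\in Y$ this already contradicts validity since $\langle X\rangle = A\setminus\{a_i\}$; if $a_i\notin Y$ we repeat with a different choice, and the pigeonhole bound $|Y|\le\omega-1$ guarantees that across the $\omega$ symmetric executions (faulting each of groups $1,\dots,\omega$ in turn) at least one produces a validity violation.

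The main obstacle I anticipate is the second part: one must be careful that the second bullet of \theoremref{thm:general-lb} genuinely requires the convex-geometry hypothesis, so for non-chordal $G$ the bound ``$n\le(\omega+1)f$'' is not a literal corollary of the theorem and needs the self-contained partitioning argument above — and that argument in turn needs the correct bookkeeping of how many indistinguishable executions and how many output values interact through pigeonhole, exactly as in the Mendes et al.\ lower bound and in the proof of \theoremref{thm:synchronous-lower}. The first part, by contrast, is essentially immediate once the value $c=2$ is recorded.
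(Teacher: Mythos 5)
Your first part is fine and is exactly the paper's route: for diameter at least two the Carath\'eodory number of the monophonic convexity is $2$ (a pair of non-adjacent vertices is irredundant, and $c\le 2$ by Duchet), so the first bullet of \theoremref{thm:general-lb} gives the $n\le 3f$ impossibility. For the second part, your worry about the convex-geometry hypothesis is understandable but misplaced: the paper does not need that hypothesis here. A maximum clique $A$ of size $\omega$ is a free set in the monophonic convexity of \emph{any} graph (this is stated in the examples of convex geometries, with no chordality assumption), free sets of size $m$ yield $m$-blocking instances, and \theoremref{thm:blocking} then directly gives that no $f$-resilient algorithm with $n\le(\omega+1)f$ can output $Y\subsetneq\langle A\rangle=A$, i.e.\ a clique of size less than $\omega$. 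The convex-geometry assumption in the second bullet of \theoremref{thm:general-lb} is only used to equate the Helly number with the maximum size of a free set; for graphs this equality ($\omega=$ clique number $=$ max free set) holds unconditionally.

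The replacement argument you propose instead has a genuine gap. You form $\omega+1$ groups of $f$ processors but only have $\omega$ distinct inputs, so you duplicate $a_1$ on group $\omega+1$. Two things then break. First, faulting the group that holds a duplicated value does not remove that value from $X$, so if the offending output value is $a_1$ (e.g.\ if $Y=\{a_1\}$) no execution in your family yields a validity violation. Second, and independently, ``$a_i\in Y$'' is not by itself a contradiction: you must exhibit a processor that is \emph{correct in the faulty execution} and outputs $a_i$; if the only processors outputting $a_i$ lie in the group you are about to fault, their outputs no longer count. The proof of \theoremref{thm:blocking} handles both issues at once: the extra group $B$ is \emph{crashed} in the first execution and \emph{indefinitely delayed} (but correct) in the second, so its input is irrelevant and only $\omega$ groups with $\omega$ distinct inputs determine the outcome; and the contradiction is derived from a processor $j\in C_k$ with $y_j\neq x_j=a_k$, which is guaranteed to exist because otherwise $Y=A$ would not be a proper subset of $\langle A\rangle$, and which remains correct when the group holding $y_j$ is faulted. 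This use of asynchrony (crash vs.\ delay of the $(\omega+1)$-st group) is precisely what lifts the synchronous bound $n\le\omega f$ of \theoremref{thm:synchronous-lower} to $n\le(\omega+1)f$; a purely synchronous-style partition with a duplicated input cannot achieve it.
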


The case of Byzantine semilattice agreement is perhaps more interesting, as the breadth of a semilattice coincides with its Carath\'eodory number~\cite{jamison1982perspective}. For any $b>1$, there are semilattices with height and breadth equal to $b$: take the subsemilattice of a subset lattice over $[b]$ without $\emptyset$. 

\begin{corollary}
    Suppose $\mathbb{L}$ is a semilattice with breadth $b$. If $n \le (b+1)f$, then there exists no asynchronous algorithm that solves Byzantine semilattice agreement on $\mathbb{L}$. 
\end{corollary}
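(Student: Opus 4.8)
The plan is to mirror the proof of \theoremref{thm:general-lb}, applied to the algebraic convexity space $\mathcal{C}$ of $\mathbb{L}$, and to check that its partitioning/indistinguishability argument survives the replacement of the abstract validity condition $Y\subseteq\langle X\rangle$ by the (a priori weaker) validity of Byzantine semilattice agreement. Recall that $\mathcal{C}$ is a convex geometry whose free sets are exactly the chains of $\mathbb{L}$~\cite{poncet2014semilattice} and whose Carath\'eodory number equals the breadth $b$~\cite{jamison1982perspective}. On the agreement side the two tasks already coincide, since ``$Y$ is a chain'' is literally ``$\langle Y\rangle=\ex Y$''; and since $n\le(b+1)f=(c+1)f$ with $c$ the Carath\'eodory number, the first item of \theoremref{thm:general-lb} would close the argument at once were the validity notions identical as well.

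First I would fix the blocking instance used in \theoremref{thm:general-lb}: a maximum irredundant set $A=\{a_1,\dots,a_b\}$, which in the algebraic convexity is precisely a \emph{join-minimal} set, so that $\partial A=\{\bigoplus A\}$ and, for every $j$, $\bigoplus(A\setminus a_j)<\bigoplus A$ strictly. This strict inequality is exactly what keeps the argument alive: in the execution of \theoremref{thm:general-lb} in which the group carrying the input $a_j$ behaves Byzantine, the join of the correct processors' inputs is at most $\bigoplus(A\setminus a_j)$, which does not dominate $\bigoplus A$; hence the blocking output $\bigoplus A\in\partial A$ that the argument forces some correct processor to produce in an execution indistinguishable from that one fails even the requirement ``$\exists\,x\in X:\ x\le\bigoplus A\le\bigoplus X$'', not merely $Y\subseteq\langle X\rangle$. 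So the same family of pairwise-indistinguishable executions yields a contradiction, and Byzantine semilattice agreement is unsolvable whenever $n\le(b+1)f$. The degenerate case $b=1$ is immediate: $\mathbb{L}$ is then a chain and the task subsumes binary consensus, which is unsolvable asynchronously for every $n$.

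The main obstacle I anticipate is bookkeeping rather than conceptual: one must verify, while following the chain of indistinguishable executions of \theoremref{thm:general-lb}, that the output element whose presence in some correct processor's decision set contradicts validity can be taken to be $\bigoplus A$ itself (equivalently, the unique element of $\partial A$), rather than an arbitrary point of $\langle A\rangle\setminus\langle A\setminus a_j\rangle$ that could happen to lie below $\bigoplus(A\setminus a_j)$ and thus still satisfy the weak validity condition. This is where the freeness (chain) constraint on $Y$ and the join-minimality of $A$ must be combined, so that a black-box appeal to \theoremref{thm:general-lb} does not suffice and the semilattice structure enters essentially.
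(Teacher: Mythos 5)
Your core reduction is exactly the paper's: the corollary is obtained by instantiating the first item of \theoremref{thm:general-lb} on the algebraic convexity space of $\mathbb{L}$, using the fact that the breadth equals the Carath\'eodory number, i.e.\ the maximum size of an irredundant set. Your description of such a set $A$ --- namely $\partial A = \{\bigoplus A\}$ and $\bigoplus(A\setminus a_j) < \bigoplus A$ strictly for every $j$ --- is correct, and your observation that the literal validity condition of Byzantine semilattice agreement ($\exists x\in X:\ x\le y\le\bigoplus X$) is weaker than the abstract validity $Y\subseteq\langle X\rangle$ is a legitimate concern that the paper sidesteps by simply declaring the two problems equivalent.

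The step you lean on to bridge that mismatch, however, is false as stated: the partitioning argument of \theoremref{thm:blocking} does not force any correct processor to output $\bigoplus A$. What it forces (via agreement, since $A$ is an antichain and $Y$ must be a chain) is only that some correct processor $j$ with input $a_k$ outputs some $y_j\neq a_k$; under the weak validity this $y_j$ may be any element of $V$ with $a_l\le y_j\le\bigoplus A$ for some $l$, and in particular need not lie in $\langle A\rangle$ at all. Your ``main obstacle'' paragraph worries about points of $\langle A\rangle$ lying below $\bigoplus(A\setminus a_j)$, but those are in fact harmless: if $y_j\in\langle A\rangle\setminus\langle A\setminus a_h\rangle$ then $a_h\le y_j$, and irredundancy gives $a_h\not\le\bigoplus(A\setminus a_h)$, hence $y_j\not\le\bigoplus(A\setminus a_h)$ and weak validity is violated as well. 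The genuinely dangerous outputs are those \emph{outside} $\langle A\rangle$ whose only lower bound in $A$ is $a_k$ itself and which lie below $\bigoplus(A\setminus a_h)$ for every $h\neq k$; such outputs satisfy weak validity in every execution your construction produces, so no single choice of Byzantine group catches them. Ruling them out needs an additional case analysis that plays the chain condition on $Y$ across two different deviating groups against the irredundancy of $A$, and this is missing from your proposal. (A minor separate error: for $b=1$ the semilattice is a chain, so every output set is automatically a chain and the task is trivially solved by outputting one's own input; it does not subsume binary consensus.)
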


\subsection{Partitioning and blocking instances}

\begin{definition}\label{definition:blocking}
    Let $\mathcal{C}$ be an abstract convexity space on $V$, $A \subseteq V$, and $\mu \colon A \times \langle A \rangle \to A$. We say that $(A,\mu)$ is an $m$-\emph{blocking instance} for $\mathcal{C}$ if the following conditions are satisfied:
    \begin{enumerate}[noitemsep]
        \item $|A| = m$,
        \item $A = \ex A$, 
        \item $\mu(x,y) \neq x$ for all $x \neq y$
        \item $y \notin \langle A \setminus \mu(x,y) \rangle$ for all $x \neq y$.
    \end{enumerate}
\end{definition}

\begin{theorem}\label{thm:blocking}
    Suppose there exists an $m$-blocking instance $(A,\mu)$ for $\mathcal{C}$. If $n \le (m+1)f$, then there does not exist an $f$-resilient algorithm which outputs $Y \subsetneq \langle X \rangle$ for all input sets $X \in {V \choose m }$.
\end{theorem}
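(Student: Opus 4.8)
The plan is to run the standard indistinguishability / partitioning argument: assume for contradiction that an $f$-resilient algorithm $\vec A$ exists that always outputs $Y \subsetneq \langle X\rangle$, partition the processors into $m+1$ blocks of size at most $f$, and construct a family of executions — one for each "blocked" block — in which the correct processors are forced to decide the full set $A$, contradicting the strict inclusion. First I would write $n \le (m+1)f$, pick the $m$-blocking instance $(A,\mu)$ with $A=\{a_1,\dots,a_m\}$, and partition $P = C_0 \cup C_1 \cup \dots \cup C_m$ into $m+1$ groups each of size at most $f$ (padding as needed, which is harmless). The base execution $\Xi$ assigns input $x_i = a_k$ to every processor $i \in C_k$ for $k = 1,\dots,m$, and gives the "extra" block $C_0$ some input in $A$ as well; here $X = A = \ex A$ by condition (2), so $\langle X\rangle = \langle A\rangle$. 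Since messages can be delayed arbitrarily in the asynchronous model, I would schedule $\Xi$ so that it is a legitimate execution and, by termination, every correct processor eventually decides; let $Y_\Xi \subsetneq \langle A\rangle$ be its output set, and fix some $y \in \langle A\rangle \setminus Y_\Xi$ — more usefully, I want to exhibit for each $a_j \in A$ an execution where $a_j$ must be output while simultaneously validity forbids it.

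The heart of the argument is the family of hybrid executions $\Xi_j$ for $j = 1,\dots,m$, in which the block $C_j$ is Byzantine and the remaining processors are correct; the correct processors' inputs are then $X_j = A \setminus \{a_j\}$ (or a subset of it, still with hull $\langle A \setminus \{a_j\}\rangle = \langle A \setminus \{a_j\}\rangle$, using condition (2) on $A$). The Byzantine processors in $C_j$ are made to behave towards a designated "victim" sub-population exactly as the correct processors of $C_j$ behaved in $\Xi$, so that relative to that victim population $\Xi_j$ is indistinguishable from $\Xi$; hence those correct processors decide the same values as in $\Xi$. The key point is that this decided value, say $y^\ast$, lies in $Y_\Xi \subseteq \langle A\rangle$ but, by validity applied to execution $\Xi_j$, must also lie in $\langle X_j\rangle = \langle A\setminus\{a_j\}\rangle$. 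The function $\mu$ and condition (4), $y \notin \langle A\setminus\mu(x,y)\rangle$, are exactly what is needed to derive the contradiction: for the right choice of $j$ depending on the pair $(x, y^\ast)$ where $x = a_k$ is the input of the victim block and $y^\ast$ its decision, condition (4) says $y^\ast \notin \langle A \setminus \mu(a_k, y^\ast)\rangle$, and condition (3), $\mu(x,y)\neq x$, ensures the blocked index $\mu(a_k,y^\ast)$ is genuinely different from the victim's own index so the indistinguishability goes through. Choosing $j$ to be the index with $a_j = \mu(a_k, y^\ast)$ then produces $y^\ast \in \langle A\setminus\{a_j\}\rangle$ and $y^\ast \notin \langle A\setminus\{a_j\}\rangle$ simultaneously — the desired contradiction.

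To make the indistinguishability precise I would argue as in the Mendes et al.\ lower bound: in each $\Xi_j$ the correct processors receive at least $n - f$ messages per asynchronous round; the scheduler delays the messages from $C_j$ to all but the victim block, so the victim block's local view is a prefix-by-prefix copy of its view in $\Xi$, while the Byzantine processors in $C_j$ can always supply exactly the messages they sent in $\Xi$ because those were generated by an honest execution and are therefore internally consistent. Running $\vec A$ to completion (it terminates by assumption on every execution, in particular $\Xi_j$) forces the victim block to output $y^\ast$. The main obstacle I anticipate is bookkeeping: making sure the partition into $m+1$ blocks of size $\le f$ really does let every one of the $\Xi_j$ have at most $f$ faulty processors and still deliver $n-f$ messages per round to the correct ones (this is where $n \le (m+1)f$ is used — with $m+1$ blocks one whole block can be hidden), and carefully defining "the victim block" so that the pair $(x,y^\ast)$ fed into $\mu$ is well-defined and condition (3) applies. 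Once the scheduling is set up, invoking properties (2)–(4) of the blocking instance is essentially a one-line contradiction, and property (1), $|A|=m$, is only used to match the number of blocks.
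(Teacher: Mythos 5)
Your overall strategy coincides with the paper's: partition $P$ into $m+1$ blocks of size $f$, run a base execution with $X=A$, find a correct processor whose decision $y^\ast$ differs from its input $a_k$, and then replay that execution with the block holding $\mu(a_k,y^\ast)$ made Byzantine so that condition (4) contradicts validity. However, two steps of your construction would fail as written. First, you let the extra block $C_0$ participate in the base execution $\Xi$ with ``some input in $A$''. Since the index $j$ with $a_j=\mu(a_k,y^\ast)$ is only determined \emph{after} $\Xi$ has been run, $C_0$'s input may well turn out to equal $a_j$, in which case the correct inputs of the hybrid execution still have hull $\langle A\rangle$ and no validity violation is obtained. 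The paper avoids this by making $B$ (your $C_0$) the faulty set in $\Xi$ and having it crash before sending anything: the schedule of $\Xi$ is then admissible for \emph{any} input assignment to $B$, and in the hybrid one retroactively gives $B$ the input $a_k$, which condition (3) guarantees differs from $a_h=\mu(a_k,y^\ast)$. Second, your scheduling of $\Xi_j$ --- deliver $C_j$'s replayed messages only to a ``victim sub-population'' and delay them to everyone else --- breaks the indistinguishability you rely on: the non-victim correct processors then see a view different from their view in $\Xi$ (where $C_j$ was correct and talked to them), so they may send different messages to the victims, who can then distinguish the two runs. The working construction has the Byzantine block replay $\Xi$ to \emph{all} processors and instead keeps $C_0$ silent in both executions (crashed in $\Xi$, correct but indefinitely delayed in the hybrid), so that every correct processor outside $C_0\cup C_h$ has an identical view in both runs until it decides.

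A smaller point: conditions (3) and (4) of Definition~\ref{definition:blocking} apply only to pairs with $x\neq y$, so before invoking $\mu$ you must actually exhibit a correct processor whose decision differs from its own input. This is precisely where the hypothesis $Y\subsetneq\langle X\rangle$ together with condition (2), $A=\ex A$, is used. Your proposal records $Y_\Xi\subsetneq\langle A\rangle$ and fixes a point of $\langle A\rangle\setminus Y_\Xi$, which is not the object needed, and then pivots without closing this step.
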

\begin{proof}
    For the sake of contradiction, suppose there exists an algorithm $\mathbf{A}$ that outputs $Y \subsetneq \langle X \rangle$ for any $X \in {V \choose m }$. We show that the algorithm fails when $X = A$. Without loss of generality, assume $n = (m+1)f$. Let $\{ C_1, \ldots, C_m, B\} \subseteq {P \choose f}$ be partition of the processor identifiers into disjoint sets of size $f$ and $A = \{ a_1, \ldots, a_m \}$. We define $\gamma \colon P \times [m] \to [m]$ as
\[
 \gamma(i,z) = \begin{cases}
     a_k & \textrm{ if } i \in C_k \\
     z   & \textrm{ otherwise}.
 \end{cases}
\]

    First, consider the scenario where $F=B$ and the inputs are given by $x_i = \gamma(i,z)$ so that $X=A$. Let $\xi$ be an admissible schedule for the execution $\Xi$, where all the faulty processors immediately crash before sending any messages. Since by assumption the output satisfies $Y \subsetneq \langle X \rangle = \langle A \rangle$ and $A = \ex A$ holds by condition 2 of \definitionref{definition:blocking}, there must be some non-faulty processor $j \in C_k$ that outputs $x_j \neq y_j$. Note that since no processor in $B$ sends a message, $\xi$ is an admissible schedule for any scenario with inputs $x_i = \gamma(i,z)$ independent of the choice of $z \in A$. 

    Fix $a_h = \mu(x_j, y_j)$. Note that as $x_j \in C_k$, we have $x_j = a_k$ and $h \neq k$. Consider the following execution $\Xi'$ of $\mathbf{A}$, where
    \begin{itemize}
     \item the set of faulty processors is $F = C_h$,
     \item the inputs are given by $x_i = \gamma(i, a_k)$, and thus, $X = A \setminus a_h$,
     \item the set $B$ of correct processors is indefinitely delayed,
     \item the faulty processors behave exactly as $C_h$ in the execution $\Xi$, and
     \item the schedule $\xi'$ has $\xi$ as its prefix.
    \end{itemize}
    Clearly, $\xi$ is an admissible schedule prefix for $\Xi'$ and the two executions $\Xi$ and $\Xi'$ are it is indistinguishable for any $j \in P \setminus F$ at least until $j$ decides on its output value $y_j = a_h$. Let $\xi'$ be an extension of $\xi$, where all the Byzantine processors in $F=C_h$ crash or change their input values arbitrarily immediately after $j$ has decided on its output. The fact that $j \in P \setminus F$ decides on value $a_h \notin \langle X \rangle = \langle A \setminus a_h \rangle$, contradicts the assumption that $\mathbf{A}$ maintains validity.
\end{proof}
\subsection{Blocking instances from irredundant and free sets}

\begin{remark}
    If $A$ is irredundant, then $A = \ex A$ holds.
\end{remark}
\begin{proof}
    Suppose $A$ is irredundant and there is some $a \in A \setminus \ex A$. Since $a \notin \ex A$, we have $a \in \langle A \setminus a \rangle = \langle A \rangle$ and $\partial A = \emptyset$. Thus, $A$ is not irredundant.
\end{proof}

\begin{lemma}
    Let $\mathcal{C}$ be a convexity space on $V$ and $A \subseteq V$ such that $|A| > 1$. Then for any $a \in A$ and $y \in \langle A \rangle \setminus A$ there exists some $b(a,y) \in A \setminus a$ such that $y \notin \langle A \setminus b(a,y) \rangle$.\label{lemma:bad-points}
\end{lemma}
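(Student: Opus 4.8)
The plan is to proceed by contradiction: suppose $y\in\langle A\setminus b\rangle$ for \emph{every} $b\in A\setminus\{a\}$, and derive a contradiction with $y\in\langle A\rangle\setminus A$. First I would record the structural consequence of this assumption. Put $W=\bigcap_{b\in A\setminus\{a\}}\langle A\setminus b\rangle$; since $\mathcal C$ is intersection-closed, $W\in\mathcal C$, and by assumption $y\in W$, while $a\in W$ because $a\in A\setminus b$ for every $b\neq a$. On the other hand, for $c\in A\setminus\{a\}$ the choice $b=c$ shows that $c\in W$ would force $c\in\langle A\setminus c\rangle$, i.e.\ that $c$ is not an extreme point of $A$; so no such $c$ lies in $W$. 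Thus $W$ is a convex set with $a\in W$, $y\in W$, $W\subseteq\langle A\rangle$, and $W\cap A=\{a\}$, and the aim is to see that such a $W$ cannot exist when $y\notin A$.

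To squeeze out the contradiction I would pass to a minimal subset $S\subseteq A$ with $y\in\langle S\rangle$ (which exists since $y\in\langle A\rangle$); minimality gives $y\notin\langle S\setminus s\rangle$ for every $s\in S$. Because $y\notin A$, the set $S$ is not a singleton (otherwise $y$ would lie in the hull of a single point of $A$), so it contains some $b\neq a$, and then $y\notin\langle S\setminus b\rangle$. If one can upgrade this to $y\notin\langle A\setminus b\rangle$, the proof is finished, since that contradicts the standing assumption.

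This upgrade is the step I expect to be the main obstacle, because $\langle S\setminus b\rangle$ is only \emph{contained in} $\langle A\setminus b\rangle$, so minimality of $S$ alone does not prevent $y\in\langle A\setminus b\rangle$. To close the gap I would bring in more structure of $\mathcal C$: when $\mathcal C$ is a convex geometry there is a convex elimination order $\preceq$ for which $\min K=\min\langle K\rangle$ (\remarkref{remark:min-is-ex}) and every up-set $\{v:u\preceq v\}$ is convex, which lets one choose $b$ as the $\preceq$-least element of $A$ whose removal drops $y$ out of the hull and then argue, via monotonicity of the closure operator (in the spirit of \lemmaref{lemma:without-min-progress}), that this $\preceq$-least witness must differ from the prescribed $a$. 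Making this selection precise — in particular, guaranteeing that a witness exists and is distinct from $a$ in every configuration, and handling the residual case $y\in\langle A\setminus a\rangle$ separately — is where the bulk of the work lies.
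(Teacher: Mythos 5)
Your proposal does not close, and you correctly identify where: the passage from $y \notin \langle S \setminus b\rangle$ (for a minimal $S \subseteq A$ with $y \in \langle S\rangle$) to $y \notin \langle A \setminus b\rangle$ is the missing step, and monotonicity of the hull operator points in the wrong direction to supply it. Two of your intermediate claims are also unjustified under the lemma's stated hypotheses: the assertion that $W \cap A = \{a\}$ requires every $c \in A \setminus \{a\}$ to be an extreme point of $A$, which is not assumed; and the assertion that $S$ cannot be a singleton fails because in an abstract convexity space the hull of a single point need not be a singleton, so one can have $y \in \langle \{a\}\rangle \setminus A$, in which case the minimal $S$ is $\{a\}$ and contains no $b \neq a$ at all. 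Your fallback of invoking a convex elimination order imports the hypothesis that $\mathcal{C}$ is a convex geometry, which the lemma does not grant.

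The gap is, however, not one you could have closed: as stated the lemma is false, and the paper's own proof stumbles at essentially the same juncture. The paper derives $y \in B = \bigcup_{b\in A}\langle A \setminus b\rangle$, concludes $y \notin \partial A$, and declares this to contradict $y \in \langle A\rangle\setminus A$; but it only establishes $\partial A \subseteq \langle A\rangle\setminus A$, and that inclusion runs the wrong way to produce a contradiction (one would need $\langle A\rangle \setminus A \subseteq \partial A$). A concrete counterexample: take $V = \{1,2,3,y,p\}$ and
\[
\mathcal{C} = \bigl\{\emptyset,\ \{2\},\ \{3\},\ \{1,y\},\ \{2,3\},\ \{1,2,y\},\ \{1,3,y\},\ V\bigr\},
\]
which is intersection-closed. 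With $A = \{1,2,3\}$ one gets $\langle A\rangle = V$, $\langle A\setminus 1\rangle = \{2,3\}$, $\langle A \setminus 2\rangle = \{1,3,y\}$, $\langle A\setminus 3\rangle = \{1,2,y\}$, so $\partial A = \{p\} \neq \emptyset$ and $A$ is even irredundant with $A = \ex A$; yet for $a = 1$ and the point $y \in \langle A\rangle\setminus A$ every $b \in \{2,3\}$ satisfies $y \in \langle A \setminus b\rangle$ (note also $y \in \langle\{1\}\rangle$, illustrating the singleton issue above). So the statement needs a stronger hypothesis, or the downstream construction of the blocking instance needs to be adjusted, before any proof attempt --- yours or the paper's --- can succeed.
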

\begin{proof}
    Note that the claim is vacuous if $A$ is redundant. Consider the sets 
    \[
    B = \bigcup_{b \in A } \langle A \setminus b \rangle \quad \textrm{ and } \quad \partial A = \langle A \rangle \setminus B.
    \]
    Since $|A| > 1$, for any $c \in A$ there is some $c' \in A \setminus c$ such that $c \in A \setminus c' \subseteq \langle A \setminus c' \rangle$. Thus, 
    \[
    A \subseteq \bigcup_{b \in A \setminus a} \langle A \setminus b \rangle \subseteq \bigcup_{b \in A } \langle A \setminus b \rangle = B.
    \]
    Thus, we have that the inclusions $A \subseteq B$ and $\partial A \subseteq \langle A \rangle \setminus A$ hold.

    For the sake of contradiction, fix some $a \in A$ and suppose some $y \in \langle A \rangle \setminus A$ that violates the claim of the lemma, that is, $y \in \langle A \setminus b \rangle$ holds for all $b \in A \setminus a$. This implies that 
    \[
    y \in \bigcup_{b \in A \setminus a } \langle A \setminus b \rangle \subseteq B.
    \]
    However, $y \in B$ implies that 
    \[
    y \notin \partial A = \langle A \rangle \setminus B \subseteq \langle A \rangle \setminus A,
    \] 
    which contradicts the assumption that $y \in \langle A \rangle \setminus A$.
\end{proof}

\begin{lemma}
    If $A$ is irredundant and $|A|=m > 1$, then there exists an $m$-blocking instance $(A,\mu)$.
\end{lemma}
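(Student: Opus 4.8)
The plan is to verify the four conditions of \definitionref{definition:blocking} for the given set $A$, producing the map $\mu \colon A \times \langle A \rangle \to A$ by a case analysis on its second argument. Conditions~1 and~2 come for free: $|A| = m$ holds by hypothesis, and $A = \ex A$ follows immediately from the remark just proved, since $A$ is irredundant. So the only real work is to define $\mu$ and check conditions~3 and~4, i.e.\ $\mu(x,y) \neq x$ and $y \notin \langle A \setminus \mu(x,y)\rangle$ whenever $x \neq y$.

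I would define $\mu$ as follows. When $x = y$ the conditions impose nothing, so set $\mu(x,x)$ to be any fixed element of $A$ (possible since $A \neq \emptyset$). When $x \neq y$ and $y \in A$, set $\mu(x,y) = y$: then $\mu(x,y) = y \neq x$ gives condition~3, and $y \notin \langle A \setminus y\rangle$ holds because $y \in A = \ex A$, giving condition~4. When $x \neq y$ and $y \in \langle A \rangle \setminus A$, I would invoke \lemmaref{lemma:bad-points} (applicable because $|A| = m > 1$): it yields some $b(x,y) \in A \setminus x$ with $y \notin \langle A \setminus b(x,y)\rangle$, and I set $\mu(x,y) = b(x,y)$. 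Condition~3 then holds since $b(x,y) \neq x$, and condition~4 is exactly the conclusion of the lemma.

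These three cases exhaust $A \times \langle A \rangle$, since every $y \in \langle A \rangle$ lies either in $A$ or in $\langle A \rangle \setminus A$, so $\mu$ is well defined and $(A,\mu)$ satisfies all four conditions, hence is an $m$-blocking instance. The argument is essentially bookkeeping that stitches together the earlier remark and \lemmaref{lemma:bad-points}; there is no substantial obstacle. The only point needing a little care is the case split on whether $y$ already belongs to $A$: \lemmaref{lemma:bad-points} only addresses points of $\langle A \rangle \setminus A$, and it is precisely the property $A = \ex A$ (guaranteed by irredundancy) that handles the remaining case $y \in A$.
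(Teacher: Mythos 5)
Your proof is correct and matches the paper's own argument essentially verbatim: both define $\mu(x,y)=y$ when $y\in A$ (using $A=\ex A$ from irredundancy) and $\mu(x,y)=b(x,y)$ from \lemmaref{lemma:bad-points} when $y\in\langle A\rangle\setminus A$, then verify conditions 3 and 4 case by case. The only difference is your explicit handling of the vacuous $x=y$ case, which the paper leaves implicit.
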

\begin{proof}
    If $A$ is irredundant, we have $\partial A \neq \emptyset$ and $A = \ex A$. Moreover, if $|A|=m$, then $A$ satisfies the first two conditions of \definitionref{definition:blocking}. To obtain $\mu$, we define for all $x \in A$ and $y \in \langle A \rangle$
    \[
    \mu(x, y) = \begin{cases}
        y & \textrm{if } y  \in A, \\
        b(x,y) & \textrm{otherwise},
    \end{cases}
    \]
    where $b(x,y)$ is the value given by \lemmaref{lemma:bad-points}. It remains to check that $\mu$ satisfies the remaining two conditions given in \definitionref{definition:blocking}. Suppose $x \neq y$. If $y \in A$, then $\mu(x,y) = y \neq x$ and since $y \in A = \ex A$, we also have $y \notin \langle A \setminus y \rangle$. Otherwise, $y \in \langle A \rangle \setminus A$ and \lemmaref{lemma:bad-points} yields that the value $b(x,y) \in A \setminus x$ satisfies both conditions for $\mu$.
\end{proof}

\begin{lemma}
    If $A$ is free and $|A|=m>1$, then there exists an $m$-blocking instance $(A,\mu)$.
\end{lemma}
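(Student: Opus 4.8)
The plan is to leverage the fact that a free set is by definition convex, so that $\langle A\rangle = A$; this collapses the map $\mu \colon A \times \langle A\rangle \to A$ we must construct into a map $A \times A \to A$, and in particular there are no points of $\langle A\rangle \setminus A$ to worry about. This is exactly what makes the free case simpler than the irredundant case treated just above, where one had to invoke \lemmaref{lemma:bad-points} to separate points of $\langle A\rangle \setminus A$ from suitable facets of $A$.

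Concretely, I would define $\mu(x,y) = y$ for all $x \in A$ and $y \in \langle A\rangle = A$ (the value on the diagonal $x = y$ is immaterial, e.g.\ set $\mu(x,x) = x$, since conditions~3 and~4 of \definitionref{definition:blocking} only concern pairs with $x \neq y$). Then the four conditions are checked in order. Conditions~1 and~2 hold by hypothesis: $|A| = m$ is given, and $A = \ex A$ is the definition of a free set. For condition~3, if $x \neq y$ then $\mu(x,y) = y \neq x$. For condition~4, we must verify $y \notin \langle A \setminus \mu(x,y)\rangle = \langle A \setminus y\rangle$ for $x \neq y$; but $y \in A = \ex A$ says precisely that $y$ is an extreme point of $A$, i.e.\ $y \notin \langle A \setminus y\rangle$.

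I do not anticipate any real obstacle: the single point requiring attention is the observation that freeness entails convexity (hence $\langle A\rangle = A$), and once that is in place the construction of $\mu$ and the verification of \definitionref{definition:blocking} are immediate from the definition of an extreme point. Together with the irredundant construction above, this feeds — via \theoremref{thm:blocking} — both clauses of \theoremref{thm:general-lb}, with a free set of size $\omega$ in a convex geometry yielding the $n \le (\omega+1)f$ impossibility bound.
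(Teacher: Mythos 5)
Your proposal is correct and matches the paper's proof exactly: both observe that freeness gives $A = \langle A\rangle = \ex A$ and then take $\mu(x,y) = y$, the only difference being that you spell out the verification of the four conditions that the paper dismisses as clear. No gaps.
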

\begin{proof}
    Since $A$ is free it satisfies $A = \langle A \rangle = \ex A$. Therefore, it suffices to define $\mu : A \times A \to A$ as $\mu(x,y) = y$, which clearly satisfies the conditions given in \definitionref{definition:blocking}.
\end{proof}

\theoremref{thm:general-lb} now follows by recalling that the (1) Carath\'eodory number $c$ of an abstract convexity space $\mathcal{C}$ equals the maximum cardinality of an irredundant set, and that (2) for convex geometries, the Helly number equals the maximum cardinality of a free set~\cite{Edelman1985}.

\section{Conclusions}
\label{sec:conclusions}

Many structured agreement tasks correspond to exact or approximate agreement problems on (possibly discrete) convexity spaces. Using the theory of abstract convexity, we have obtained Byzantine-tolerant algorithms for a large class of agreement problems on discrete combinatorial structures. 
In the synchronous model, exact convex consensus for any convexity space can be solved in an optimally resilient manner with asymptotically optimal round complexity. However, in the asynchronous setting, several interesting open problems remain. 

\begin{enumerate}
    \item It seems difficult to come up with a general rule for the output map $\phi : \mathcal{C} \to V$ in a way that guarantees that the convex hull of active values shrinks. Nevertheless, we have seen that on chordal graphs and cycle-free semilattices we can solve approximate agreement efficiently. In both cases, the underlying convexity space is a convex geometry.  Given that the literature is abound with convex geometries associated with combinatorial structures~\cite{Jamison1984Helly,Farber1986Convexity,Farber1987Local,Duchet1988Convex,Pelayo2013,nielsen2009steiner,dourado2013caratheodory,duchet1987convexity,Edelman1985,poncet2014semilattice}, it is natural to ask whether the abstract approximate agreement problem can be solved for other convex geometries as well. 

\item It is unclear whether the abstract approximate agreement problem can be solved on general convexity spaces. For example, the asynchronous algorithms for approximate agreement on graphs presented here fail for non-chordal graphs: already the simplest example of a non-chordal graph, the four cycle, is difficult to handle. Indeed, the monophonic convexity of a four cycle is \emph{not} a convex geometry: a convex set may not necessarily have any extreme points, and thus, greedily excluding extreme points does not seem to work. Are there resilient asynchronous algorithms that solve the problem for non-chordal graphs?

\item We obtained resilience lower bounds in terms of the Carath\'eodory and the Helly numbers. However, our positive results for the asynchronous model hold in cases where the Carath\'eodory number is at most two. Interestingly, in the continuous setting of multidimensional approximate agreement~\cite{MHVG15}, tight resilience bounds exist, as the Carath\'eodory and Helly numbers coincide in the usual Euclidean convexity space~on~$\mathbb{R}^m$. Is there a discrete convexity space with a higher Carath\'eodory number in which approximate agreement can be solved?
\end{enumerate}

\section*{Acknowledgements}

We thank the anonymous reviewers for their helpful comments and Janne H. Korhonen for many discussions regarding this work. We also wish to thank the participants of the Helsinki Workshop on Theory of Distributed Computing 2018 and the Metastability workshop in Mainz 2018 for discussions that lead to the problem of approximate agreement on graphs. This work was supported by the CNRS project PEPS DEMO and the Universit\'e Paris-Saclay project DEPEC MODE (T.N.). This project has received funding from the European Union's Horizon 2020 research and innovation programme under the Marie Sk\l{}odowska-Curie grant agreement No 754411 (J.R.).

\bibliography{convexity}

\end{document}